\theoremstyle{remark}
\newtheorem{lemma}{Lemma}
\newcommand{\vect}[1]{\mathbf{#1}}
\def\Real{\mathbb{R}}
\def\Complex{\mathbb{C}}
\def\Integer{\mathbb{Z}}
\def\diag{\mathrm{diag}}
\def\jinc{\mathrm{jinc}}
\def\dof{\mathrm{dof}}
\def\sinc{\mathrm{sinc}}
\def\Ttran{\mbox{\tiny $\mathrm{T}$}}
\def\CN{\mathcal{N}_{\mathbb{C}}} %Complex Gaussian
\def\imagunit{\mathsf{j}} % Imaginary number
\def\Ex{\mathbb{E}}
\begin{document}

\title{Nyquist Sampling and Degrees of Freedom\\ of Electromagnetic Fields\vspace{-0.0cm}} % for Communications

\author{
\IEEEauthorblockN{Andrea Pizzo, \emph{Member, IEEE}, Andrea de Jesus Torres, \emph{Student Member, IEEE}, \\Luca Sanguinetti, \emph{Senior Member, IEEE}, Thomas L. Marzetta, \emph{Life Fellow, IEEE}\vspace{-.5cm}
\thanks{Part of this work was presented at the IEEE 21st International Workshop on Signal Processing Advances in Wireless Communications (SPAWC), Georgia, US, 2020 \cite{PizzoSPAWC20}.}
%\newline
\thanks{A.~Pizzo was with University of Pisa. He is now with Universitat Pompeu Fabra, 08018 Barcelona, Spain (email: andrea.pizzo@upf.edu).}
\thanks{A.~d.~J.~Torres, and L.~Sanguinetti are with University of Pisa, 56122 Pisa, Italy (e-mail: andrea.dejesustorres@phd.unipi.it; luca.sanguinetti@unipi.it).}
\thanks{T.~Marzetta is with New York University, 11201 Brooklyn, New York (e-mail: tom.marzetta@nyu.edu).}
}}

% make the title area
\maketitle

\begin{abstract}
A signal space approach is presented to study the Nyquist sampling, number of degrees of freedom and reconstruction of an electromagnetic field under arbitrary scattering %{\color{blue}scattering} 
conditions. 
Conventional signal processing tools, such as the multidimensional sampling theorem and Fourier theory, are used to provide a linear system theoretic interpretation of electromagnetic wave propagation, thereby revealing the spatially bandlimited nature of electromagnetic fields. Their spatial bandwidth is dictated by the selectivity of the underlying scattering that allows establishing the Nyquist spatial sampling with a reduction of the number of fields’ samples needed to be processed.
%Scalar electromagnetic fields are considered for simplicity, which physically correspond to acoustic propagation in general or electromagnetic propagation under certain conditions. The developed approach is extended to study ensembles of a stationary random electromagnetic field that is representative of propagation in different environments.

\vspace{-0.1cm}
\end{abstract}

\smallskip
\begin{IEEEkeywords}
Multidimensional sampling theorem, Nyquist sampling, degrees of freedom, field reconstruction, sensing, Helmholtz equation.
\end{IEEEkeywords}

\IEEEpeerreviewmaketitle

\vspace{-.5cm}

\section{Introduction}
%\AP{long story short: the DoF tells us how many samples per unit of space we should use in the cardinal series. The total number of samples is dictated by the observation area.}

The Kotelnikov-Shannon-Whittaker sampling theorem \cite{Shannon_Noise,whittaker_1928,Kotelnikov2001} states that any squared-integrable signal of finite bandwidth $\Omega/\pi$~Hz, ${e(t) \in\mathcal{B}_\Omega}$ for $t\in \Real$, can be perfectly reconstructed from its {samples} taken at equally spaced ${Q^\star = \pi/\Omega}$~seconds apart by the {reconstruction} formula (also called cardinal series)
\begin{equation} \label{cardinal_series}
e(t) =   \sum_{n \in \Integer} e\left(n Q^\star\right) f_{\Omega}\left(t - n Q^\star \right)
\end{equation}
where $f_{\Omega}(t) = \sinc\left(t/Q^\star\right)$.
%Since $\{e(n Q^\star)\}$ interpolate $e(t)$ for each $n\in\Integer$, we can rewrite~\eqref{cardinal_series} as an interpolating formula:
%\begin{equation} \label{cardinal_series_infinite}
%e(t) =   \sum_{n \in \Integer} e\left(n Q^\star\right) f_{\Omega}\left(t - n Q^\star \right)
%\end{equation}
%while defining an interpolating function
%\begin{align}  \label{interpolation_function_time}
%f_{\Omega}(t) &= \frac{Q^\star}{2\pi} \int_{-\infty}^\infty \mathbbm{1}_{|\omega|\le\Omega}(\omega) e^{\imagunit \omega t}\, d\omega  =  \sinc\left(t/Q^\star\right).
%\end{align}
Orthonormality is a fundamental property of the reconstruction sinc-functions, which allows to geometrically interpret the samples as coordinates of a basis set of functions~\cite{Shannon_Noise}. 
In general, $e(t)$ may not be bandlimited, and an ideal low-pass filtering operation is required prior to sampling with reduced reconstruction accuracy \cite{Unser}. 
%Within a finite time interval of duration $T$, only a finite number of coordinates are needed to specify the signal approximately, the minimum of which are the  \emph{degrees of freedom} (DoF) of $e(t)$.

The generalization of the sampling theorem in \eqref{cardinal_series} to %{\color{blue}spatially bandlimited fields} 
spatially bandlimited fields is known as \emph{multidimensional sampling theorem}~\cite{SamplingBook,MultivariateSPBook,PETERSEN1962279,Agrell2004}.
With respect to the classical definition of a bandlimited signal in the frequency domain, this notion applies in the spatial-frequency (wavenumber) domain \cite[Ch.~8]{FranceschettiBook}.
Its application requires an accurate description of the field in the spectral domain, which is tied to its physical nature.
%Unlike the classical sampling theorem, which relies on the signal's bandwidth $\Omega/\pi$ only, in multiple dimensions , as the shape of the spectral support of $e(\vect{r})$ may vary significantly according to the application. 
For wireless researchers, the class of three-dimensional (3D) electromagnetic fields is of great interest as they allow transfer of information from one point to another \cite{Migliore_Horse}.
A field of this sort generates a continuum of spatial samples in the form of an analog physical process. 
Yet the noise allows for a certain error in the field description, calling for a discrete representation through spatial sampling. 
%Our focus is on scalar electromagnetic fields.
%However, practical systems are inevitably constrained by a finite precision, calling for an information-lossless finite description of the field~\cite{PizzoTWC21}.

%\AP{Se la vogliamo fare piu signal processing oriented, una frasetta del tipo: multiple sensors may be potentially located to sample the field for signal processing or communication purposes.
%\LS{Da qua sopra non ci capisce quale sia la motivazione}
%\subsection{Contributions}

%The emergence of metamaterials enables a full manipulation of the propagation environment surrounding the user, which may represent a major paradigm change in wireless communications systems in terms of data rate, coverage, and enhanced security/privacy \cite{Hu2018,Rui2019,BJORNSON20193,Wu2020}.
%Once deployed, metamaterials can turn any man-made structure like building and wall into a smart surface \cite{Wu2020}.
%However, full exploitation of the potentialities offered by smart surfaces is only possible via an understanding of the electromagnetic wave propagation phenomena \cite{Migliore_Horse,PizzoTWC21}.

In processing analog fields, it is highly desirable to minimize the sampling rate because the amount of data processing, power consumption, and digital storage is proportional to the number of samples to be processed~\cite{OppenheimBook}. This is of paramount importance at high frequencies (e.g., millimeter wave and sub-terahertz spectrum) where the complex nature of propagation channels requires to collect a large amount of measurements~\cite{Rappaport2019}.
%channel for two main reasons. First, the level of propagation details that needs to be captured increases compared to sub-6~GHz \cite{Rappaport2019}. Second, as the wavelength shortens, asymptotic results such as sampling theorem and degrees of freedom (DoF) become more accurate \cite{Shannon_Noise,whittaker_1928,Kotelnikov2001}.}}

%where digital signal processing and hardware power consumption are both critical concerns [REFS].
Motivated by the above considerations, this paper revisits the sampling theory for electromagnetic fields with potential applications to, e.g., channel sounding, sensing and localization at high frequencies.
%This paper considers scalar electromagnetic fields~\cite{ChewBook}, and exploits the fact that electromagnetic fields $e(\vect{r})$ are essentially bandlimited in the spectral domain~\cite{Bucci87,Bucci98}, and thus, the multidimensional sampling theorem can directly be applied without recurring to a filtering operation. 
%with a clear implication on sampling, reconstruction, and DoF of an electromagnetic field. 
We deviate from the electromagnetic literature~\cite{Bucci87,Bucci98}, and provide a signal processing interpretation of wave propagation under arbitrary scattering conditions. Precisely, 3D wireless propagation can be modeled as a two-dimensional (2D) linear and space-invariant (LSI) system with low-pass filtering behavior.
%Unlike~\cite{Bucci87,Bucci98}, a signal processing interpretation of wave propagation, under arbitrary scattering conditions, is provided, based on which 3D wireless propagation can be modeled as a two-dimensional (2D) linear and space-invariant (LSI) system with low-pass filtering behavior. %Unlike time-domain signals, no projection from $\mathcal{L}_2$ to $\mathcal{B}_\Omega$ is needed whatsoever. 
%We provide a spectral description of the class of 3D electromagnetic fields $e(\vect{r})$ and apply the general theory to solve the Nyquist sampling and reconstruction problem, and compute the number of DoF.
Physics-driven approaches similar to the one used in this paper can be found in~\cite{PoonDoF,Kennedy2007,Hanlen2007}, where the number of degrees of freedom (DoF) that can be extracted from electromagnetic fields, confined in a space region with spherical symmetry, is computed.

Unlike~\cite{PoonDoF,Kennedy2007,Hanlen2007}, we consider regions of rectangular symmetry and perform the analysis in Cartesian coordinates, by leveraging the connection with Fourier theory~\cite{Sayeed2002,PizzoIT21,PizzoJSAC20,MarzettaISIT,MarzettaNokia}. 
In addition, we connect the DoF result with the Nyquist sampling and field reconstruction problem. Notice that asymptotic results such as DoF and sampling theorem become more accurate as the wavelength shortens (i.e., high frequencies). Both were studied in~\cite{Hu2018} under isotropic scattering.

We generalize~\cite{Hu2018} to an arbitrary configuration of scatterers and provide an alternative explanation of the results via a signal processing approach. The development is finally extended to ensembles of stationary electromagnetic Gaussian random fields~\cite{PizzoIT21,PizzoJSAC20,MarzettaISIT,MarzettaNokia}.
%Here, we not only provide a better understanding of them via a signal processing approach but also generalize the results to arbitrary propagation conditions. Specifically, to abstract from a particular configuration of scatterers, we extend our analysis to ensembles of stationary  electromagnetic Gaussian random fields 
%and must be interpreted on average. 
%The reconstruction via sampling theorem is still possible and provides us with the best linear interpolator in the mean squared error (MSE) sense for any finite number of samples~\cite{Agrell2004,PETERSEN1962279}.

%\LS{Non e' male ma non punge! Leggerei Lund (vedi anche commento nelle conclusioni) e qualche paper che lo cita per avere qualche idea sulla applicazioni/motivazione.}
%\AP{ho aggiunto la parte in magenta. Ci sono solo tre paper di signal processing che citano Lund. Secondo me abbiamo abbastanza refs di SP.}

\subsection{Outline of the Paper}

The remainder of this paper is organized as follows. In Section~\ref{sec:Preliminaries}, we review the sampling theorem and put much emphasis on the interplay among Nyquist sampling, reconstruction, and number of DoF. In Section~\ref{sec:electromagnetic}, we provide a spectral characterization of the class of 3D electromagnetic fields. This is first used in Section~\ref{sec:Nyquist_sampling} to solve the Nyquist sampling and reconstruction problem and later in Section~\ref{sec:DoF} to compute the DoF. 
The generalization of the developed framework to ensembles of stationary electromagnetic random fields is provided in Section~\ref{sec:stationary}.
Numerical results are given in Section~\ref{sec:numerical} to validate the developed theory under practical settings. Conclusions are drawn in Section~\ref{sec:conclusions}.

\subsection{Notation}
We use upper (lower) case letters for spatial-frequency (spatial) entities. Boldfaced letters indicate vectors and matrices. The superscripts $^{\Ttran}$, $^{-1}$, and $^{1/2}$ stand for transposition, inverse, and matrix square root, respectively. $\diag(\vect{x})$ indicates the diagonal matrix with elements from $\vect{x}$. $\det(\vect{X})$ is the determinant of $\vect{X}$. $\vect{I}_n$ is the $n$-dimensional identity matrix. Calligraphic letters indicate sets. $m(\mathcal{X})$ denotes the Lebesgue measure, $\mathbbm{1}_{\mathcal{X}}(x)$ is the indicator function. $\Real^n$ and $\Integer^n$ are the $n$-dimensional spaces of real-valued and integer numbers, $|\cdot|$ denotes absolute value, $\lceil x \rceil$ denotes the least integer greater than or equal to $x$, $\sinc(x)=\sin(\pi x)/(\pi x)$ is the sinc function, $\jinc(x) = J_1(x)/x$ is the jinc function where $J_1(x)$ denotes the Bessel function of the first kind with order $1$. 
A general point $(\vect{r},z) \in\Real^3$ is described by its Cartesian coordinates $(x,y,z)$ with $\|\vect{r}\| = \sqrt{x^2 + y^2 + z^2}$ the Euclidean norm.
$\nabla^2 = \frac{\partial^2}{\partial x^2} + \frac{\partial^2}{\partial y^2} + \frac{\partial^2}{\partial z^2}$ is the Laplacian operator. $\Ex\{\cdot\}$ denotes the expectation operator. The notation $n \sim \CN(0, \sigma^2)$ stands for a circularly symmetric complex Gaussian random variable with variance $\sigma^2$. 

\section{Preliminaries} \label{sec:Preliminaries}

To make the paper self-contained, we briefly review the sampling theorem in one and two dimensions. This is preparatory for studying spatial electromagnetic fields.

%\subsection{{\color{blue}Time-domain Signals}}
\subsection{Time-domain Signals}

Let $e(t)$ be a baseband signal with its spectrum ${E}(\omega)$.
%\footnote{A centered spectrum can always be obtained by opportunely choosing the frequency of the bandpass downconversion.}
Uniform sampling of $e(t)$ at equally spaced $Q$ seconds apart yields the sampled signal ${e}_{\rm s}(t) =   \sum_{n \in \Integer} {e}(n Q) \delta(t- n Q)$
%The sampled signal ${e}_{\rm s}(t)$ obtained from the samples $\{{e}(n Q)\}$ of $e(t)$, taken at equally spaced $Q$ seconds apart, is 
%\begin{equation} \label{smalLcale_t1_ISO_sampled}
%{e}_{\rm s}(t) =   \sum_{n \in \Integer} {e}(n Q) \delta(t- n Q)
%\end{equation}
with spectrum~\cite[Sec.~3]{SamplingBook}
\begin{align} \label{smalLcale_t1_ISO_sampled_freq_v1}
E_{\rm s}(\omega) &=  \sum_{n \in \Integer} {e}(n Q) e^{-\imagunit \omega n Q}\\&= \frac{1}{Q} \sum_{\ell \in \Integer} {E}(\omega - \ell P)  \label{smalLcale_t1_ISO_sampled_freq}
\end{align}
where the frequency replication period $P$ is such that
\begin{equation} \label{repetition_sampling}
P Q = 2\pi.
\end{equation}
%Suppose $e(t) \in \mathcal{B}_\Omega$, i.e., it has finite bandwidth $\Omega/\pi$ (in Hz). 
%From~\eqref{smalLcale_t1_ISO_sampled_freq}, the frequency replicas of $E(\omega)$ do not overlap, (i.e., no \emph{aliasing} occurs) when $P \ge 2\Omega$ or, equivalently, $Q \le \pi/\Omega$.
For any bandlimited signal $e(t) \in \mathcal{B}_\Omega$ of bandwidth $\Omega/\pi$ (in Hz), the Nyquist sampling interval
\begin{equation} \label{optimal_sampling_time}
Q^\star = \pi/\Omega
\end{equation}
provides the minimum number of samples per unit of time (i.e., the Nyquist rate ${\mu = 1/Q^\star}$ (in samples/s)) that is needed for perfect reconstruction of $e(t)$ from its samples $\{{e}(n Q)\}$.
Precisely, under Nyquist sampling, $E(\omega)$ can be perfectly retrieved by low-pass filtering the fundamental replica of $E_{\rm s}(\omega)$ in~\eqref{smalLcale_t1_ISO_sampled_freq} with no \emph{aliasing},
\begin{equation} \label{filtering}
E(\omega) =  E_{\rm s}(\omega) \left(Q^\star \mathbbm{1}_{|\omega|\le\Omega}(\omega) \right).
\end{equation}
The reconstruction formula in~\eqref{cardinal_series} for $e(t)$ at any $t\in(-\infty,\infty)$ is obtained by an inverse Fourier transform of~\eqref{filtering} with $E_{\rm s}(\omega)$ given by~\eqref{smalLcale_t1_ISO_sampled_freq_v1} \cite[Sec.~3]{SamplingBook}, \cite{Shannon_Noise}. 
The interpolating sinc-function does not depend on the spectral characteristics of $E(\omega)$, but only on the measure of its support, i.e., the bandwidth $\Omega$.

%{\color{blue}Given the continuous nature of $e(t)$, signals belonging to $\mathcal{B}_\Omega$ span an infinite-dimensional space. However, noise and quantization effects allows a certain error in the representation so that bandlimited signals are amenable to a representation over a set of approximately DoF dimension} \cite[Sec.~2]{FranceschettiBook}.
Given the continuous nature of $e(t)$, signals belonging to $\mathcal{B}_\Omega$ span an infinite-dimensional space. However, noise and quantization effects allows a certain error in the representation so that bandlimited signals are amenable to a representation over a set of approximately DoF dimension \cite[Sec.~2]{FranceschettiBook}.
%In principle, $\mathcal{B}_\Omega$ is an infinite-dimensional function space as a countably-infinite number of coefficients is needed to specify any signal exactly. In practice, however, $\mathcal{B}_\Omega$ has an effective dimension that allows us to truncate~\eqref{cardinal_series} up to a countable number of samples for any given accuracy. The minimum number of such samples defines the DoF of $e(t)$ \cite[Sec.~2]{FranceschettiBook}.
Within a time interval of duration $T$, since there are $1/Q^\star$ samples/s, we have a total of~\cite[Eq.~(2.14)]{FranceschettiBook} 
\begin{equation} \label{Shannon_DoF}
\dof_{[-\Omega,\Omega]} =  \frac{\Omega T}{\pi}
\end{equation}
significant samples in the interval. 
%the DoF of $e(t)$, obtained by taking samples at equally spaced $Q^\star$ seconds apart, are given by~\cite[Eq.~(2.14)]{FranceschettiBook}
The implication of \eqref{Shannon_DoF} is that only a finite number of samples carries the essential information contained in the signal and can be used to reconstruct it. The reconstruction error decreases as $\Omega T$ increases and vanishes at infinity \cite[Sec.~2]{FranceschettiBook}.
Sampling above the Nyquist rate does not create any additional DoF and adds an insignificant contribution to the signal's reconstruction.

% Due to orthonormality of the interpolating sinc-functions in \eqref{cardinal_series}, the DoF in (7) specifies the minimum number of coefficients needed to represent any signal up to a certain level of accuracy. Alternatively said, sampling above the Nyquist rate does not create any additional DoF, but only add redundancy to the signal’s representation.

%\subsection{{\color{blue}2D Spatial Fields}} \label{sec:sampling_2D}
\subsection{2D Spatial Fields} \label{sec:sampling_2D}

%\begin{align}
%\begin{pmatrix}
%\vect{k}_{\rm c}/\kappa \\
%k_z(\vect{k}_{\rm c})/\kappa
%\end{pmatrix}
%=
%\begin{pmatrix}
% \sin(\theta_{\rm c}) \cos(\phi_{\rm c}) \\
% \sin(\theta_{\rm c}) \sin(\phi_{\rm c}) \\
% \cos(\theta_{\rm c})
%\end{pmatrix}
%\end{align}
The generalization of a 1D uniform sampling to a 2D lattice $\vect{Q} \vect{n}$ given $\vect{Q} \in \Real^{2\times 2}$ the (non-singular) sampling matrix yields ${e}_{\rm s}(\vect{r}) =  \sum_{\vect{n} \in \Integer^2} {e}(\vect{Q} \vect{n}) \delta(\vect{r} - \vect{Q} \vect{n})$
%\begin{equation} \label{smalLcale_t2_ISO_sampled_vec}
%{e}_{\rm s}(\vect{r}) =  \sum_{\vect{n} \in \Integer^2} {e}(\vect{Q} \vect{n}) \delta(\vect{r} - \vect{Q} \vect{n})
%\end{equation}
 \cite[Sec.~6]{SamplingBook}.
As an example, a simple 2D rectangular sampling is obtained when $\vect{Q} = \diag(q_x,q_y)$ where $q_x$ and $q_y$ are the sampling intervals along the $x-$ and $y$-axis, respectively.
The spatial Fourier transform of ${e}_{\rm s}(\vect{r})$ is \cite[Sec.~6]{SamplingBook}
\begin{align} \label{smalLcale_t2_ISO_sampled_freq}   
E_{\rm s}(\vect{k}) &=  \sum_{\vect{n} \in \Integer^2} {e}(\vect{Q} \vect{n}) e^{-\imagunit \vect{k}^{\Ttran} \left(\vect{Q} \vect{n}\right)}\\&=  
\frac{1}{|\det(\vect{Q})|} \sum_{\boldsymbol{\ell} \in \Integer^2} {E}\left(\vect{k} - \vect{P} \boldsymbol{\ell}\right)  
\end{align}
where the periodicity matrix $\vect{P}\in \Real^{2\times 2}$ is related to $\vect{Q}$ as
\begin{equation} \label{sampling_repetition}
\vect{P}^{\Ttran} \vect{Q} = 2\pi \vect{I}_2.
\end{equation} 
For any bandlimited field $e(\vect{r}) \in \mathcal{B}_{\mathcal{K}}$ of \emph{spatial bandwidth} $m(\mathcal{K})$ (in (rad/s)$^2$), we denotes with $\vect{P}^\star$ the Nyquist periodicity matrix that provides the minimum interspacing among adjacent replicas of ${E}(\vect{k})$ with no overlapping.
The associated Nyquist sampling matrix
\begin{equation} \label{Nyquist_sampling_matrix}
\vect{Q}^\star = 2\pi (({\vect{P}^\star})^{\Ttran})^{-1}
\end{equation}
provides the minimum number of samples per unit of space, i.e., the Nyquist density (in samples/m$^2$) \cite{MultivariateSPBook,AgrellSPL}
\begin{equation} \label{Nyquist_rate}
\mu = \frac{1}{|\det(\vect{Q}^\star)|}
\end{equation} 
that is needed for perfect reconstruction of $e(\vect{r})$ from its samples $\{{e}(\vect{Q}^\star \vect{n})\}$.
%{\color{blue}This extends the notion of Nyquist sampling rate (in samples/m) to spatial fields.}
This extends the notion of Nyquist sampling rate (in samples/m) to spatial fields.
%Under this condition, sampling is an information-lossless operation and the optimal sampling matrix $\vect{Q}^\star$ is the one that minimizes the sampling rate (i.e., the Nyquist rate)
As in~\eqref{filtering}, under Nyquist sampling, $E(\vect{k})$ can be perfectly retrieved by low-pass filtering $E_{\rm s}(\vect{k})$,
\begin{equation} \label{filtering_spatial}
E(\vect{k}) =  E_{\rm s}(\vect{k}) \left( |\det(\vect{Q}^\star)| \mathbbm{1}_{\mathcal{K}}(\vect{k}) \right).
\end{equation}
%\footnote{The Nyquist sampling of a time-domain signal $e(t)$ generates a periodic replication of $E(\omega)$ with no interstitial spacing. In the spatial domain, however, adjacent replicas of $E(\vect{k})$ are separated spatially. Within these spacings, the spectrum of the reconstruction function $F_{\mathcal{K}}(\vect{k})$ can be chosen arbitrarily so that infinitely many reconstruction functions may be found for any Nyquist sampling.}
%\footnote{In general, there are infinitely many reconstruction functions $f_{\mathcal{K}}(\vect{r})$ whose spectrums $F_{\mathcal{K}}(\vect{k})$ must satisfy the Nyquist criteria. One of such functions is found according to~\eqref{filtering_spatial}, but others may be found }
The 2D counterpart of~\eqref{cardinal_series} is obtained via an inverse Fourier transform of~\eqref{filtering_spatial} with $E_{\rm s}(\vect{k})$ in~\eqref{smalLcale_t2_ISO_sampled_freq}~\cite[Eq.~(6.33)]{SamplingBook}:
\begin{equation} \label{cardinal_series_spatial}
e(\vect{r}) =   \sum_{\vect{n} \in \Integer^2} e(\vect{Q}^\star \vect{n}) f_{\mathcal{K}}(\vect{r} - \vect{Q}^\star \vect{n})
\end{equation}
with interpolating function \cite[Eq.~(6.34)]{SamplingBook}
\begin{align} \label{interp_function}
f_{\mathcal{K}}(\vect{r}) & = \frac{|\det(\vect{Q}^\star)|}{(2 \pi)^2} \int_{\Real^2} \mathbbm{1}_{\mathcal{K}}(\vect{k}) e^{\imagunit \vect{k}^{\Ttran} \vect{r}} \, d\vect{k}.
%\\
%& = |\det(\vect{Q})| \int_{\mathcal{K}} e^{\imagunit \vect{k}^{\Ttran} \vect{r}} \, d\vect{k}.
\end{align}
Similarly to the 1D case, Nyquist sampling and interpolating function do not depend on the values assumed by $E(\vect{k})$, but solely by the shape of its support $\mathcal{K}$. Unfortunately, these cannot be computed univocally from its measure (as for 1D signals) as multidimensional supports are described by more than one parameter \cite{PETERSEN1962279}.
%{\color{blue}Our focus is on 3D scalar electromagnetic fields generated by arbitrary scattering conditions. A scalar formulation physically corresponds to acoustic wave propagation in general \cite{Hanlen2007,Kennedy2004} or electromagnetic wave propagation in a source-free environment \cite{ChewBook}. Their spectral support is characterized next.}
Our focus is on 3D scalar electromagnetic fields generated by arbitrary scattering conditions. A scalar formulation physically corresponds to acoustic wave propagation in general \cite{Hanlen2007,Kennedy2004} or electromagnetic wave propagation in a source-free environment \cite{ChewBook}. Their spectral support is characterized next.
%Other physical phenomena such as acoustic waves propagation are described by a scalar formulation \cite{Hanlen2007, Kennedy2004}.
%As seen in Section~\ref{sec:electromagnetic}, the propagation behavior of $e(\vect{r})$ along $z$ is exactly known a-priori. 
%, as it does not depend on the scattering conditions, but solely by the $z$-plane at which we measure the field (see also Fig.~\ref{fig:non_isotropic_spectrum}).}

%The same reasoning can be applied to a field $\vect{r}$ with disconnected support, but lies within a bounded wavenumber region \cite{PETERSEN1962279}. 

%Compared to \eqref{smalLcale_t1_ISO_sampled} and \eqref{smalLcale_t1_ISO_sampled_freq}, in \eqref{smalLcale_t2_ISO_sampled_vec} and \eqref{smalLcale_t2_ISO_sampled_freq} the scalar quantities $Q$ and $P$ are replaced by the matrix quantities $\vect{Q}$ and $\vect{P}$, respectively. The lattice density $1/|\det(\vect{Q})|$ denotes the number of spatial samples per unit of area (in samples/m$^2$) and can be regarded as the multidimensional counterpart of the sampling rate $1/Q$ 
%\footnote{Notice that an infinite number of spectra having supports of different shapes may be found for a given repetition matrix \cite{PETERSEN1962279}.} 
%Notice that this does not require the support of ${E}(\vect{k})$ to be connected, but only that it lies within a bounded wavenumber region \cite{PETERSEN1962279}. 

\section{Spectral Characterization of Electromagnetic Fields} \label{sec:electromagnetic}

%Based on the sampling theorem, any signal ${e(t) \in \mathcal{B}_\Omega}$ can approximately be identified by a countable number of samples taken at Nyquist rate, which is univocally determined by the bandwidth of its spectrum.
The generalization of sampling theorem to multiple dimensions requires a spectral characterization of the considered field.
This is addressed next for \emph{electromagnetic fields}, generated by the interaction of a radiated field with an arbitrary configuration of scatterers in the half-space $\{z<0\}$. Upon interaction, a scattered field propagating in a 3D homogeneous and isotropic medium is measured at receiver, modeling a non line-of-sight propagation scenario.
Under these settings, the Maxwell's equations reduce to the vector (source-free) homogeneous Helmholtz equations \cite[Eq.~(1.2.20)]{ChewBook}
\begin{equation} \label{vector_Helmholtz_eq}
\left(\nabla^2 + \frac{\partial^2}{\partial z^2} \right) \vect{e}(\vect{r},z) + \kappa^2 \vect{e}(\vect{r},z) = 0
\end{equation}
where $\vect{e}(\vect{r},z)$ with ${\vect{r} = (x,y) \in\Real^2}$ can be either the electric or magnetic vector fields and $\kappa = 2\pi/\lambda$ is the wavenumber with $\lambda$ the wavelength. 
We do our analysis in Cartesian coordinates, i.e., $\vect{e}(\vect{r},z) = \hat{\vect{x}} e_x + \hat{\vect{y}} e_y + \hat{\vect{z}} e_z$. In this case,~\eqref{vector_Helmholtz_eq} can be addressed independently along each coordinate \cite{Wang88}.\footnote{Separability does not hold in spherical coordinates where~\eqref{vector_Helmholtz_eq} must be solved in vector form~\cite[Sec.~1.2]{ChewBook}.}
%Other physical phenomena can be described using the scalar homogeneous Helmholtz equation such as acoustic waves propagation \cite{Hanlen2007, Kennedy2004}.
%This requires an accurate description of the spectral characteristics of $e(t)$ through the signal's bandwidth $\Omega$.
%and provide a linear-system theoretic interpretation of the wave propagation phenomena.
%A signal-space approach, analogous to the one developed by Shannon~\cite{Shannon_Noise} for a signal $e(t) \in \mathcal{B}_\Omega$, can be extended to a spatial field of electromagnetic nature, i.e., an electromagnetic field $e(\vect{r})$, as shown next. 
In a line-of-sight scenario, the capability of an observer to resolve features of the radiated field is distance-dependent. 
This effect should be accounted for by considering an inhomogeneous equation driven by the distribution of the radiator \cite{Miller}.
%In this section, we characterize the spectrum of a 3D scalar field $e(x,y,z)$ of electromagnetic nature evaluated at any $z$-plane in the half-space $z\ge 0$ and provide insights into its structure.

\subsection{Scalar Homogeneous Helmholtz Equation}

\begin{figure} [t!]
        \centering
	\begin{overpic}[width=0.6\columnwidth,tics=10]{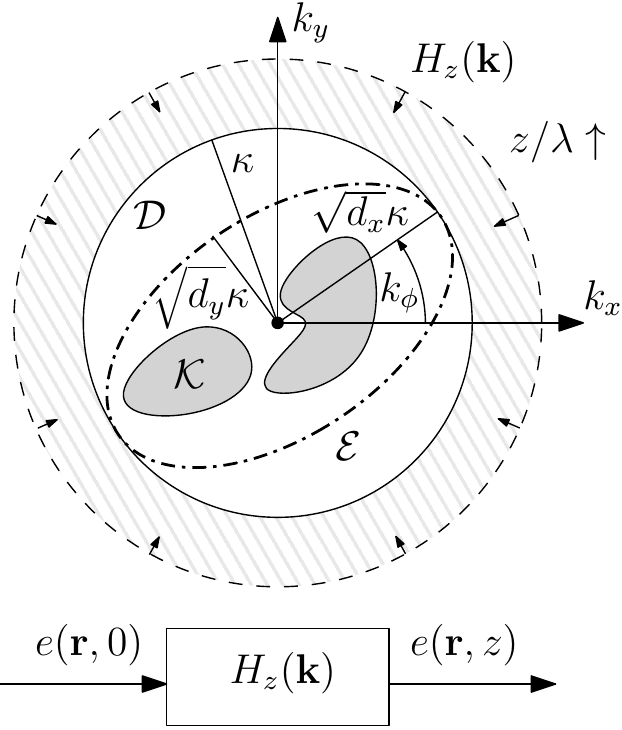}
\end{overpic} \vspace{-0.0cm}
                \caption{{\small Wave propagation in a 3D homogeneous, isotropic, and source-free medium can be modeled as a  low-pass filter.}} \vspace{-0.3cm}
                \label{fig:non_isotropic_spectrum}        
\end{figure}

Without loss of generality, we denote with $e(\vect{r},z)$ either $e_x$, $e_y$ or $e_z$. This field obeys the scalar homogeneous Helmholtz equation \cite[Eq.~(1.2.21)]{ChewBook}
\begin{equation} \label{Helmholtz}
\left(\nabla^2 + \frac{\partial^2}{\partial z^2} \right)e(\vect{r},z) + \kappa^2 e(\vect{r},z) = 0
\end{equation}
or in the horizontal spatial-frequency (wavenumber) domain,
\begin{equation} \label{Helmholtz_Fourier}
 \frac{\partial^2}{\partial z^2} E(\vect{k},z) + (\kappa^2 - \|\vect{k}\|^2) E(\vect{k},z)  = 0
\end{equation}
where ${\vect{k} = (k_x,k_y)\in\Real^2}$ are the horizontal wavenumber Cartesian coordinates and
\begin{equation} \label{Fourier_transform}
E(\vect{k},z) = \int_{\Real^2}  e(\vect{r},z) e^{-\imagunit \vect{k}^{\Ttran} \vect{r}} \, d\vect{r}.
\end{equation}
For any fixed $\vect{k} \in\Real^2$,~\eqref{Helmholtz_Fourier} is a second-order ordinary differential equation  in $z$ with constant coefficients whose general solution in the half-space $\{z\ge0\}$ is of the form 
\begin{equation} \label{Helmholtz_Fourier_solution}
E(\vect{k},z) = E(\vect{k}) e^{\imagunit k_z(\vect{k}) z},
\end{equation}
where $E(\vect{k})$ is an arbitrary complex-valued coefficient and $k_z(\vect{k})$ is defined as
\begin{equation} \label{k_z}
k_z(\vect{k}) =
\begin{cases}
\sqrt{\kappa^2 - \|\vect{k}\|^2} &  \vect{k} \in \mathcal{D} \\
\imagunit \sqrt{\|\vect{k}\|^2 - \kappa^2} & \text{elsewhere}
\end{cases}
\end{equation}
given 
\begin{equation} \label{disk}
\mathcal{D} = \{\vect{k}\in\Real^2 : \|\vect{k}\|^2\le\kappa^2\}
\end{equation}
as a disk of radius $\kappa$; see Fig.~\ref{fig:non_isotropic_spectrum}. Notice that $\vect{k}$ can vary independently in $\Real^2$ and hence $k_z(\vect{k})$ in~\eqref{k_z} can be either real- or imaginary-valued with positive imaginary part.\footnote{The positive sign ensures that $e^{\imagunit k_z(\vect{k}) z}$ does not blow up at infinity, also known as the Sommerfeld's radiation condition \cite{ChewBook}.} Particularly, $k_z(\vect{k})$ is real-valued within $\vect{k} \in \mathcal{D}$ and  imaginary-valued otherwise.
The spatial field $e(\vect{r},z)$ with $\{z\ge0\}$ obeying~\eqref{Helmholtz} is obtained via a 2D inverse spatial Fourier transform of~\eqref{Helmholtz_Fourier_solution}:
\begin{equation} \label{Helmholtz_Fourier_general_solution}
e(\vect{r},z) = \frac{1}{(2\pi)^2}\int_{\Real^2} E(\vect{k}) e^{\imagunit k_z(\vect{k}) z} e^{\imagunit \vect{k}^{\Ttran} \vect{r}} \, d\vect{k}.
\end{equation}
%Clearly, $e(\vect{r},z)$ satisfies the Helmholtz equation in~\eqref{Helmholtz}, as verified by direct substitution of~\eqref{Helmholtz_Fourier_general_solution} into~\eqref{Helmholtz} and exchanging of the integral and derivative operations.

%\subsection{Circularly-bandlimited Low-pass Filter} \label{sec:migration}
\subsection{System-Theoretic Interpretation of Wave Propagation} \label{sec:migration}

\begin{figure} [t!]
        \centering
        \begin{subfigure}[t]{\columnwidth} \centering 
	\begin{overpic}[width=.9\columnwidth,tics=10]{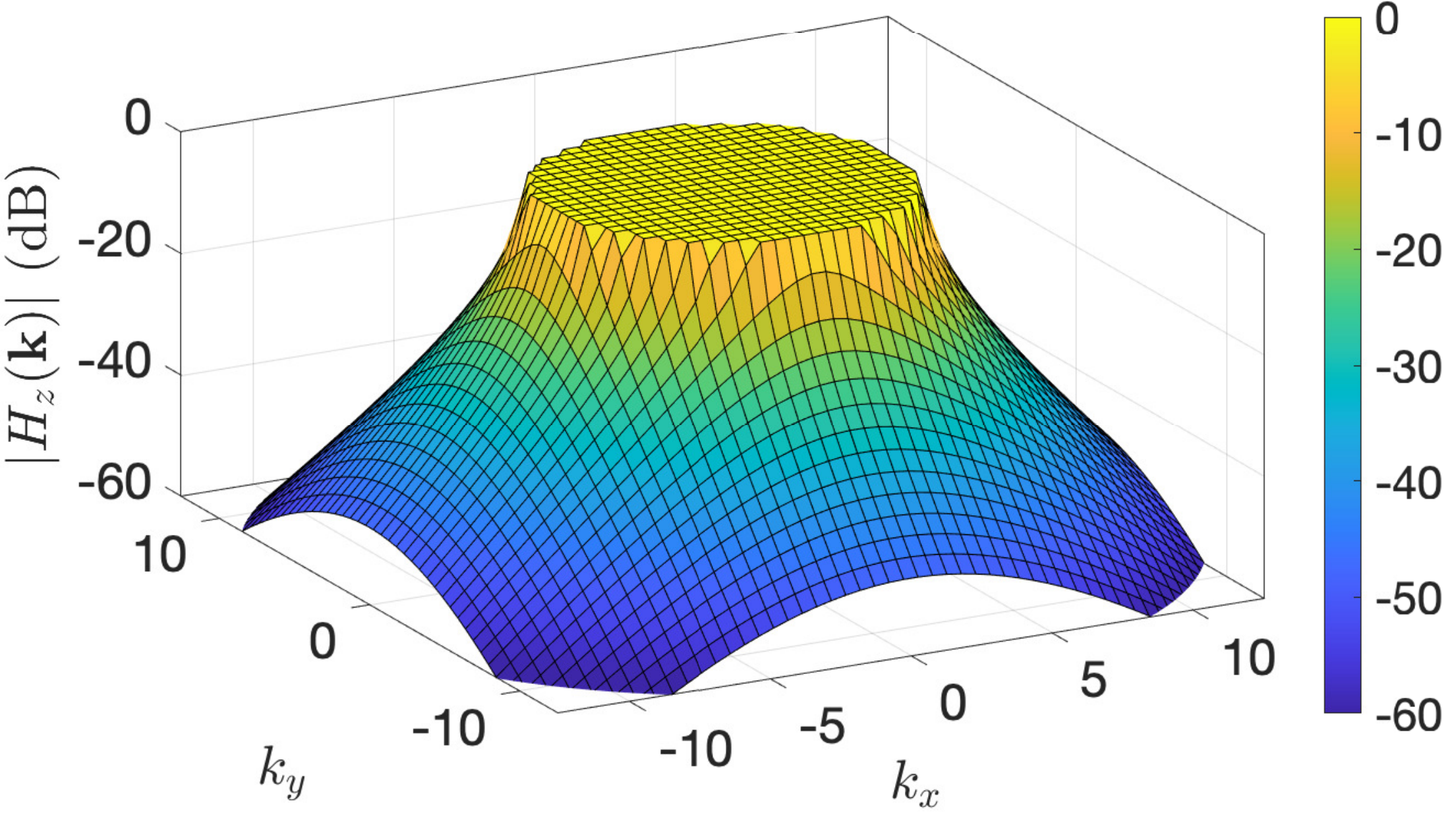}
\end{overpic} \vspace{-0.2cm}
                \caption{$z/\lambda = 1$.} \vspace{0.0cm}
                \label{fig:LSI_filter_1lambda} 
        \end{subfigure}   
             
        \begin{subfigure}[t]{\columnwidth} \centering  
        	\begin{overpic}[width=.9\columnwidth,tics=10]{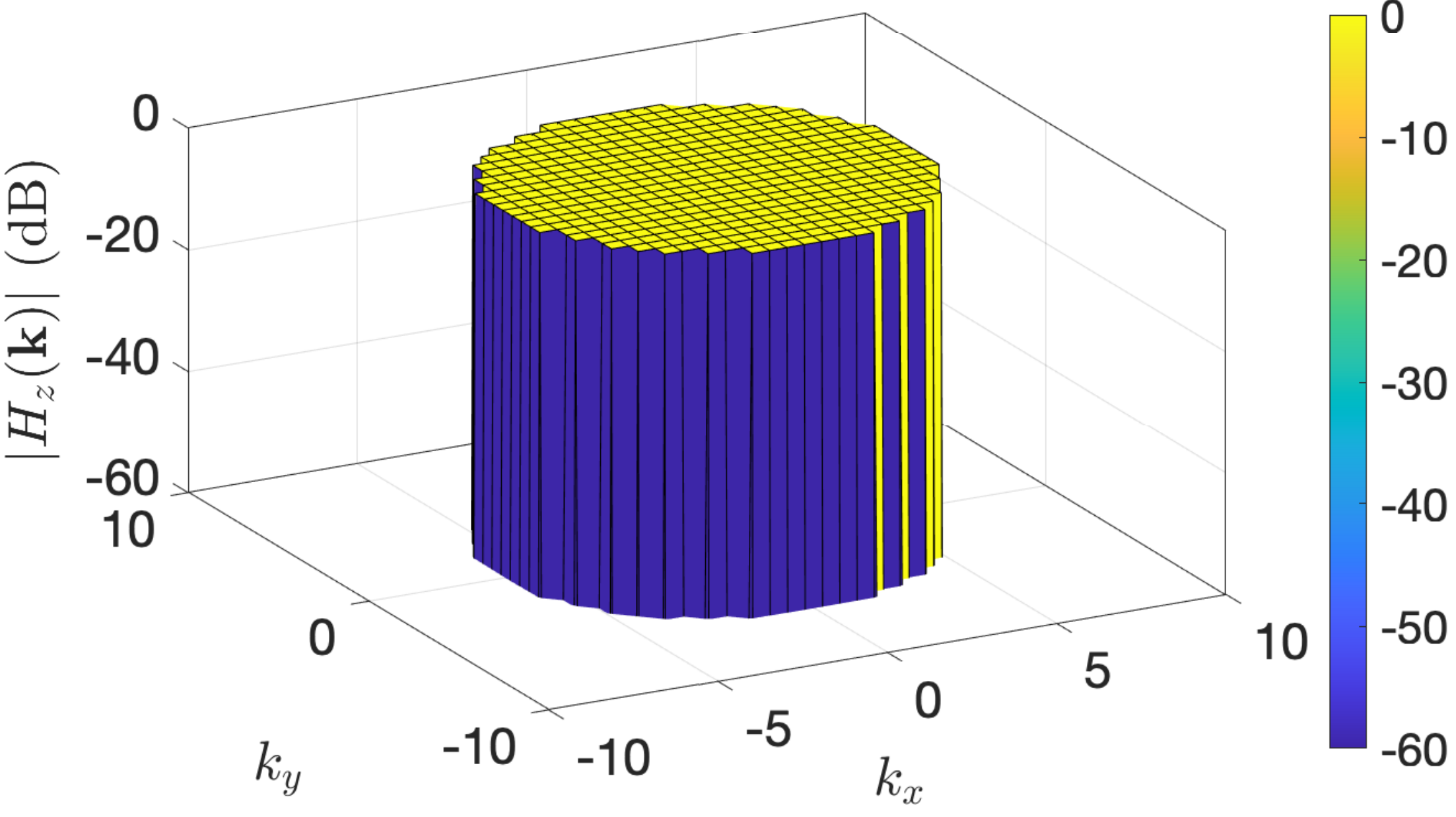}
\end{overpic}  \vspace{-0.0cm}
                \caption{$z/\lambda = 10$.} 
                \label{fig:LSI_filter_10lambda}
        \end{subfigure}\vspace{0.0cm}
        \caption{{\small Magnitude (in dB) of $H_z(\vect{k})$ in~\eqref{filter} for different $z$.}}
        \label{fig:LSI_filter}\vspace{-0.0cm}
\end{figure}

The spectrum-wise multiplication in~\eqref{Helmholtz_Fourier_solution} reveals that $e(\vect{r},z)$ with $\{z\ge0\}$ can be obtained by passing a 2D field $e(\vect{r},0)$ with spectrum $E(\vect{k})$ through an LSI system (see Fig.~\ref{fig:non_isotropic_spectrum}) with wavenumber response
\begin{equation} \label{filter}
H_z(\vect{k}) = e^{\imagunit k_z(\vect{k}) z}.
\end{equation}
%Let us rewrite \eqref{Helmholtz_Fourier_general_solution} as
%\begin{equation} \label{spectral_representation_z_filter}
%e(\vect{r},z) = \frac{1}{(2\pi)^2} \int_{\Real^2} E(\vect{k})  H_z(\vect{k}) e^{\imagunit \vect{k}^{\Ttran} \vect{r}} \, d\vect{k}
%\end{equation}
This operation is known in physics as \emph{migration} (e.g., \cite[Sec.~7]{MultivariateSPBook}), and it is a direct consequence of the Helmholtz equation in~\eqref{Helmholtz}, which acts as an LSI operator projecting the number of observable 3D field configurations onto a lower-dimensional 2D space \cite{Franceschetti}.  
%The behaviour of the migration filter $H_z(\vect{k})$ in~\eqref{filter} depend jointly on the spatial-frequency (wavevector) $\vect{k}$ and the $z$-plane at which the output field is observed, as discussed next.
%In the half-space $z\ge 0$, we consider only the positive solution in~\eqref{Helmholtz_Fourier_solution} as the negative exponential $e^{- \imagunit k_z(\vect{k}) z}$ will correspond to an increasingly higher oscillation along $z$.
Depending on $\vect{k}$, $H_z(\vect{k})$ in~\eqref{filter} is either an oscillatory or an exponentially-decaying function of ${z}$, i.e.,
\begin{equation} \label{filter_lowpass}
H_z(\vect{k}) = 
\begin{cases}
e^{\imagunit \sqrt{\kappa^2 - \|\vect{k}\|^2} \, z} & \vect{k}\in\mathcal{D} \\
e^{- \sqrt{\|\vect{k}\|^2 -\kappa^2} \, z} & \text{elsewhere}.
\end{cases}
\end{equation}
In the case $\vect{k} \in \mathcal{D}$, $H_z(\vect{k})$ is an all-pass filter that simply introduces a phase-shift along $z$. This implies that $e(\vect{r},z)$, at any $z$-plane is fully determined by $e(\vect{r},0)$. Outside of this support, $H_z(\vect{k})$ introduces an attenuation with exponential pace along $z$. 
%Notice that all wavevectors with constant $\|\vect{k}\|$ are attenuated the same, as the exponential decaying factor $k_z(\vect{k})$ in~\eqref{k_z} depends on $\|\vect{k}\|$ only. 
%Notice that the support of $H_z(\vect{k})$ is not fixed but depends on $z$. 
This is quantified in Fig.~\ref{fig:LSI_filter} where the magnitude of $H_z(\vect{k})$ is plotted (in dB) for $z/\lambda = \{1, 10\}$. At $z=10 \, \lambda$, the field outside of $\mathcal{D}$ is attenuated more than $60$~dB. This means that, for distances $z$ larger than tens of wavelengths (i.e., for any practical distance from scatterers), we may consider $H_z(\vect{k})$ as a \emph{low-pass filter} with maximum circular bandwidth 
\begin{equation} \label{measure_disk}
m(\mathcal{D}) = \pi \kappa^2
\end{equation}
which increases proportionally with the squared value of the operating frequency.
This will be assumed in the remainder.

\subsection{Plane-Wave Decomposition} \label{sec:physics_interpretation}

The integral in~\eqref{Helmholtz_Fourier_general_solution} has an intuitive physical interpretation in terms of a \emph{plane-wave decomposition} of $e(\vect{r},z)$ where each plane wave impinges on the point $(\vect{r},z)$ from direction $(\vect{k}/\kappa,k_z(\vect{k})/\kappa)$ and has complex-valued amplitude $E(\vect{k})$~\cite{PizzoJSAC20}. 
% -- determined by the angular selectivity of the scattering mechanism
There are two types of plane waves in nature: the discarded \emph{evanescent waves} and the remaining \emph{propagating waves}. Hence, the field $e(\vect{r},z)$ in~\eqref{Helmholtz_Fourier_general_solution} can be approximated as
\begin{align} \label{field_spectral_planewave}
e(\vect{r},z) & =  \int_{\mathcal{D}} E(\vect{k}) e^{\imagunit (\vect{k}^{\Ttran} \vect{r} + k_z(\vect{k}) z)} \, d\vect{k}.
\end{align}  
%Notice that the inclusion of migration $e^{\imagunit k_z(\vect{k}) z}$ into the Fourier spatial basis $e^{\imagunit \vect{k}^{\Ttran} \vect{r}}$ provides us with an alternative representation in the angular domain via cosine directions, with respect to the wavenumber domain. 
An equivalent representation of~\eqref{field_spectral_planewave} can be obtained in the angular domain via cosine directions, which can be related to elevation $\theta \in[0,\pi/2]$ and azimuth $\phi\in[0,2\pi)$ angles as
\begin{align} \label{wavevector}
\begin{pmatrix}
\vect{k} \\
k_z(\vect{k})
\end{pmatrix}
=
\begin{pmatrix}
\kappa \sin(\theta) \cos(\phi) \\
\kappa \sin(\theta) \sin(\phi) \\
\kappa \cos(\theta)
\end{pmatrix}.
\end{align}
Being interested in leveraging linear system theory and Fourier theory, we will privilege the wavenumber interpretation. The angular domain representation will be used in Section~\ref{sec:stationary} to characterize the angular selectivity of the scattering. 
%For example, the spherical coordinates provide an intuitive physical understanding of the wave propagation phenomena angularly. 
%Despite this useful property, for sampling purposes, we next pursue our analysis in the wavenumber domain instead. This will allow us to exploit the one-to-one correspondence with time-domain signals and Fourier theory, as seen in Section~\ref{sec:Preliminaries}.

\section{Nyquist Sampling and Reconstruction} \label{sec:Nyquist_sampling}

%The 2D sampling theorem states that any field $e(\vect{r}) \in \mathcal{B}_\mathcal{K}$ of wavenumber support $\vect{k}\in\mathcal{K}$ can be perfectly reconstructed from its noiseless samples taken at Nyquist rate \cite[Sec.~6]{SamplingBook}.
Let us observe $e(\vect{r},z)$ over an infinite $z$-oriented plane and drop the functional dependance on $z$, i.e., ${e(\vect{r},z)= e(\vect{r})}$, as the sole effect introduced along the $z$-axis is the low-pass filtering operation introduced by migration. 
%Next, we first review the multidimensional sampling theorem for a 2D field and apply it to solve the Nyquist sampling and reconstruction problems of $e(\vect{r},z)$ under different scattering conditions. 
%{\color{blue}Reconstruction of the field at different $z$ requires compensating for the phase delay accumulated by the field along its travel along the $z$-axis.}
Reconstruction of the field at different $z$ requires compensating for the phase delay accumulated by the field along its travel along the $z$-axis.
%Since the migration filter in \eqref{filter} introduces only a phase shift, reconstruct the field along the $z$ axis requires creating a space-shifted version of the field measured at a reference $z$-plane.
Differently from 3D reconstruction of an object, electromagnetic fields require to capture a single (not multiple) ``image'' of the field, which rests on the Huygen’s electromagnetic principle \cite{ChewBook}.
We consider scattering generated from a deterministic configuration of scatterers. Propagation into random media is treated in Section~\ref{sec:stationary}. 

%We will develop our theory under ideal sampling conditions and infinite observation area and provide some insights on the major implication of a non-ideal sampling at the end. 
%The impact of operating with a finite number of samples is evaluated numerically in Section~\ref{sec:numerical}.

\subsection{Half-wavelength Rectangular Sampling}

The classical half-wavelength rectangular sampling arises as the Nyquist sampling for a spectrum  with wavenumber support \begin{equation} \label{rect}
\mathcal{R} = \{\vect{k}=(k_x,k_y)\in\Real^2 : |k_x|\le\kappa, |k_y|\le\kappa\}.
\end{equation}
The problem of finding the Nyquist periodicity matrix $\vect{P}^\star_\mathcal{R}$ that yields the minimum separation among replicas is geometrically formulated as the arrangement of squares of sizes $2\kappa$ that achieves the highest density.
Due to the rectangular symmetry of $\mathcal{R}$ in~\eqref{rect}, this is solved, e.g., along the $k_x$-axis. The Nyquist sampling interval follows from~\eqref{optimal_sampling_time} as $Q^\star_x =  {\pi}/{\kappa} = \lambda/2$ while using ${\kappa=2\pi/\lambda}$.
Exchanging the $k_x$ and $k_y$ axes and rearranging the results into a matrix form yields \cite{Wang88}
\begin{equation} \label{rectangular_sampling}
\vect{Q}^\star_{\mathcal{R}} =  \frac{\lambda}{2} \, \vect{I}_2.
\end{equation}
The Nyquist density corresponding to~\eqref{rectangular_sampling} is given by
\begin{equation} \label{density_sampling_rect} 
\mu_{\mathcal{R}} = \frac{1}{|\det(\vect{Q}^\star_{\mathcal{R}})|} = \frac{4}{\lambda^2} %  \qquad \text{samples/m}^2.
\end{equation}
%Interestingly, \eqref{spatial_efficiency} is dominated by the ratio of the distances between a spatial sample and its closest neighbourhood, which is given by $\lambda/\sqrt{3}$ and $\lambda/2$ for the hexagonal and rectangular samplings, respectively.
The interpolating function for the reconstruction of $e(\vect{r}) \in \mathcal{B}_\mathcal{R}$ is obtained from its 1D counterpart $f_{\Omega}(t) = \sinc\left(\Omega t/\pi\right)$ while replacing $\Omega$ with $\kappa=2\pi/\lambda$ and  leveraging separability
\begin{equation} \label{interpolation_function_rect}
f_{\mathcal{R}}(\vect{r}) = \sinc\left(\frac{2 x}{\lambda}\right) \sinc\left(\frac{2 y}{\lambda}\right).
\end{equation}
Based on the physical discussion in Section~\ref{sec:physics_interpretation}, half-wavelength sampling involves a portion of evanescent waves into the analysis, which occurs only when the observation plane is located in the close proximity of the scatterers (see Fig.~\ref{fig:LSI_filter}). For any practical scenario, it leads to an underutilization of the wavenumber spectrum with consequent loss of efficiency. We will develop on this observation later on.
%Hence, a  should be used in wireless applications operating in the reactive near-field region only, as some of the samples are linearly redundant each other. 

\subsection{Isotropic Scattering Environments}
 
\begin{figure} [t!]
        \centering
        \begin{subfigure}[t]{\columnwidth} \centering 
	\begin{overpic}[width=0.8\columnwidth,tics=10]{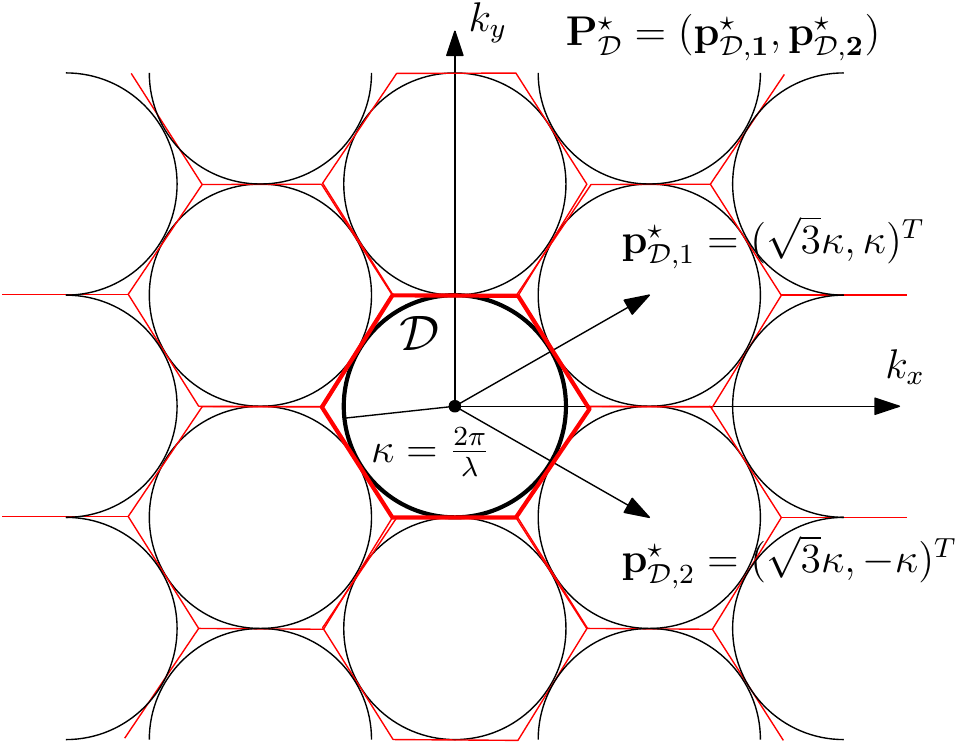}
\end{overpic} \vspace{-0.0cm}
                \caption{Circle packing in the wavenumber domain.} \vspace{0.5cm}
                \label{fig:circle_packing}  
        \end{subfigure}       
          
        \begin{subfigure}[t]{\columnwidth} \centering  
        	\begin{overpic}[width=.65\columnwidth,tics=10]{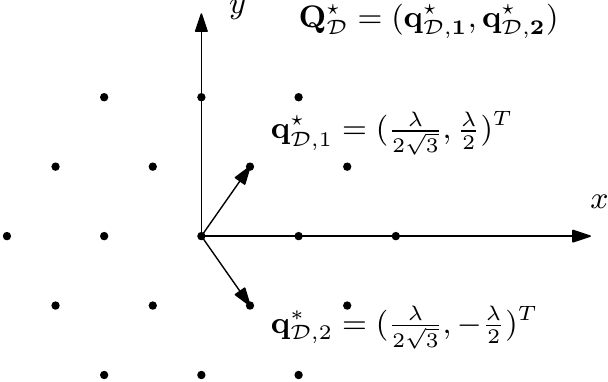}
\end{overpic}  \vspace{-0.0cm}
                \caption{Hexagonal sampling in the spatial domain.} 
                \label{fig:hexagonal_sampling}
        \end{subfigure}\vspace{0.0cm}
        \caption{{\small Nyquist sampling under isotropic scattering.}}
        \label{fig:sampling}\vspace{-0.0cm}
\end{figure}

If no directionality is enforced by the scatterers, $e(\vect{r})\in\mathcal{B}_\mathcal{D}$. This case physically corresponds to an isotropic scattering environment where plane waves carry equal power from all possible directions ${\theta \in[0,\pi/2]}$ and ${\phi\in[0,2\pi)}$ \cite{PizzoJSAC20,PizzoIT21}. 
Hence, the problem of finding the Nyquist periodicity matrix $\vect{P}^\star_\mathcal{D}$ is
geometrically found by solving a packing problem of circles of equal radius $\kappa$  \cite[Sec.~1.4]{MultivariateSPBook}. 
The solution is illustrated in Fig.~\ref{fig:circle_packing} and obtained by first inscribing each circle with a regular hexagon of side lengths ${2\kappa}/{\sqrt{3}}$ (red line) and, then, replicating this hexagonal arrangement periodically such that no aliasing occurs \cite[Eq.~(1.148)]{MultivariateSPBook}
\begin{align} \label{hex_replication_proof}
\vect{P}^\star_{\mathcal{D}} & =
\begin{pmatrix}
\sqrt{3} \kappa & \sqrt{3} \kappa  \\
\kappa  &  - \kappa
\end{pmatrix}.
\end{align}
Plugging~\eqref{hex_replication_proof} into~\eqref{Nyquist_sampling_matrix} as in~\cite[Eq.~(1.153)]{MultivariateSPBook} while substituting  $\kappa=2\pi/\lambda$ yields
\begin{align} \label{hexagonaLampling_2}  
\vect{Q}^\star_{\mathcal{D}} & =
\begin{pmatrix}
\frac{\pi}{\kappa \sqrt{3}} & \frac{\pi}{\kappa \sqrt{3}}  \\
\frac{\pi}{\kappa}  &  - \frac{\pi}{\kappa}
\end{pmatrix} 
%\mathop{=}^{(a)} 
=
\begin{pmatrix}
\frac{\lambda}{2 \sqrt{3}} & \frac{\lambda}{2 \sqrt{3}}  \\
\frac{\lambda}{2}  &  - \frac{\lambda}{2}
\end{pmatrix}
\end{align}
and it corresponds to a \emph{hexagonal sampling}, as illustrated in Fig.~\ref{fig:hexagonal_sampling}.
The Nyquist density follows from~\eqref{Nyquist_rate} as
\begin{equation} \label{density_sampling_hex}
\mu_{\mathcal{D}} = \frac{1}{|\det(\vect{Q}^\star_{\mathcal{D}} )|} = \frac{2 \sqrt{3}}{\lambda^2}. %  \qquad (\text{in samples/m}^2).
\end{equation}
Notice that $\vect{Q}^\star_{\mathcal{D}}$ is not unique as there exist other sampling matrices all achieving the same efficiency \cite{PETERSEN1962279,Agrell2004}, e.g., by rotating the wavenumber axes by $\pi/2$.
Compared to the half-wavelength rectangular sampling, we notice that $\mathcal{R}$ in~\eqref{rect} corresponds to the minimum squared embedding of $\mathcal{D}$. In turn, a hexagonal sampling has a \emph{sampling efficiency gain} of 
\begin{equation} \label{spatial_efficiency}
1-\frac{\mu_{\mathcal{D}}}{\mu_{\mathcal{R}}} = 1- \frac{\sqrt{3}}{2} = 13.4\%
\end{equation}
compared to rectangular sampling \cite[Sec.~1.4]{MultivariateSPBook}.

The interpolating function should be obtained by computing~\eqref{interp_function} over the same wavenumber support that is used for determining the Nyquist sampling matrix, which for any $e(\vect{r})\in\mathcal{B}_\mathcal{D}$ corresponds to the regular hexagon inscribing $\mathcal{D}$ (see Fig.~\ref{fig:circle_packing}). A suboptimal, yet accurate, solution is given below by computing~\eqref{interp_function} over $\mathcal{D}$ \cite[Sec.~VIII]{PETERSEN1962279}. 

\begin{lemma}  \label{sec:interpolation_function_iso}
Suppose that $e(\vect{r}) \in \mathcal{B}_{\mathcal{K}}$ with $\mathcal{K} = \mathcal{D}$, then perfect reconstruction of $e(\vect{r})$ is achieved by using~\eqref{cardinal_series_spatial} with Nyquist sampling matrix in \eqref{hexagonaLampling_2} and
\begin{equation} \label{interpolation_function_iso}
f_{\mathcal{D}}(\vect{r}) =  \frac{\pi}{\sqrt{3}} \, \jinc\left(\frac{2\pi \|\vect{r}\|}{\lambda}\right).
\end{equation}
\end{lemma}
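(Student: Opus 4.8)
The plan is to obtain $f_{\mathcal{D}}$ by specializing the general interpolating function~\eqref{interp_function} to $\mathcal{K} = \mathcal{D}$ and evaluating the resulting integral in closed form, after first checking that low-pass filtering with the disk (rather than with the hexagonal tiling cell that circumscribes it) is still lossless. So the argument splits into a reconstruction-identity step and a computational step.

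\emph{Step 1 (reconstruction identity).} By~\eqref{hex_replication_proof}--\eqref{hexagonaLampling_2}, the lattice generated by $\vect{P}^\star_{\mathcal{D}}$ tiles $\Real^2$ with translates of the regular hexagon of apothem $\kappa$ (side length $2\kappa/\sqrt{3}$) that circumscribes $\mathcal{D}$. Since $\mathcal{D}$ is contained in this hexagon and the hexagonal tiles overlap only on sets of measure zero, the spectral replicas $\{E(\vect{k} - \vect{P}^\star_{\mathcal{D}}\boldsymbol{\ell})\}_{\boldsymbol{\ell}\neq\vect{0}}$ in~\eqref{smalLcale_t2_ISO_sampled_freq} intersect $\mathcal{D} = \mathrm{supp}(E)$ only on a null set. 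Hence~\eqref{filtering_spatial} holds verbatim with the disk as the low-pass support, and an inverse Fourier transform of $E = E_{\rm s}\,(|\det(\vect{Q}^\star_{\mathcal{D}})|\,\mathbbm{1}_{\mathcal{D}})$, with $E_{\rm s}$ from~\eqref{smalLcale_t2_ISO_sampled_freq}, reproduces~\eqref{cardinal_series_spatial} with $f_{\mathcal{K}} = f_{\mathcal{D}}$ equal to~\eqref{interp_function} evaluated at $\mathcal{K} = \mathcal{D}$. This containment-plus-tiling remark is the cleanest way to reconcile ``perfect reconstruction'' with the preceding observation that $\mathcal{D}$ is only inscribed in, not equal to, the tiling cell.

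\emph{Step 2 (the disk integral and constants).} Writing $\vect{k} = (\rho\cos\phi,\rho\sin\phi)$ and using $\int_0^{2\pi} e^{\imagunit\rho\|\vect{r}\|\cos\phi}\,d\phi = 2\pi J_0(\rho\|\vect{r}\|)$ gives $\int_{\mathcal{D}} e^{\imagunit\vect{k}^{\Ttran}\vect{r}}\,d\vect{k} = 2\pi\int_0^{\kappa}\rho\,J_0(\rho\|\vect{r}\|)\,d\rho$; substituting $t = \rho\|\vect{r}\|$ and using $\int_0^a t\,J_0(t)\,dt = a\,J_1(a)$ turns this into $2\pi\kappa\,J_1(\kappa\|\vect{r}\|)/\|\vect{r}\| = 2\pi\kappa^2\,\jinc(\kappa\|\vect{r}\|)$. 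Plugging this and $|\det(\vect{Q}^\star_{\mathcal{D}})| = \lambda^2/(2\sqrt{3})$ (from~\eqref{density_sampling_hex}) together with $\kappa = 2\pi/\lambda$ into~\eqref{interp_function} collapses the prefactor $\tfrac{|\det(\vect{Q}^\star_{\mathcal{D}})|}{(2\pi)^2}\cdot 2\pi\kappa^2$ to $\pi/\sqrt{3}$ and the argument $\kappa\|\vect{r}\|$ to $2\pi\|\vect{r}\|/\lambda$, which is exactly~\eqref{interpolation_function_iso}.

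I do not expect a genuine analytical obstacle: Step 2 is the classical two-dimensional Fourier transform of a disk (the Airy/jinc pattern) plus constant bookkeeping. The only point requiring care is Step 1 — making explicit that filtering over $\mathcal{D}$ instead of over the hexagonal Voronoi cell loses nothing precisely because $\mathcal{D}$ sits inside a non-overlapping tile — so that the ``perfect reconstruction'' claim is both justified and consistent with the ``suboptimal, yet accurate'' wording in the surrounding text.
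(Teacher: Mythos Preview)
Your proposal is correct. Step~1, which the paper does not include, is a useful clarification: it reconciles the ``perfect reconstruction'' claim of the lemma with the preceding remark that $\mathcal{D}$ is strictly inscribed in the hexagonal tiling cell, by making explicit that the replicas $\{E(\vect{k}-\vect{P}^\star_{\mathcal{D}}\boldsymbol{\ell})\}_{\boldsymbol{\ell}\ne\vect{0}}$ do not touch $\mathcal{D}$ and hence filtering with $\mathbbm{1}_{\mathcal{D}}$ is lossless.

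In Step~2 you and the paper compute the same disk Fourier integral by different routes. You go to polar coordinates, use the angular identity $\int_0^{2\pi} e^{\imagunit\rho\|\vect{r}\|\cos\phi}\,d\phi = 2\pi J_0(\rho\|\vect{r}\|)$, and then the radial identity $\int_0^a t\,J_0(t)\,dt = a\,J_1(a)$. The paper instead stays in Cartesian coordinates: after scaling $\vect{u}=\vect{k}/\kappa$ and exploiting rotational symmetry to set $\vect{r}=(r,0)$, it integrates out $u_y$ to obtain $\int_{-1}^{1} e^{\imagunit\kappa r u_x}\sqrt{1-u_x^2}\,du_x$, substitutes $u_x=\cos\theta$, and invokes the integral representation $J_1(\alpha)=\tfrac{\alpha}{\pi}\int_0^\pi\cos(\alpha\cos\theta)\sin^2\theta\,d\theta$. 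Both are textbook evaluations of the Airy pattern; your polar route is arguably the more direct one and makes the $J_0\!\to\!J_1$ mechanism transparent, while the paper's Cartesian route avoids introducing $J_0$ at all. Either way the constant bookkeeping you give is correct and matches~\eqref{interpolation_function_iso}.
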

\begin{proof}
Despite this result is given already in \cite[Eq.~(74)]{PETERSEN1962279}, the proof is only sketched for a general multi-dimensional isotropically bandlimited field with multivariate argument $\vect{r} \in\Real^N$. Since this result is preparatory for non-isotropic scattering, the proof is given in Appendix~\ref{app:interp_iso}.
\end{proof}

The above interpolating function provides a high reconstruction accuracy while limiting the complexity due to its rotational symmetry. This can also be easily extended to non-isotropic scenarios, as studied next.
%\begin{equation} \label{hexagonaLampling_2}
%\vect{Q}^\star_{\mathcal{D}}  = 
%\begin{pmatrix}
%\frac{\lambda}{2 \sqrt{3}} & \frac{\lambda}{2 \sqrt{3}}  \\
%\frac{\lambda}{2}  &  - \frac{\lambda}{2}
%\end{pmatrix}.
%\end{equation}
%This number is given by $\rho^* = {\lambda^2}/({2 \sqrt{3}})$ and $\rho^*_{\mathrm{near}} = {\lambda^2}/{4}$, for the far-field and near-field case, respectively, which is .
%since the density of spatial samples per unit area is given by $1/|\vect{Q}|$, the number of antennas per unit area $M$ to be deployed for the two sampling methods is measured in units of $\mathrm{antennas}/m^2$ and it is given by
%\begin{equation}
%M_{\mathrm{hex}} = \frac{1}{|\vect{Q}_{\mathrm{hex}}|} = \frac{\sqrt{3} \kappa^2}{2 \pi^2} = \frac{\lambda^2}{2 \sqrt{3}}, \qquad M_{\mathrm{rect}} = \frac{1}{|\vect{Q}_{\mathrm{rect}}|} = \frac{\kappa^2}{\pi^2} = \frac{\lambda^2}{4}.
%\end{equation}
  
\subsection{Non-isotropic Scattering Environments}

%\begin{figure} [t!]
%        \centering
%	\begin{overpic}[width=.85\columnwidth,tics=10]{ellipse_packing}
%\end{overpic} \vspace{-0.0cm}
%                \caption{Ellipse packing in the wavenumber domain.} \vspace{0.5cm}
%                \label{fig:ellipse_packing} 
%\end{figure}

\begin{figure} [t!]
        \centering
         \begin{subfigure}[t]{\columnwidth} \centering  
        	\begin{overpic}[width=.9\columnwidth,tics=10]{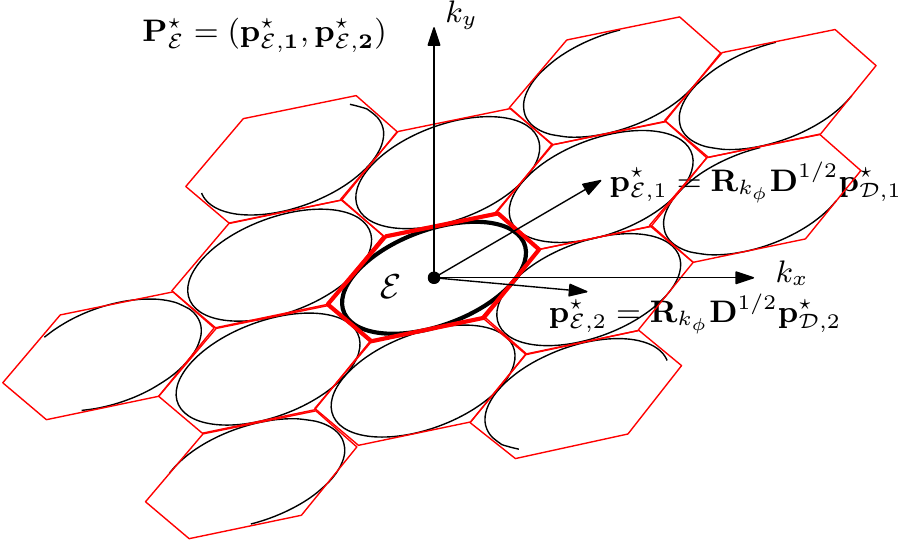}
\end{overpic}  \vspace{-0.0cm}
                \caption{Ellipse packing in the wavenumber domain.}  \vspace{0.5cm}
                \label{fig:ellipse_packing}
        \end{subfigure}
         \begin{subfigure}[t]{\columnwidth} \centering  
	\begin{overpic}[width=0.6\columnwidth,tics=10]{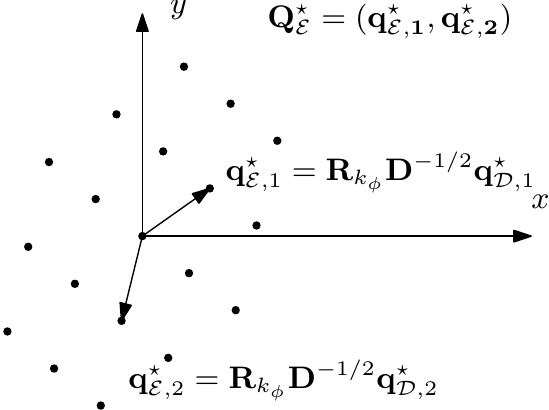}
\end{overpic} \vspace{-0.0cm}
                \caption{Elongated hexagonal sampling in the spatial domain.} \vspace{0.5cm}
                \label{fig:elongated_hexagonal_sampling}  
        \end{subfigure}   \vspace{0.0cm}
        \caption{{\small Nyquist sampling under non-isotropic scattering.}}
        \label{fig:ellipse_packing_circle}\vspace{-0.0cm}
\end{figure}

%\begin{figure}
%\centering
%\includegraphics[width=.6\columnwidth,tics=10]{ellipse_circle}
%\caption{Affine mapping between a circle and an ellipse.}
%\label{fig:ellipse_circle}
%\end{figure}

%Without loss of generality, we consider a channel field $e(\vect{r})$ whose wavenumber support is centered on the origin of the axes. This condition is similar to assuming the frequency spectrum of a time-domain baseband channel to be centered around $\omega=0$ and physically implies that the receive plane is oriented towards the modal direction of propagation.

%{\color{blue}Blockage effects make $e(\vect{r})$ directive in space limiting the field's spectrum to a smaller support $\mathcal{K} \subseteq \mathcal{D}$, as shown in Fig.~\ref{fig:non_isotropic_spectrum}. Hence, the scattering mechanism translates into another filtering operation, on top of migration filtering.}
Blockage effects make $e(\vect{r})$ directive in space limiting the field's spectrum to a smaller support $\mathcal{K} \subseteq \mathcal{D}$, as shown in Fig.~\ref{fig:non_isotropic_spectrum}. Hence, the scattering mechanism translates into another filtering operation, on top of migration filtering.
%Wave propagation in its most general form is non-isotropic and characterized by a spectral support $\mathcal{K} \subseteq \mathcal{D}$ of arbitrary shape, as shown in Fig.~\ref{fig:non_isotropic_spectrum}.
We will assume a centered support $\mathcal{K}$, which physically corresponds to an observation plane oriented perpendicularly to the modal propagation direction. The general case where the incoming field impinges from arbitrary direction is left for future work but it can, in principle, be addressed by translating the sampling theory of bandpass signals~\cite[Sec.~12.4]{OppenheimBook} to spatially bandlimited fields. 

Finding the Nyquist periodicity matrix $\vect{P}^\star_\mathcal{K}$ under arbitrary non-isotropic conditions is generally too complicated: $(i)$ $\mathcal{K}$ should be an analytically describable region (e.g., a circle or a rectangle); $(ii)$ for a non-connected spectrum one would have to find the tightest embedding (with minimum measure) of $\mathcal{K}$; and $(iii)$ we should be able to solve the resulting $\mathcal{K}$-packing problem geometrically \cite{PETERSEN1962279}.
%Except where the spectral functions occupy analytically describable regions (e.g., hypercubes, parallelepipeds, hyperspheres), the selection of the optimum vector set {uk} would seem to be a trial-and-error pro- cedure. It is, of course, difficult to conceive of a physical process whose spectrum might exhibit a meaningfully odd shape; moreover, in a prac- tical engineering application, one would have to weigh the disadvan- tages of inefficient sampling against those of instrumenting a complicated reconstruction function. 
%Instead, we investigate other possible sub-optimal Nyquist sampling strategies that are obtained by embedding $E(\vect{k})$ into, e.g., a square, rectangle, or a circle. The corresponding periodicity matrices and sampling strategies for these embeddings may be obtained as specified in Theorem~\ref{th:sampling_hex} and Theorem~\ref{th:sampling_rect} by rescaling opportunely.
In general, due to the above difficulties in solving the optimal problem, we resort to a sub-optimal strategy through the embedding of $\mathcal{K}$ into a larger and connected support. 
A possible embedding is provided by the 2D ellipse $\mathcal{E}\subseteq \mathcal{K}$ depicted in Fig.~\ref{fig:non_isotropic_spectrum}. This is defined as \cite[Sec.~2]{BoydBook}
\begin{equation} \label{ellipse}
%\mathcal{E} = \{\vect{k}\in\Real^2 : (\vect{k} - \vect{k}_0)^{\Ttran} \vect{P}^{-1} (\vect{k} - \vect{k}_0) \le 1\}
\mathcal{E} = \{\vect{k}\in\Real^2 : \vect{k}^{\Ttran} \vect{G}^{-1} \vect{k} \le \kappa^2\}
\end{equation}
where $\vect{G} \in\Real^{2\times 2}$ is a symmetric positive definite matrix (i.e., non-singular) that specifies how the ellipse extends in every direction from its center. Clearly, $\mathcal{D}$ in~\eqref{disk} is obtained as a particular instance of~\eqref{ellipse} by setting $\vect{G} = \vect{I}_2$.  
Although other embeddings may be used (e.g., square, circle, rectangle), the elliptical embedding is typically the most tight and analytically tractable.
Any ellipse is obtainable from a circle via an affine mapping ${\varphi : \mathcal{D} \to \mathcal{E}}$ defined as
\begin{equation}
\varphi(\vect{k}^\prime) = \vect{G}^{1/2} \vect{k}^\prime = \vect{k}
\end{equation}
to every $\vect{k}^\prime \in\mathcal{D}$.
The transformation involves a scaling of the Cartesian axes by $0 \le \sqrt{d_1}, \sqrt{d_2} \le 1$ and a counterclockwise rotation by an angle $k_\phi\in[0,2\pi]$; see Fig.~\ref{fig:non_isotropic_spectrum}. Altogether, we obtain
\begin{equation} \label{transf_matrix}
\vect{G}^{1/2} = \vect{R}_\phi \vect{D}^{1/2},
\end{equation}
where $ \vect{D}^{1/2} = \diag(\sqrt{d_1},\sqrt{d_2})$ with entries being the (normalized) lengths of the semi-axes of $\mathcal{E}$ and 
\begin{equation}
\vect{R}_{k_\phi} = 
\begin{pmatrix}
\cos(k_\phi)  & -\sin(k_\phi) \\
\sin(k_\phi) & \cos(k_\phi)
\end{pmatrix}.
\end{equation}
%:
%\begin{equation} \label{rotation_matrix}
%\vect{R} =
%\begin{pmatrix}
%\cos(\phi) & -\sin(\phi) \\
%\sin(\phi) & \cos(\phi)
%\end{pmatrix}.
%\end{equation}
%\begin{align}  \label{periodicity_matrix_ellipse}
%\vect{P}^\star_{\mathcal{E}} & =  \vect{G}^{1/2} \vect{P}^\star_{\mathcal{D}}
%%\\& 
%%= \vect{R} \vect{D}^{1/2} \vect{P}^\ast_{\boldsymbol{\xi}}
%\end{align}
%\begin{align}
%\vect{Q}^\star_{\mathcal{E}}  &
%%\mathop{=} 2\pi \left( \left(\vect{P}^\star_{\mathcal{D}}\right)^{-1} \vect{G}^{-1/2} \right)^{\Ttran} \\& \label{sampling_matrix_ellipse_proof}
%= (\vect{G}^{-1/2})^{\Ttran} \vect{Q}^\star_{\mathcal{D}} 
%\end{align}

The Nyquist periodicity matrix $\vect{P}^\star$ for the elliptical embedding of $\mathcal{K}$ is geometrically formulated as an ellipse packing problem, as illustrated in Fig.~\ref{fig:ellipse_packing}. The corresponding Nyquist sampling matrix is as follows.  

\begin{lemma} \label{th:sampling_non_iso}
Suppose that $e(\vect{r}) \in \mathcal{B}_\mathcal{K}$ with $\mathcal{K}\subseteq \mathcal{E}$, then the Nyquist sampling matrix $\vect{Q}^\star_{\mathcal{E}}$ is given by
\begin{equation}  \label{sampling_matrix_ellipse}
\vect{Q}^\star_{\mathcal{E}} =  (\vect{G}^{-1/2})^{\Ttran} \vect{Q}^\star_{\mathcal{D}} = \vect{R}_{k_\phi} \vect{D}^{-1/2} \vect{Q}^\star_{\mathcal{D}}
\end{equation}
where $\vect{Q}^\star_{\mathcal{D}}$ is defined  in~\eqref{hexagonaLampling_2}.
% and $(\vect{G}^{-1/2})^{\Ttran}$ is defined in~\eqref{inv_transformation}.
\end{lemma}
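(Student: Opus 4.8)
The plan is to reduce the elliptical packing problem to the already-solved circular one via the affine map $\varphi(\vect{k}') = \vect{G}^{1/2}\vect{k}'$ that sends the disk $\mathcal{D}$ onto the ellipse $\mathcal{E}$. First I would recall the general rule for how Nyquist periodicity and sampling matrices transform under a linear change of variables in the wavenumber domain. If $\vect{k} = \vect{G}^{1/2}\vect{k}'$, then a lattice of replica centers $\{\vect{P}^\star_{\mathcal{D}}\boldsymbol{\ell}\}$ that tiles wavenumber space for $\mathcal{D}$ (with no overlap and maximal density) is carried by $\varphi$ into the lattice $\{\vect{G}^{1/2}\vect{P}^\star_{\mathcal{D}}\boldsymbol{\ell}\}$; since $\varphi$ maps $\mathcal{D}$ bijectively onto $\mathcal{E}$, this image lattice tiles wavenumber space for $\mathcal{E}$ with exactly the same packing density (measured relatively, i.e., the fraction of space covered is invariant), hence it is optimal. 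Therefore $\vect{P}^\star_{\mathcal{E}} = \vect{G}^{1/2}\vect{P}^\star_{\mathcal{D}}$.

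Next I would push this through the defining relation \eqref{Nyquist_sampling_matrix}, namely $\vect{Q}^\star = 2\pi((\vect{P}^\star)^{\Ttran})^{-1}$. Substituting $\vect{P}^\star_{\mathcal{E}} = \vect{G}^{1/2}\vect{P}^\star_{\mathcal{D}}$ and using that $\vect{G}^{1/2}$ is symmetric (so $(\vect{G}^{1/2})^{\Ttran} = \vect{G}^{1/2}$), one gets
\begin{align}
\vect{Q}^\star_{\mathcal{E}} &= 2\pi\big((\vect{G}^{1/2}\vect{P}^\star_{\mathcal{D}})^{\Ttran}\big)^{-1} = 2\pi\big((\vect{P}^\star_{\mathcal{D}})^{\Ttran}\vect{G}^{1/2}\big)^{-1} \notag \\
&= 2\pi\,\vect{G}^{-1/2}\big((\vect{P}^\star_{\mathcal{D}})^{\Ttran}\big)^{-1} = \vect{G}^{-1/2}\vect{Q}^\star_{\mathcal{D}} = (\vect{G}^{-1/2})^{\Ttran}\vect{Q}^\star_{\mathcal{D}},
\end{align}
which is the claimed formula. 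Finally, inserting the decomposition \eqref{transf_matrix}, $\vect{G}^{1/2} = \vect{R}_{k_\phi}\vect{D}^{1/2}$, gives $\vect{G}^{-1/2} = \vect{D}^{-1/2}\vect{R}_{k_\phi}^{-1} = \vect{D}^{-1/2}\vect{R}_{k_\phi}^{\Ttran}$, and since $\vect{D}^{-1/2}$ is diagonal and $\vect{R}_{k_\phi}$ orthogonal, one rewrites $(\vect{G}^{-1/2})^{\Ttran} = \vect{R}_{k_\phi}\vect{D}^{-1/2}$, yielding the second equality $\vect{Q}^\star_{\mathcal{E}} = \vect{R}_{k_\phi}\vect{D}^{-1/2}\vect{Q}^\star_{\mathcal{D}}$.

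The main obstacle I anticipate is making the first step rigorous: precisely, that an affine image of a density-optimal packing lattice is again density-optimal for the image body. This is essentially the statement that relative packing density is an affine invariant, which follows because $\varphi$ scales all Lebesgue measures (the body's and the fundamental cell's) by the same factor $|\det\vect{G}^{1/2}|$, so the ratio is preserved; combined with the fact that $\varphi$ is a bijection of $\Real^2$ carrying any admissible (non-overlapping, covering-by-translates) configuration for $\mathcal{D}$ to one for $\mathcal{E}$ and vice versa, optimality transfers in both directions. One should also note, as the text already does for the isotropic case, that $\vect{Q}^\star_{\mathcal{E}}$ is not unique — any sampling matrix for $\mathcal{E}$ obtained by composing with a lattice automorphism of $\vect{Q}^\star_{\mathcal{D}}$ works equally well — so the lemma asserts existence of an optimal sampling matrix of the stated form rather than uniqueness. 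Everything else is routine linear algebra with symmetric and orthogonal matrices.
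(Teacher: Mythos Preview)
Your approach is exactly the paper's: transport the circle-packing solution through the affine map to get $\vect{P}^\star_{\mathcal{E}} = \vect{G}^{1/2}\vect{P}^\star_{\mathcal{D}}$, then invoke \eqref{Nyquist_sampling_matrix}. One correction to the algebra: in this paper $\vect{G}^{1/2} = \vect{R}_{k_\phi}\vect{D}^{1/2}$ is \emph{not} symmetric (it is a rotation times a diagonal, not the symmetric square root), so your step $(\vect{G}^{1/2}\vect{P}^\star_{\mathcal{D}})^{\Ttran} = (\vect{P}^\star_{\mathcal{D}})^{\Ttran}\vect{G}^{1/2}$ and the equality $\vect{G}^{-1/2}\vect{Q}^\star_{\mathcal{D}} = (\vect{G}^{-1/2})^{\Ttran}\vect{Q}^\star_{\mathcal{D}}$ are not valid; however the desired conclusion follows directly without symmetry via $2\pi\big((\vect{P}^\star_{\mathcal{D}})^{\Ttran}(\vect{G}^{1/2})^{\Ttran}\big)^{-1} = ((\vect{G}^{1/2})^{\Ttran})^{-1}\vect{Q}^\star_{\mathcal{D}} = (\vect{G}^{-1/2})^{\Ttran}\vect{Q}^\star_{\mathcal{D}}$, and your final paragraph already computes $(\vect{G}^{-1/2})^{\Ttran} = \vect{R}_{k_\phi}\vect{D}^{-1/2}$ correctly.
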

\begin{proof}
%The proof is given in Appendix~\ref{app:sampling_non_iso}. 
% and is built upon the affine mapping $\varphi : \mathcal{D} \to \mathcal{E}$ that turns circles into ellipses.
By virtue of the affine mapping, the ellipse packing problem is turned into a circle packing problem, whose solution is known and given by $\vect{P}^\star_{\mathcal{D}}$ in~\eqref{hex_replication_proof}. The Nyquist periodicity matrix is then given by
\begin{equation} \label{periodicity_matrix_ellipse}
\vect{P}^\star_{\mathcal{E}} =  \vect{G}^{1/2} \vect{P}^\star_{\mathcal{D}}.
\end{equation}
The Nyquist sampling matrix in \eqref{sampling_matrix_ellipse} follows from~\eqref{Nyquist_sampling_matrix} due to matrices invertibility while replacing $\vect{Q}^\star_{\mathcal{D}} = 2\pi (({\vect{P}^\star_{\mathcal{D}}})^{\Ttran})^{-1}$.
%\begin{equation} \label{inv_transformation}
%(\vect{G}^{-1/2})^{\Ttran} = \vect{R}_\phi \vect{D}^{-1/2}.
%\end{equation}
%This is obtained from by expanding the matrix inversion using~\eqref{periodicity_matrix_ellipse} and substituted the $\vect{Q}^\star_{\mathcal{D}}$ expression in~\eqref{hexagonaLampling_2}. 
\end{proof}

Notice that $\vect{Q}^\star_{\mathcal{E}}$ in Lemma~\ref{th:sampling_non_iso}  is obtained as a product between the isotropic Nyquist sampling matrix $\vect{Q}^\star_{\mathcal{D}}$ and the inverse matrix $(\vect{G}^{-1/2})^{\Ttran}$.
% = \vect{R} \vect{D}^{-1/2}$, which is obtainable from~\eqref{transf_matrix}. 
Consequently, the sampling scheme in this case is a rotated and scaled version of the one obtained under isotropic scattering conditions, i.e., an \emph{elongated hexagonal sampling}, as illustrated in Fig.~\ref{fig:elongated_hexagonal_sampling}.
How the hexagonal sampling is stretched in space is determined by $\vect{G}$ in~\eqref{transf_matrix}, whose parameters are uniquely determined by the angular selectivity of the scattering.

For the Nyquist sampling matrix $\vect{Q}^\star_{\mathcal{E}}$ in~\eqref{sampling_matrix_ellipse}, the Nyquist density is computed from~\eqref{Nyquist_rate} as
\begin{align} \label{density_sampling_ell}  
\mu_{\mathcal{E}}  = \frac{1}{|\det(\vect{Q}^\star_{\mathcal{E}} )|}  & \mathop{=}^{(a)} \frac{\sqrt{d_1} \sqrt{d_2}}{|\det(\vect{Q}^\star_{\mathcal{D}})|} 
%\mathop{=}^{(a)} \frac{1}{|\det(\vect{Q}^\star_{\mathcal{D}})|} \frac{1}{|\det((\vect{G}^{-1/2})^{\Ttran})|}  \\&
 \mathop{=}^{(b)} \sqrt{d_1} \sqrt{d_2} \mu_{\mathcal{D}} %  \qquad (\text{in samples/m}^2)
\end{align} 
where $(a)$ follows from~\eqref{sampling_matrix_ellipse} and the unitary property of rotation matrices, and $(b)$ from~\eqref{density_sampling_hex}. Here, $\sqrt{d_1}$ and $\sqrt{d_2}$ are the (normalized) lengths of the semi-axes of $\mathcal{E}$ in~\eqref{ellipse}; see Fig.~\ref{fig:non_isotropic_spectrum}.
Since $0 \le d_1, d_2 \le 1$, we have that $\mu_{\mathcal{E}} \le \mu_{\mathcal{D}}$ from which we see that the elongated hexagonal sampling has a {sampling efficiency gain} of
\begin{equation} \label{spatial_efficiency_ellipse}
1-\frac{\mu_{\mathcal{E}}}{\mu_{\mathcal{D}}} 
%1-\frac{\sqrt{d_1 d_2} \mu_{\mathcal{D}}}{\mu_{\mathcal{R}}} 
= 1- \sqrt{d_1} \sqrt{d_2}
\end{equation}
compared to hexagonal sampling, and of
\begin{equation} \label{spatial_efficiency_ellipse_rect}
1-\frac{\mu_{\mathcal{E}}}{\mu_{\mathcal{R}}} 
%=  1-\frac{\sqrt{d_1 d_2} \mu_{\mathcal{D}}}{\mu_{\mathcal{R}}} 
= 1- \frac{\sqrt{3}}{2} \sqrt{d_1} \sqrt{d_2}
\end{equation}
compared to half-wavelength sampling, as obtained by using \eqref{spatial_efficiency} and \eqref{density_sampling_ell}.
%This is zero when $\mathcal{E}=\mathcal{D}$ (i.e., isotropic case).
For example, consider a scenario where the scatterers subtend a solid angle spanning the entire azimuthal horizon and half range in elevation angle. This is exactly modeled as an ellipse $\mathcal{E}$ of parameters $\sqrt{d_1}=1$ and $\sqrt{d_2}=0.5$. The sampling efficiency gains in~\eqref{spatial_efficiency_ellipse} and~\eqref{spatial_efficiency_ellipse_rect} are approximately equal to $30\%$ and $39\%$, respectively.

The interpolating function for the elliptical embedding is given next.
% Notice that this gain is even higher when comparing the optimal hexagonal sampling scheme to a rectangular sampling at $\lambda/2$. In this latter case the achievable gain is $1- {\sqrt{3}}/({2\sqrt{2}}) \approx 39\%$. 

\begin{lemma} \label{sec:interpolation_function_noniso}
Suppose that $e(\vect{r}) \in \mathcal{B}_{\mathcal{K}}$ with $\mathcal{K} \subseteq \mathcal{E}$, then perfect reconstruction of $e(\vect{r})$ is achieved by using~\eqref{cardinal_series_spatial} with Nyquist sampling matrix in \eqref{sampling_matrix_ellipse} and
\begin{equation} \label{interpolation_function_noniso}
f_{\mathcal{E}}(\vect{r}) =   \frac{\pi}{\sqrt{3}} \, \jinc\left(\frac{2 \pi \| \vect{D}^{1/2} \vect{r}\|}{\lambda} \right)
\end{equation}
where $\vect{D}^{1/2} = \diag(\sqrt{d_1},\sqrt{d_2})$ is diagonal with elements given by the (normalized) lengths of the semi-axes of $\mathcal{E}$ in~\eqref{ellipse}.
\end{lemma}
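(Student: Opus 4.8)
The plan is to reduce the elliptical case to the already-established isotropic case (Lemma~\ref{sec:interpolation_function_iso}) by a change of variables in the reconstruction formula, exactly mirroring how Lemma~\ref{th:sampling_non_iso} reduced the ellipse-packing problem to circle-packing via the affine map $\varphi(\vect{k}') = \vect{G}^{1/2}\vect{k}'$. Concretely, I would start from the definition~\eqref{interp_function} of the interpolating function with $\mathcal{K} = \mathcal{E}$ and Nyquist sampling matrix $\vect{Q}^\star_\mathcal{E}$ from~\eqref{sampling_matrix_ellipse}, then substitute $\vect{k} = \vect{G}^{1/2}\vect{k}'$ so that $\mathbbm{1}_\mathcal{E}(\vect{k}) = \mathbbm{1}_\mathcal{D}(\vect{k}')$ and $d\vect{k} = |\det(\vect{G}^{1/2})|\,d\vect{k}' = \sqrt{d_1}\sqrt{d_2}\,d\vect{k}'$.

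Carrying this through, the exponent $\vect{k}^\Ttran \vect{r} = (\vect{k}')^\Ttran (\vect{G}^{1/2})^\Ttran \vect{r}$, and using~\eqref{transf_matrix} together with the fact that $\vect{R}_{k_\phi}$ is orthogonal, one identifies $(\vect{G}^{1/2})^\Ttran \vect{r} = \vect{D}^{1/2}\vect{R}_{k_\phi}^\Ttran \vect{r}$; since rotations preserve the Euclidean norm, $\|\vect{D}^{1/2}\vect{R}_{k_\phi}^\Ttran\vect{r}\| = \|\vect{D}^{1/2}(\vect{R}_{k_\phi}^\Ttran\vect{r})\|$. The prefactor $|\det(\vect{Q}^\star_\mathcal{E})|$ in~\eqref{interp_function} combines with the Jacobian $\sqrt{d_1}\sqrt{d_2}$ from the substitution; from~\eqref{density_sampling_ell} we have $|\det(\vect{Q}^\star_\mathcal{E})| = \sqrt{d_1}\sqrt{d_2}\,|\det(\vect{Q}^\star_\mathcal{D})|$, so these factors telescope and what remains is precisely the isotropic integral~\eqref{interp_function} with $\mathcal{K}=\mathcal{D}$ evaluated at the transformed argument $\vect{r}' := \vect{R}_{k_\phi}^\Ttran \vect{r}$. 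Invoking Lemma~\ref{sec:interpolation_function_iso} then gives $f_\mathcal{E}(\vect{r}) = f_\mathcal{D}(\vect{r}') = \frac{\pi}{\sqrt{3}}\jinc(2\pi\|\vect{r}'\|/\lambda)$, and replacing $\|\vect{r}'\|$ via the rotation-invariance argument by $\|\vect{D}^{1/2}\vect{r}\|$ — after noting that $\jinc$ depends only on the norm and that one may equally well absorb the rotation into the argument of $\vect{D}^{1/2}$ since both give the same value under the norm when $\mathcal{E}$ is expressed in its principal axes — yields~\eqref{interpolation_function_noniso}. To close, I would remark that perfect reconstruction via~\eqref{cardinal_series_spatial} holds because the substitution is a bijection carrying the spectral support $\mathcal{E}$ onto $\mathcal{D}$ and the lattice $\vect{Q}^\star_\mathcal{E}\Integer^2$ onto $\vect{Q}^\star_\mathcal{D}\Integer^2$, so the no-aliasing condition is inherited.

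The main obstacle I anticipate is bookkeeping the rotation $\vect{R}_{k_\phi}$ correctly: the naive substitution produces $\jinc(2\pi\|\vect{D}^{1/2}\vect{R}_{k_\phi}^\Ttran\vect{r}\|/\lambda)$, whereas the statement claims $\jinc(2\pi\|\vect{D}^{1/2}\vect{r}\|/\lambda)$ with no visible rotation. The resolution is that $\vect{G} = \vect{R}_{k_\phi}\vect{D}\vect{R}_{k_\phi}^\Ttran$ (the spectral decomposition), so $\|\vect{D}^{1/2}\vect{r}\|$ in the lemma should be read in the eigenbasis of $\vect{G}$ — i.e. the semi-axes $\sqrt{d_1},\sqrt{d_2}$ are defined along the rotated axes, and "$\vect{r}$" in~\eqref{interpolation_function_noniso} is implicitly the coordinate vector of the spatial point expressed in those same rotated axes. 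I would state this alignment-of-coordinates convention explicitly so the two expressions agree; it is the kind of subtlety that is geometrically obvious from Fig.~\ref{fig:non_isotropic_spectrum} but must be spelled out to make the algebra consistent. Everything else — the change of variables, the Jacobian cancellation, the appeal to Lemma~\ref{sec:interpolation_function_iso} — is routine.
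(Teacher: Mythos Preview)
Your approach is essentially identical to the paper's: substitute $\vect{k} = \vect{G}^{1/2}\vect{k}'$ in~\eqref{interp_function}, let the Jacobian cancel against the determinant ratio to recover the isotropic integral evaluated at $(\vect{G}^{1/2})^{\Ttran}\vect{r}$, then invoke Lemma~\ref{sec:interpolation_function_iso} and the rotational invariance of $f_{\mathcal{D}}$ --- and you even flag the same rotation-bookkeeping subtlety that the paper's appendix dispatches in one line. One small slip: from~\eqref{density_sampling_ell} the relation is $|\det(\vect{Q}^\star_{\mathcal{E}})| = |\det(\vect{Q}^\star_{\mathcal{D}})|/(\sqrt{d_1}\sqrt{d_2})$ (the reciprocal of what you wrote), but multiplied by the Jacobian $|\det(\vect{G}^{1/2})| = \sqrt{d_1}\sqrt{d_2}$ this still telescopes to $|\det(\vect{Q}^\star_{\mathcal{D}})|$, so your conclusion stands.
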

\begin{proof}
The proof is given in Appendix~\ref{app:interp_noniso}.
\end{proof}

Notice that~\eqref{interpolation_function_noniso} is obtained from its isotropic counterpart~\eqref{interpolation_function_iso} by applying the same affine map used for sampling, except for the rotation, as it has no effect on $f_{\mathcal{D}}(\vect{r})$.
%This is a general fact that as the symmetry exhibited by the wavenumber support $\vect{k}\in\mathcal{K}$ maps directly to the corresponding interpolating function $f_{\mathcal{K}}(\vect{r})$.

%\subsection{3D Reconstruction}
% 
%As seen in Section~\ref{sec:electromagnetic}, the propagation behavior of $e(\vect{r})$ along $z$ is exactly known a-priori. 
%%, as it does not depend on the scattering conditions, but solely by the $z$-plane at which we measure the field (see also Fig.~\ref{fig:non_isotropic_spectrum}).
%In particular, for values of $z$ larger than few wavelengths, the electromagnetic field measured at any $z$ plane is nonetheless a space-shifted version of itself. 
%Thus, once the 2D field $e(\vect{r},z_0)$ with $z_0\ge 0$ is reconstructed through~\eqref{cardinal_series_spatial}, the 3D volumetric reconstruction of $e(\vect{r},z)$ for any $z\ge z_0$ is simply achieved via a constant phase-shift. 
%Differently from classical 3D volumetric reconstruction of an object, electromagnetic fields require to capture a single (not multiple) ``image'' of the field.

\section{Degrees of freedom} \label{sec:DoF}
%
%We now turn to the problem of computing the number of DoF that can be extracted from a 1D, 2D, and 3D rectangular region of space via a 3D electromagnetic field $e(\vect{r},z) \in\mathcal{B}_\mathcal{K}$. By virtue of the bandlimited property, we expect that only a finite number of samples is needed to specify the field approximately, the minimum of which denotes the DoF of $e(\vect{r},z)$. 

The DoF of an electromagnetic field $e(\vect{r})$, within a given observation area, corresponds to the minimum number of samples needed to represent the field over the same region. Hence, the DoF per unit of area coincide to the Nyquist sampling density that should be used for reconstruction. We next compute the DoF of all three propagation scenarios seen in Sec.~\ref{sec:Nyquist_sampling} and compare the results to the Nyquist density previously defined.

Let us consider a field with squared wavenumber support of side length $2 \kappa$, i.e., $e(\vect{r}) \in\mathcal{B}_\mathcal{R}$. Due to symmetry, focus on the $k_x$-axis only.
Over a spatial interval of length $L$, since there are $2/\lambda$ significant samples/m, the total number of samples is
%The number of DoF of $e(t) \in\mathcal{B}_\Omega$ that can be extracted from an interval of duration $T$ is given by~\eqref{Shannon_DoF}. This follows from the orthogonality of the shifted-versions of $f_\Omega(t)$ in~\eqref{cardinal_series} over $t\in[0,T]$ as $\Omega T\to \infty$ \cite[Eq.~(5)]{Unser}, which serve as a basis set of functions for $\mathcal{B}_\Omega$. 
%Similarly, due to separability, the shifted-versions of $f_\mathcal{R}(\vect{r})$ in~\eqref{interpolation_function_rect} becomes orthogonal  as $L/\lambda \to \infty$. Since  $e(\vect{r}) \in\mathcal{B}_\mathcal{R}$ has $2/\lambda$ significant samples/m, the total number of samples within a segment of length $L$ is
\begin{equation}
\dof_{[-\kappa,\kappa]} = \frac{2 L}{\lambda}
\end{equation}
which is the spatial counterpart of the Shannon's formula~\eqref{Shannon_DoF}. The physical implication of \eqref{DoF_rect} is that only a finite number of directions carries the essential information contained in the field and leads to perfect reconstruction as $L/\lambda \to\infty$.
%Naturally, over any squared region $\mathcal{A}\subseteq \Real^2$ of length $L$,
Generalization to a squared region of size $L$ is straightforward 
\begin{equation} \label{DoF_rect}
\dof_\mathcal{R} = \left(\frac{2 L}{\lambda}\right)^2.
\end{equation}
%Plugging~\eqref{interpolation_function_rect} into~\eqref{cardinal_series_spatial}, it follows that the DoF that can be extracted from a 2D squared region $\mathcal{A}\subseteq \Real^2$ of side length $L$ are
Moreover, the expansion of a rectangle into a parallelepiped does not offer additional DoF due to migration filter \cite{MarzettaISIT,PizzoSPAWC20}, which linearly relates the field's configurations at different $z$-planes and its rooted in the Huygen's electromagnetic principle \cite[Sec.~III]{Kennedy2007}\cite{PizzoIT21}. 
Expectedly, the DoF per unit of area returns the Nyquist density in \eqref{density_sampling_rect}, $\mu_\mathcal{R} = \dof_\mathcal{R}/L^2$.
This provides an intuitive interpretation of the number of DoF as the minimum number of samples within a given area that are needed to perfectly reconstruct the field (under ideal sampling). Considering more samples will only add redundancy to the discrete representation of the field.

The DoF formula in \eqref{DoF_rect} for squared bandlimited fields is a particular instance of Landau's formula~\cite{Landau,Franceschetti}
\begin{equation} \label{DoF_rect_area}
\dof_\mathcal{R} = \left(\frac{\kappa L}{\pi}\right)^2 = \frac{m(\mathcal{A}) m(\mathcal{R})}{(2 \pi)^2}
\end{equation}
function of the field's bandwidth ${m(\mathcal{R})= 4\kappa^2}$ (rad/m)$^2$ and observation area ${m(\mathcal{A}) = L^2}$ m$^2$. 
Expectedly, under isotropic scattering,
 \begin{align}  \label{DoF_D}
\dof_\mathcal{D} &  = \frac{m(\mathcal{A}) m(\mathcal{D})}{(2 \pi)^2} = \pi \left(\frac{L}{\lambda}\right)^2
\end{align}
where we have used $m(\mathcal{D}) = \pi \kappa^2$. 
%An intuitive explanation of \eqref{DoF_D} may be found in the 2D Fourier harmonics with fundamental frequency of $2\pi/L$ rad/m, as they provide a basis set of functions over $\mathcal{A}$. Counting the wavenumber samples taken at $2\pi/L$ rad/m apart within a circular support $\mathcal{D}$ of radius $\kappa=2\pi/\lambda$ yields \eqref{DoF_D} \cite{PizzoSPAWC20}.\footnote{Particularly,~\eqref{DoF_D} is obtained by rescaling the Cartesian axes in the wavenumber domain by $L/2\pi$ and solving a Gauss circle problem \cite{Wolfram}.}
Replicating the result for $\mu_\mathcal{R}$ to the isotropic case requires normalizing \eqref{DoF_D} by $m(\mathcal{A})$. This yields a slightly lower value of $\mu_\mathcal{D}$ in \eqref{density_sampling_hex}. The source of this error is the (suboptimal) hexagonal embedding applied prior to derive \eqref{hexagonaLampling_2} and associated density in \eqref{density_sampling_hex} that is not accounted by the DoF computation. This is bypassed by the DoF formula.
% ; see Fig.~\ref{fig:sampling}(a).
Clearly, the \emph{DoF loss} is equal to
\begin{align}   \label{DoF_loss}
 & 1 - \frac{\dof_{\mathcal{R}}}{\dof_{\mathcal{D}}} 
 = 1- \frac{m(\mathcal{D})}{m(\mathcal{R})} = 1- \frac{\pi}{4} \approx 21.5\%.
\end{align}
Hence, roughly one-fifth of the DoF generated by the scatterers are practically lost during transfer of information, when evanescent waves included in $\mathcal{R}$ are discarded.

%\begin{equation} \label{dof_D_nyquist}
%\mu_\mathcal{D} \ne \dof_\mathcal{D}/L^2 = \left(\frac{\pi}{\lambda}\right)^2
%\end{equation}

%In practice, due to the scattering mechanism, $e(\vect{r}) \in\mathcal{B}_\mathcal{K}$. 
%Solving the Nyquist sampling problem exactly for any $\mathcal{K}$ of arbitrary shape is hard without recurring to an embedding, which is true even under isotropic propagation; see Fig.~\ref{fig:circle_packing}. Consequently, the shifted-versions of $f_\mathcal{K}(\vect{r})$ in~\eqref{cardinal_series_spatial} are not orthogonal in general.
%A possible orthogonal basis set of functions over $\vect{r}\in\mathcal{A}$ is represented by the 2D Fourier harmonics with fundamental frequency of $2\pi/L$ rad/m. This yields the Fourier series expansion, which provides an exact representation of $e(\vect{r})$ asymptotically as ${L/\lambda \to \infty}$.

%We rescale the  by L/2π so that samples are now separated by a unit distance; that is, they form a 2D lattice. By doing so, the problem reduces to count the number of lattice points within a centered disk of radius κL/2π, i.e., a Gauss circle problem [29]. The solution is approximately given by [29, Eq. (7)]
Building on the above results, under non isotropic scattering,
\begin{equation} \label{DoF_K}
\dof_\mathcal{K} = \frac{m(\mathcal{A}) m(\mathcal{K})}{(2 \pi)^2}.
\end{equation}
With an elliptical embedding,~\eqref{DoF_K} reduces to
%The area reduction incurred from $\mathcal{D}$ to $\mathcal{K}$ inevitably entails a decrease of the number of DoF that can be extracted via $e(\vect{r})$. Indeed, the DoF carried by an electromagnetic field under non-isotropic propagation are upper bounded by those of an isotropic scenario.
%For an elliptical embedding of $\mathcal{K}$ into $\mathcal{E}$, this DoF reduction can be approximately be quantified as 
\begin{align} \label{DoF_ell}
\dof_{\mathcal{E}} =  \frac{m(\mathcal{A}) m(\mathcal{E})}{(2 \pi)^2} 
%&\mathop{=}^{(a)} \sqrt{d_1 d_2}\left\lceil \frac{m(\mathcal{A}) m(\mathcal{D})}{(2 \pi)^2} \right\rceil\\&
=\sqrt{d_1} \sqrt{d_2} \, \dof_{\mathcal{D}}
\end{align}
which is due to the affine mapping ${\varphi : \mathcal{D} \to \mathcal{E}}$ yielding the following relation in terms of Lebesgue measures: %the Lebesgue measure of the image set $\mathcal{E}$ is found by multiplying the measure of $\mathcal{D}$ by the determinant of the transformation matrix $\vect{G}^{1/2}$ as 
\begin{align}
m(\mathcal{E})  = m(\mathcal{D}) |\det(\vect{G}^{1/2})|  = m(\mathcal{D}) \sqrt{d_1} \sqrt{d_2}.
\end{align}
Since $0\le d_1,d_2\le 1$, it thus follows that $\dof_{\mathcal{E}} \le \dof_{\mathcal{D}}$, as it should be (see, e.g.,\cite{PoonDoF}). In fact, non-isotropic scattering implies a reduced dimensionality of the function space spanned by the field, which translates into a lower number of DoF. This is the reason why a lower Nyquist sampling rate is required with non-isotropic scattering; see~\eqref{density_sampling_ell} and \eqref{spatial_efficiency_ellipse}. For the non-isotropic scenario, the relationship between DoF and Nyquist density is as tight as the embedding of $\mathcal{K}$ with $\mathcal{E}$.

\section{Stochastic propagation} \label{sec:stationary}

So far, we have considered a deterministic configuration of scatterers that is known a priori.
However, this assumption requires an accurate predictions of wave propagation, which relies on numerical electromagnetic solvers of Maxwell's equations, and hence, it is too site-specific \cite{PizzoTWC21}.
A convenient way out to embrace different propagation conditions is by modeling $e(\vect{r})$ as a circularly symmetric complex Gaussian stationary random field \cite{MarzettaISIT,PizzoJSAC20,PizzoIT21,MarzettaNokia}. 
Given this, we next show how the results derived in Section~\ref{sec:Nyquist_sampling} and Section~\ref{sec:DoF} for a deterministic field can be extended to ensembles of random electromagnetic fields.
%A field of this sort is representative of wave propagation into hypothetically different environments.
%so that a second-order description fully describes the field statistically

\subsection{Spectral Characterization} \label{sec:wave_propagation}  

%In the wavenumber domain, a 3D spatial Fourier transform of \eqref{Helmholtz_corr} yields the spherical constraint $\|\vect{k}\|^2 + k_z^2 = \kappa^2$ on the power spectral density 
%\begin{equation}
%S(\vect{k},k_z) = \int_{\Real^2} \int_{-\infty}^{\infty} c(\vect{r},z) e^{-\imagunit (\vect{k}^{\Ttran} \vect{r} + k_z z)} \, d\vect{r} dz
%\end{equation}
%of $e(\vect{r},z)$.
%Thus, the most general form of $S(\vect{k},k_z)$ is \cite{MarzettaISIT,PizzoJSAC20}:
%\begin{equation} \label{3Dpower_spectral_density}
%S(\vect{k},k_z) = A^2(\vect{k},k_z) \delta(\|\vect{k}\|^2 + k_z^2 - \kappa^2)
%\end{equation}
%for any arbitrary real-valued non-negative function $A(\vect{k},k_z)$, called spectral factor. After computing the inverse Fourier transform of~\eqref{3Dpower_spectral_density} over $k_z$ and embedding all constants into $A(\vect{k},k_z)$ we obtain \cite{PizzoJSAC20}
%\begin{equation} \label{3Dpower_spectral_density_z}
%S(\vect{k},z) = \frac{A^2(\vect{k})}{k_z(\vect{k})}  e^{\imagunit k_z(\vect{k}) z}
%\end{equation}
%where we define $A(\vect{k}) = A(\vect{k},k_z(\vect{k}))$.
%The aucorrelation function $c(\vect{r},z)$ is obtained by taking the 2D inverse spatial Fourier transform of \eqref{3Dpower_spectral_density_z} over $\vect{k}$ as

 \begin{figure*}[t!]
 \begin{subfigure}{0.33\textwidth}
  \includegraphics[width=\linewidth]{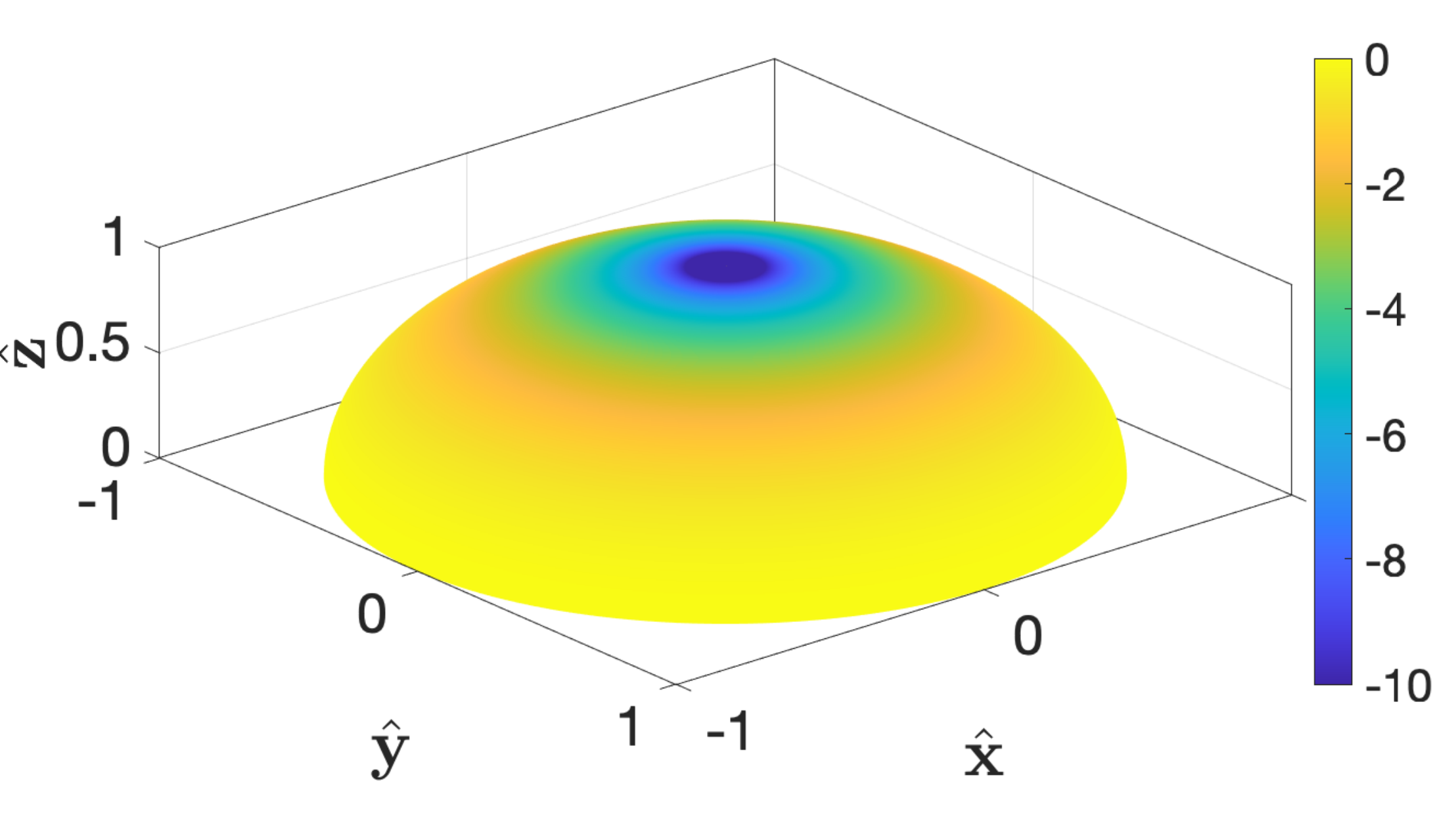}
  \caption{Isotropic}\label{fig:Sphe_iso}
\end{subfigure}\hfill
\begin{subfigure}{0.33\textwidth}
  \includegraphics[width=\linewidth]{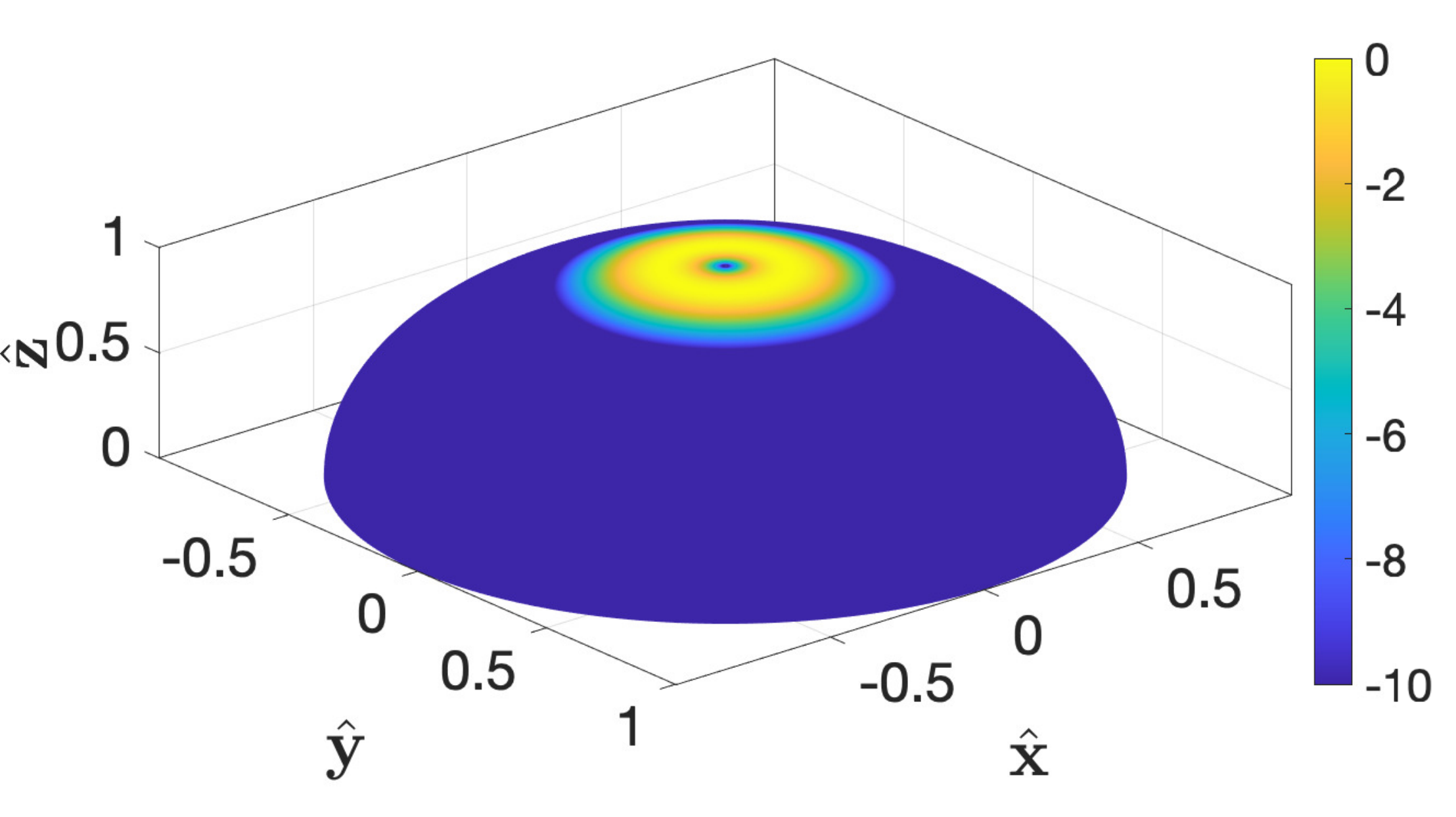}
  \caption{Non-isotropic}\label{fig:Sphe_noniso}
\end{subfigure}\hfill
\begin{subfigure}{0.33\textwidth}%
  \includegraphics[width=\linewidth]{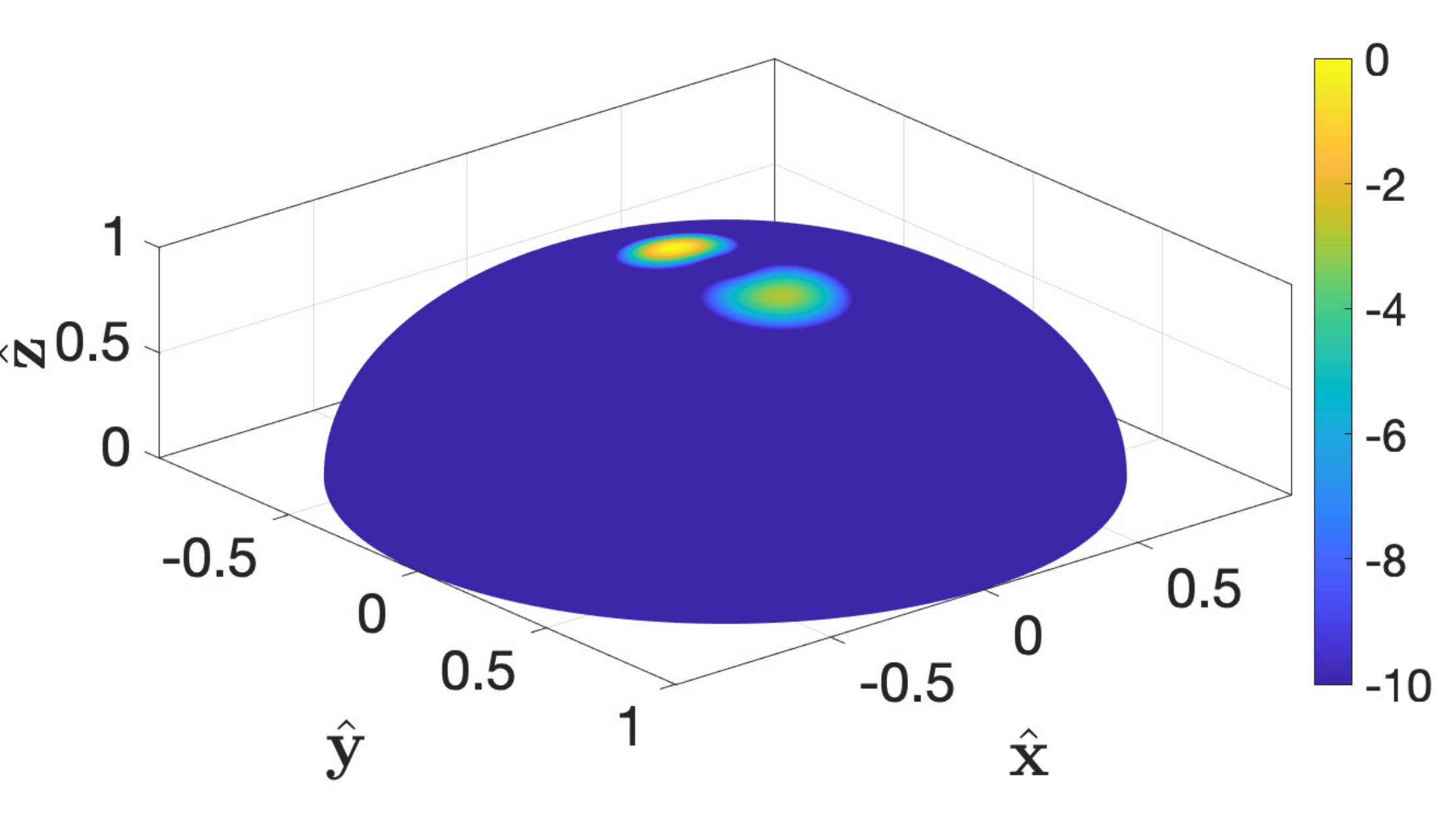}
  \caption{Non-isotropic mixture}\label{fig:Sphe_noniso_cluster}
\end{subfigure}
\begin{subfigure}{0.33\textwidth}
  \includegraphics[width=\linewidth]{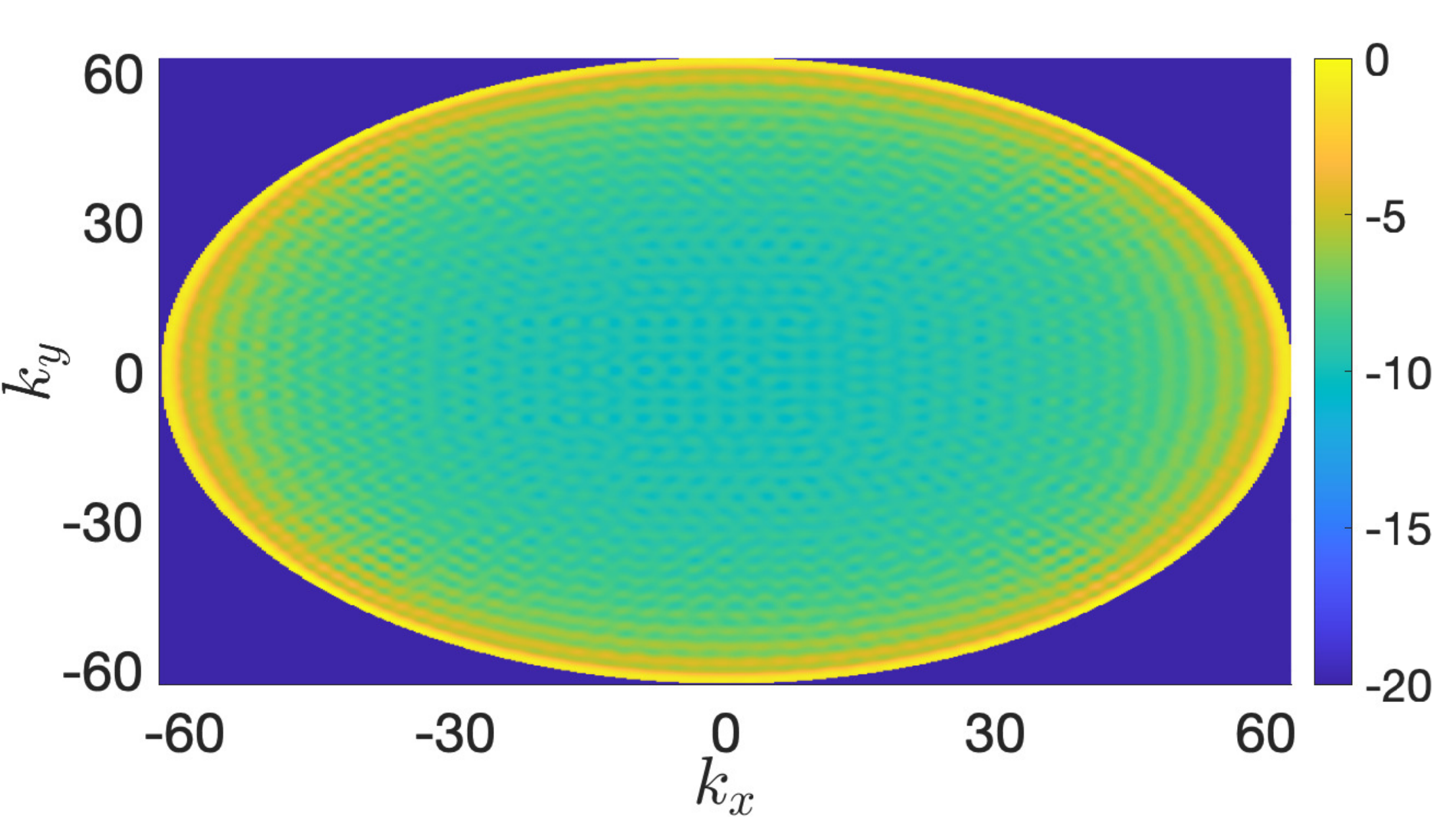}
  \caption{Isotropic}\label{fig:Wave_iso}
\end{subfigure}\hfill
\begin{subfigure}{0.33\textwidth}
  \includegraphics[width=\linewidth]{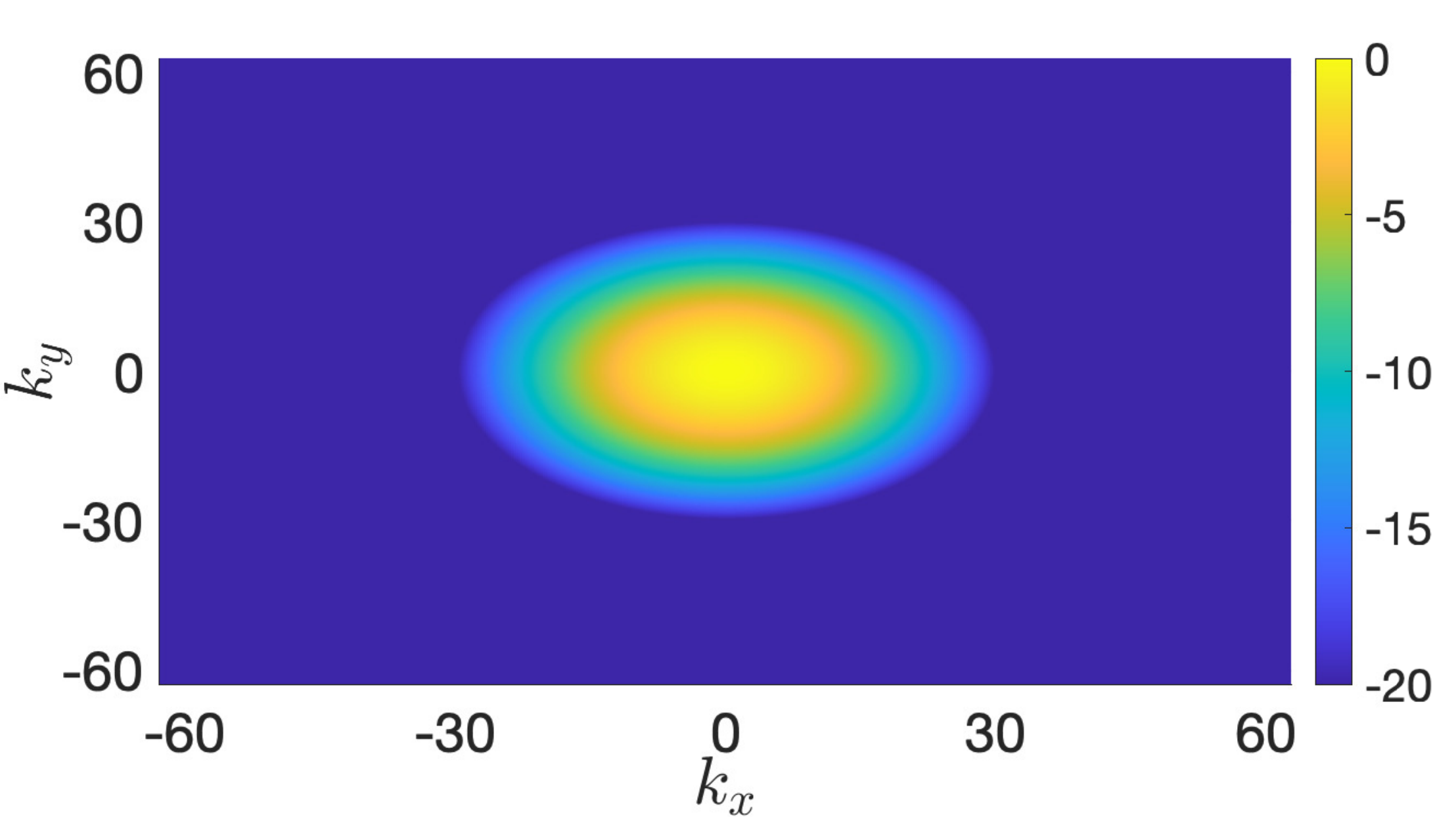}
  \caption{Non-isotropic}\label{fig:Wave_noniso}
\end{subfigure}\hfill
\begin{subfigure}{0.33\textwidth}%
  \includegraphics[width=\linewidth]{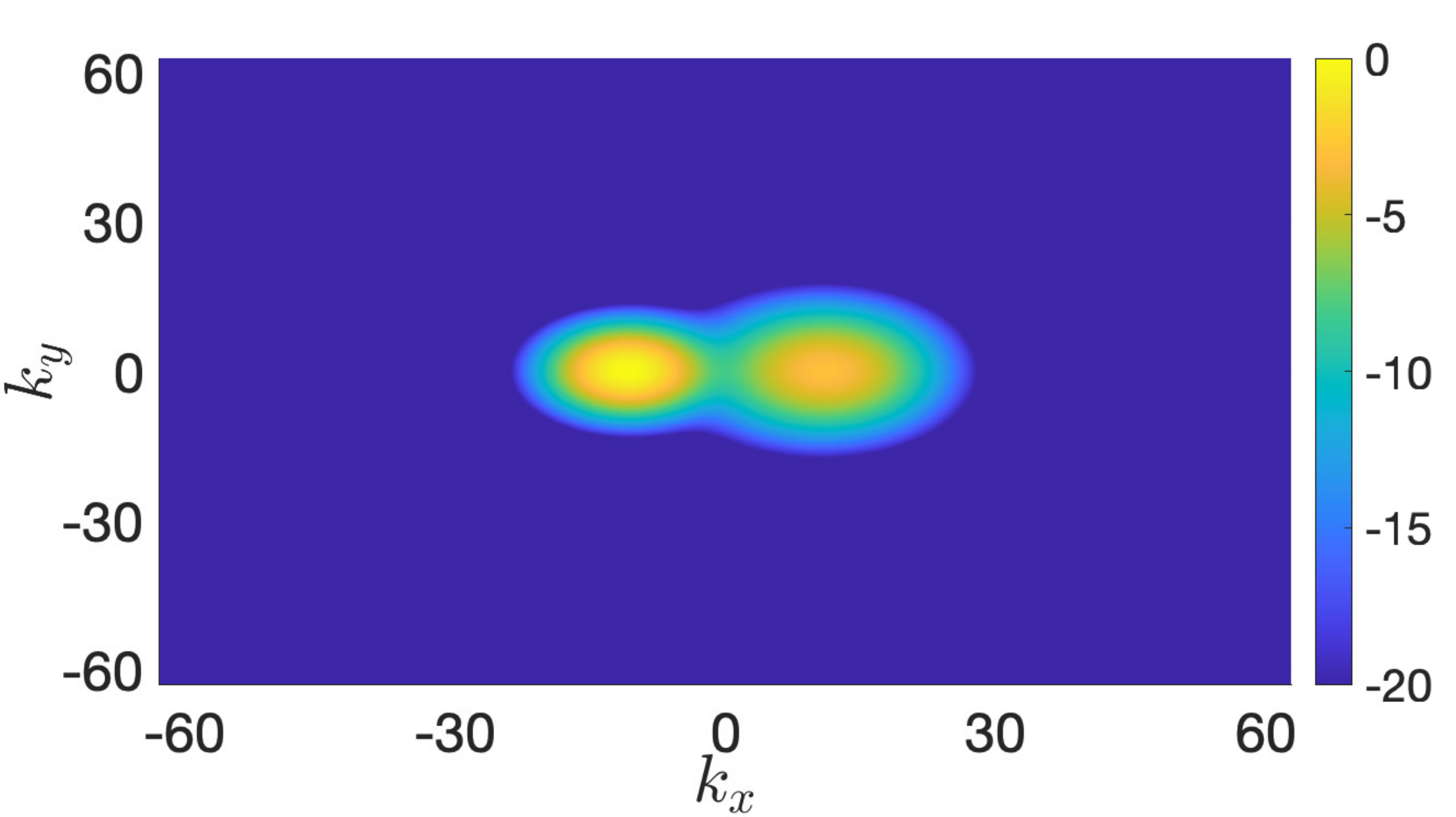}
  \caption{Non-isotropic mixture}\label{fig:Wave_noniso_cluster}
\end{subfigure}
\caption{{\small Scattering under isotropic and non-isotropic conditions. The first row shows the spectral factor $\sin(\theta) A(\theta,\phi)$ in~\eqref{mixture_pdf} with $(\theta,\phi) \in [0,\pi/2] \times [0,2\pi)$. The second row shows the PSD $S(\vect{k})$ in~\eqref{psd}.}}%with centered single cluster, and non-isotropic mixture with two clusters}
\label{fig:Spectrum}
\end{figure*}  

 \begin{figure*}[t!]
\begin{subfigure}{0.33\textwidth}
\vspace{-.3cm}
  \includegraphics[width=\linewidth]{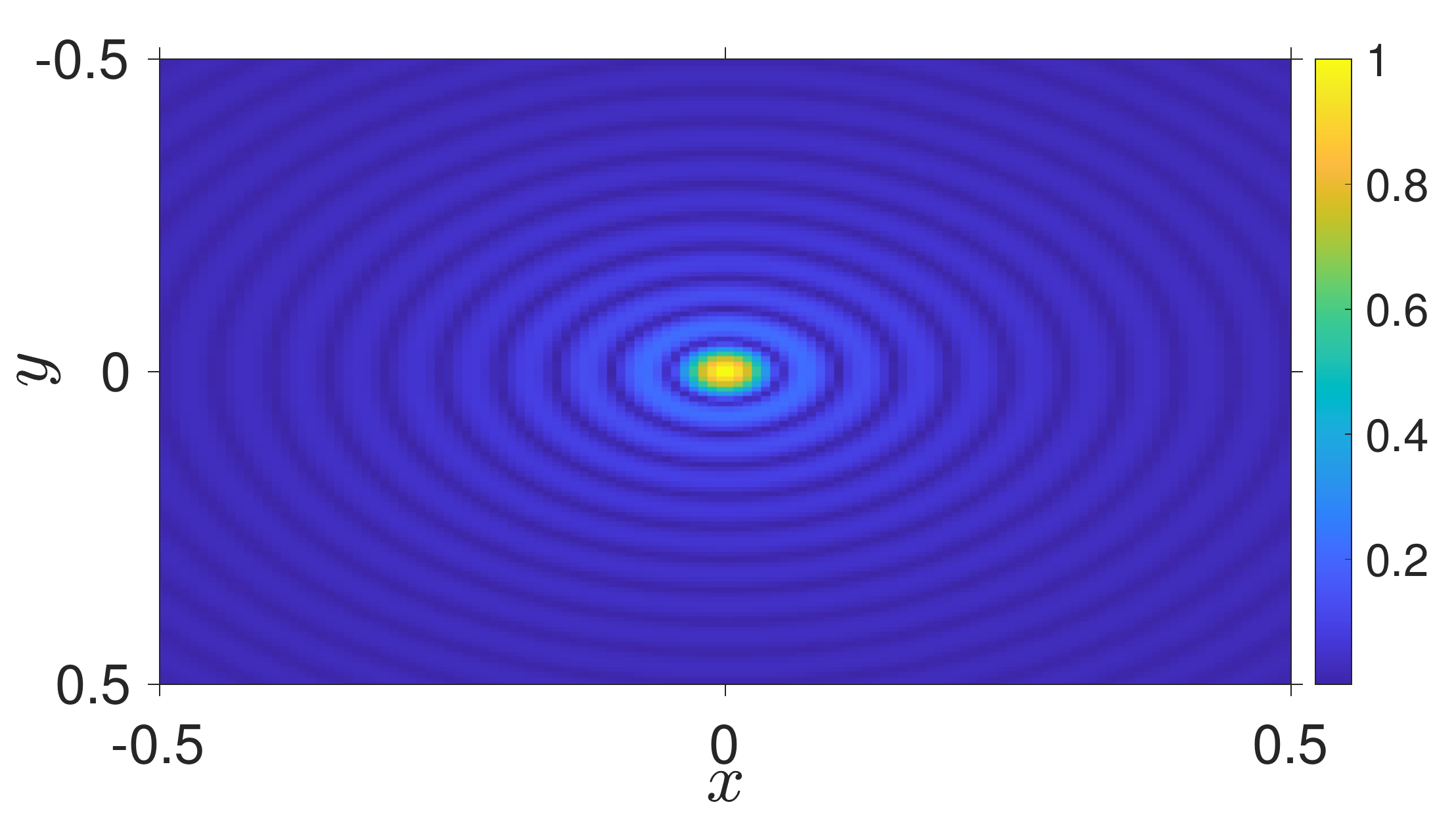}
  \caption{Isotropic}\label{fig:Corr_iso}
\end{subfigure}\hfill
\begin{subfigure}{0.33\textwidth}
\vspace{-.3cm}
  \includegraphics[width=\linewidth]{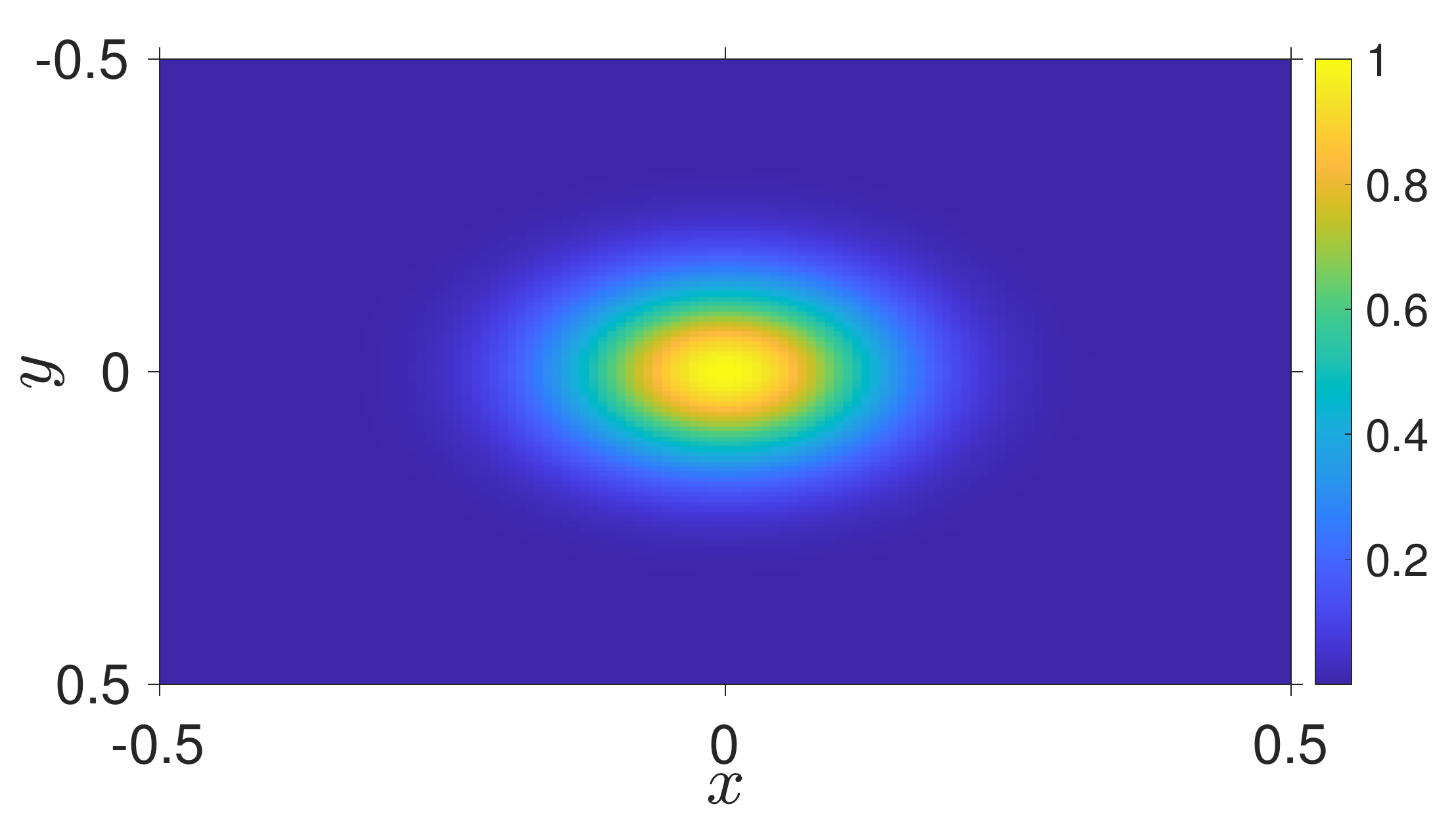}
  \caption{Non-isotropic}\label{fig:Corr_noniso}
\end{subfigure}\hfill
\begin{subfigure}{0.33\textwidth}%
\vspace{-.3cm}
  \includegraphics[width=\linewidth]{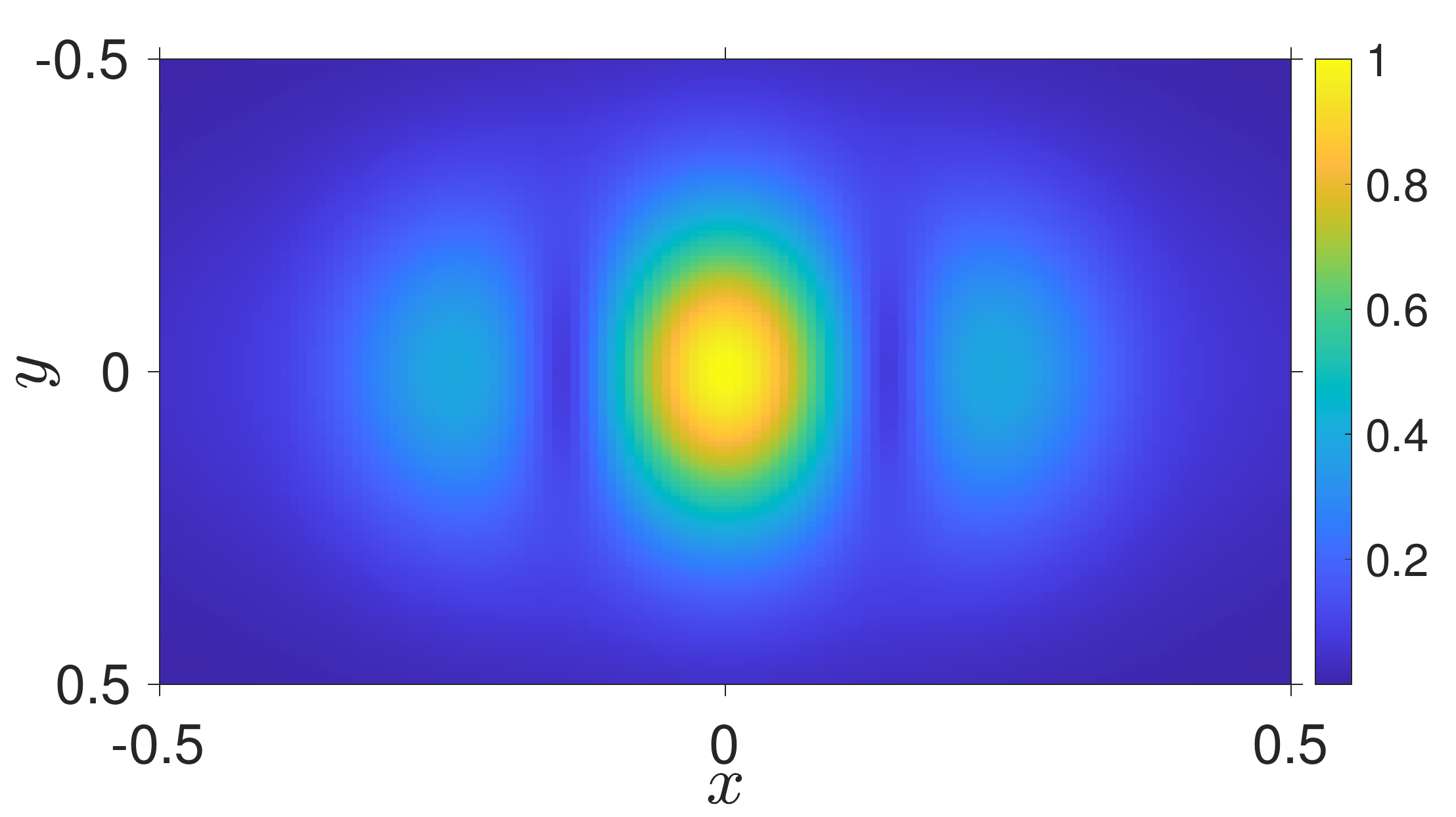}
  \caption{Non-isotropic mixture}\label{fig:Corr_noniso_cluster}
\end{subfigure}
\caption{{\small ACF $c(\vect{r})$ in~\eqref{2Dautocorr_spherical} under the scattering conditions of Fig.~\ref{fig:Spectrum}.}}
\label{fig:VMF} \vspace{-.3cm}
\end{figure*}

The autocorrelation function (ACF) $c(\vect{r}) = \Ex\{e(\vect{r}^\prime) e^*(\vect{r}^\prime+\vect{r})\}$ of $e(\vect{r})$ obeys the Helmholtz equation in~\eqref{Helmholtz} \cite[Eq.~(4)]{MarzettaISIT}: 
\begin{equation} \label{Helmholtz_corr}
\nabla^2 c(\vect{r}) + \kappa^2 c(\vect{r}) = 0.
\end{equation}
Here, the dependence on $z$ is removed as it is immaterial to the statistics of the field.
An analogue procedure developed in Section~\ref{sec:electromagnetic} for a deterministic $e(\vect{r})$ yields \cite{PizzoJSAC20,MarzettaISIT}:
 \begin{equation} \label{3Dautocorr_z}
%c(\vect{r},z) =  \int_{\Real^2} S(\vect{k})  e^{\imagunit k_z(\vect{k}) z} e^{\imagunit \vect{k}^{\Ttran} \vect{r}} \, d\vect{k}
c(\vect{r}) = \frac{1}{(2\pi)^2} \int_{\Real^2} S(\vect{k})  e^{\imagunit \vect{k}^{\Ttran} \vect{r}} \, d\vect{k}
\end{equation}
where the power spectral density (PSD) of $e(\vect{r})$ is defined as
\begin{equation} \label{psd}
S(\vect{k}) = \frac{A^2(\vect{k})}{k_z(\vect{k})} \mathbbm{1}_{\mathcal{D}}(\vect{k})
\end{equation}
with $A(\vect{k})$ being an arbitrary non-negative function, called \emph{spectral factor}.
Compared to~\eqref{Helmholtz_Fourier_general_solution}, the integration region is restricted to $\vect{k}\in\mathcal{D}$ as $k_z(\vect{k})$ must be real-valued in order for $c(\vect{r})$ to satisfy the standard Hermitian-symmetry property.

The random field $e(\vect{r})$ is obtained by expanding a white-noise complex field $W(\vect{k})$ with unit variance over the same 2D Fourier basis in~\eqref{3Dautocorr_z}, which yields the \emph{Fourier spectral representation} of $e(\vect{r})$ \cite[Sec.~III.C]{PizzoJSAC20}:
\begin{equation} \label{spectral_representation_z0}
%e(\vect{r}) = \frac{1}{2\pi} \int_{\Real^2}  {\color{blue}\underbrace{S^{1/2}(\vect{k}) W(\vect{k})}_{E(\vect{k})}} e^{\imagunit \vect{k}^{\Ttran} \vect{r}} \, d\vect{k}.
e(\vect{r}) = \frac{1}{2\pi} \int_{\Real^2}  {S^{1/2}(\vect{k}) W(\vect{k})} e^{\imagunit \vect{k}^{\Ttran} \vect{r}} \, d\vect{k}
\end{equation}
%\begin{align} \label{field_spectral}
%e(\vect{r},z) & =  \int_{\mathcal{D}} \frac{A(\vect{k}) W(\vect{k})}{\sqrt{k_z(\vect{k})}}  e^{\imagunit k_z(\vect{k}) z} e^{\imagunit \vect{k}^{\Ttran} \vect{r}} \, d\vect{k}. 
%\end{align}  
%If we remove the phase-shift -- the statistics of $W(\vect{k}) e^{\imagunit k_z(\vect{k}) z}$ do not change along $z$ for any $\vect{k}\in\mathcal{D}$ -- and omit the dependance on $z$ we obtain the 2D \emph{Fourier spectral representation} 
where equality must be understood in the MSE sense for any field with finite average  energy~\cite{PizzoJSAC20}
\begin{align} \label{power}
\sigma^2 & = \int_{\Real^2} \Ex\{|{e}(\vect{r})|^2\} \, d\vect{r} = \frac{1}{2\pi} \int_{\Real^2}  S(\vect{k}) \, d\vect{k}.
\end{align}
%with singularly-integrable power spectral density
%The field's power is given by
%\begin{equation} \label{power}
%\sigma^2 = c(\vect{0},0) = \frac{1}{(2\pi)^2} \int_{\Real^2}  S(\vect{k}) \, d\vect{k}.
%\end{equation}
%\begin{align} \label{2Dautocorr_psd}
%c(\vect{r}) & 
%%=  \int_{\mathcal{D}} \frac{A^2(\vect{k})}{k_z(\vect{k})}  e^{\imagunit \vect{k}^{\Ttran} \vect{r}} \, d\vect{k} \\ \label{2Dautocorr_psd}
% = \frac{1}{2\pi} \int_{\Real^2}  S(\vect{k}) e^{\imagunit \vect{k}^{\Ttran} \vect{r}} \, d\vect{k}.
%\end{align}
%
%By reintroducing the $z$-dependance in~\eqref{spectral_representation_z0}, we can interpret $e(\vect{r},z)$ as physically generated by an integral superposition of \emph{propagating} {plane-waves} impinging on the point $(\vect{r},z)$ for any $z\ge 0$, each one of which has circularly-symmetric complex-Gaussian amplitude that is statistically independent from one direction to another.
We can interpret~\eqref{spectral_representation_z0} as a 2D inverse Fourier transform of a wavenumber spectrum $E(\vect{k}) = {S^{1/2}(\vect{k}) W(\vect{k})}$ consisting of a continuum of circularly symmetric complex Gaussian and statistically independent coefficients~\cite{PizzoJSAC20,MarzettaISIT}.

\subsection{DoF of Electromagnetic Random Fields} \label{sec:DoF_statistic}

The computation of the DoF of a stationary random process requires to identify  a basis set of functions that yields uncorrelated coefficients -- also statistically independent for a jointly Gaussian process. This is at the basis of the Karhunen-Loeve series expansion \cite[Sec.~3.3.2]{VanTreesBook}.
Finding this basis set is hard in practice, as an explicit solution is only available for a few cases. As an example, for a stationary random process $e(t)$ with a constant and bandlimited spectrum, it can be found by solving the Slepian's concentration problem \cite{Landau} \cite[Sec.~2]{FranceschettiBook}. 
The same theory extends to a stationary random field $e(\vect{r})$ of constant and bandlimited spectrum~\cite{Franceschetti},\cite[Sec.~3]{FranceschettiBook}.

A way to overcome the difficulty in finding a Karhunen-Loeve series expansion is to operate in the \emph{asymptotic regime}.
Particularly, as the observation interval $T$ of a stationary process $e(t)$ becomes large, for a given bandwidth $\Omega$, i.e., $\Omega T \gg 1$, the Karhunen-Loeve series expansion tends to a Fourier series expansion with statistically independent coefficients, whose variances are obtained by sampling the PSD of $e(t)$ \cite[Sec.~3.4.6]{VanTreesBook}. 
The translation of this result to the spatial domain is ensured by a time-frequency and space-wavenumber duality \cite{PizzoIT21}\cite[Sec.~V]{PizzoJSAC20}.

As the size $L$ of a squared observation region increases, for a given wavelength $\lambda$, i.e., $L/\lambda \gg 1$, a 2D Fourier series with fundamental frequency of $2\pi/L$ rad/m may be defined, which enjoys the same statistical independence properties \cite{PizzoSPAWC20}. As $L/\lambda \to \infty$, series are replaced by integrals like the one in \eqref{spectral_representation_z0} with continuous spectrum $E(\vect{k})$ replacing Fourier coefficients.
%Its support determines the Nyquist sampling and field reconstruction through the spectral factor modeling the scattering selectivity. For example, under isotropic scattering, $E(\vect{k})$ has support given by $\mathcal{D}$.
The DoF are obtainable by counting the wavenumber lattice samples taken at $2\pi/L$ rad/m apart within the spectral support of $E(\vect{k})$. Under isotropic scattering, with spectral support given by the disk $\mathcal{D}$ of radius $\kappa=2\pi/\lambda$, this coincides to the Gauss circle problem \cite{Wolfram}.

%can be extended to every ensemble of a random electromagnetic field $e({\bf r})$ after replacing the field's spectrum with its power spectral density and must be interpreted on average. 

%\subsection{Reconstruction of a Random Electromagnetic Field}
%
%%Optimal reconstruction of $e(\vect{r})$ from its spatial samples can be regarded as a continuous-space estimation problem \cite{Agrell2004}. %under arbitrary scattering conditions
%
%Samples of the field are obtained by solving the Nyquist sampling problem, as done in Section~\ref{sec:Nyquist_sampling} for a deterministic field, while replacing the field's spectrum with its power spectral density.

\section{Numerical Analysis} \label{sec:numerical}
 Numerical results are provided next to validate the above results on Nyquist sampling and field's reconstruction. 
Simulations are carried out for a stationary electromagnetic random field under isotropic and non-isotropic scattering conditions, which are analytically modeled next. 
%the same scattering conditions illustrated in Fig.~\ref{fig:Spectrum}.
%[TWO SOURCES OF ERROR IN NON-ASYMPTOTIC REGIME: CORRELATED SAMPLES AND FINITE NUMBER OF SAMPLES. BOTH VANISHES AS L/LAMBDA GOES TO INF.]

\subsection{Isotropic and Non-isotropic Scattering}
In elevation and azimuth angles,~\eqref{3Dautocorr_z} can be rewritten as
\begin{align}  
c(\vect{r}) & \mathop{=}^{(a)}  \int_0^{\frac{\pi}{2}} \!\!\!\int_0^{2 \pi} \!\!\!A^2(\theta,\phi) e^{\imagunit \vect{k}^{\Ttran}(\theta,\phi) \vect{r}} \sin(\theta) d\theta d\phi  \\  \label{2Dautocorr_spherical}
& \mathop{=}^{(b)} \int_0^{\frac{\pi}{2}} \!\!\!\int_0^{2 \pi} \!\!\!A^2(\theta,\phi) e^{\imagunit \kappa \sin(\theta) (\cos(\phi) x + \sin(\phi) y)} \sin(\theta) d\theta d\phi 
\end{align}
where $(a)$ is due to a change of variables with Jacobian $d\vect{k}/k_z(\vect{k}) \propto \sin(\theta) d\theta d\phi$ and $(b)$ is obtained by substituting~\eqref{wavevector} \cite{PizzoIT21}. Here, constant terms due to change of variables are embedded into $A^2(\theta,\phi)$.

An isotropic scattering scenario is modeled as $A(\theta,\phi) = 1$ and plotted (inclusive of the Jacobian $\sin(\theta)$) in Fig.~\ref{fig:Sphe_iso} on the upper hemisphere $(\theta,\phi) \in [0,\pi/2] \times [0,2\pi)$.
In the wavenumber domain, we have that $A(\vect{k}) = 1$ so that the PSD in~\eqref{psd} is given by $S(\vect{k}) = \mathbbm{1}_{\mathcal{D}}(\vect{k})/k_z(\vect{k})$, as we expected from the deterministic case in Fig.~\ref{fig:circle_packing}. This is plotted in Fig.~\ref{fig:Wave_iso}.
%Here, $e(\vect{r})$ has a spectrum that is non-zero over $\vect{k}\in\mathcal{D}$, 
The isotropic ACF is shown in Fig.~\ref{fig:Corr_iso}. This is computed from~\eqref{2Dautocorr_spherical} by setting $x=\|\vect{r}\|$ and $y=0$ (due to polar symmetry), 
\begin{align} \label{clarke_acf}
c(\vect{r})
%& = \iint_{[0,\pi/2] \times [0,2\pi)} \hspace{-1.6cm} e^{\imagunit \kappa \|\vect{r}\| \sin(\theta) \cos(\phi)} \sin(\theta) d\theta d\phi 
= \sinc(2 {\|\vect{r}\|}/{\lambda})
\end{align}
and coincides to the Clarke's model~\cite[Sec.~2.4]{PaulrajBook}.
%\footnote{A different proof of~\eqref{clarke_acf} is available in \cite[App.~III]{PizzoJSAC20} that leads to the same result. }

Non-isotropic scattering can be considered alike by specifying a non-uniform spectral factor $A(\theta,\phi)$.
In this case, $A(\vect{k})$ (thus $S(\vect{k})$) will have arbitrary shape, as first assumed in the deterministic settings in Fig.~\ref{fig:non_isotropic_spectrum}. A clustered propagation environment is modeled as a mixture of spectral factors \cite{PizzoIT21}
\begin{equation} \label{mixture_pdf}
A^2(\theta,\phi) = \sum_{i=1}^{N_{\rm c}}  w_i \, A^2_{i}(\theta,\phi)
\end{equation}
weighted by $w_i\ge 0$ whose sums yields $1$. Here, $A^2_{i}(\theta,\phi)$ is representative of the $i$-th cluster. 
%of 3D von Mises-Fisher (vMF) family of density distributions 
%\begin{equation} \label{VMF_pdf_2}
%A^2_{i}(\theta,\phi) = c(\alpha_{i}) e^{\alpha_{i} \big(\sin(\theta) \sin({\theta_{r,i}}) \cos(\phi-{\phi_{r,i}}) + \cos(\theta) \cos({\theta_{r,i}})\big)}
%\end{equation}
%with normalization constant $c(\alpha_{i}) = {\alpha_{i}}/{(4\pi \sinh(\alpha_{i}))}$ 
A simple way to model $A^2_{i}(\theta,\phi)$ is by using the 3D von Mises-Fisher distribution \cite{PizzoIT21}, parametrized by a modal direction 
\begin{equation} \label{modal_direction}
\hat{\boldsymbol{\xi}}_{i} = 
\begin{pmatrix}
\sin({\theta_{r,i}}) \cos({\phi_{r,i}}) \\
\sin({\theta_{r,i}}) \sin({\phi_{r,i}}) \\
\cos({\theta_{r,i}})
\end{pmatrix}
\end{equation}
and an angular concentration ${\alpha_{i} \in [0,\infty)}$ describing the power spread around $\hat{\boldsymbol{\xi}}_{i}$. Clearly, isotropic scattering in Fig.~\ref{fig:Sphe_iso} is obtained from~\eqref{mixture_pdf} by setting ${N_{\rm c}=1}$ and ${\alpha=0}$.
Non-isotropic scattering is considered in Fig.~\ref{fig:Sphe_noniso} and Fig.~\ref{fig:Sphe_noniso_cluster}. 
In these cases, the support of $\mathcal{K}$ is measured by considering the bandwidth at $-20$~dB. We will see that this criterion provides an accurate measure of the field's spectral support.

 \begin{figure*}[th!] 
 \begin{subfigure}{0.33\textwidth}
 \hspace{-.31cm}
  \includegraphics[width=\linewidth]{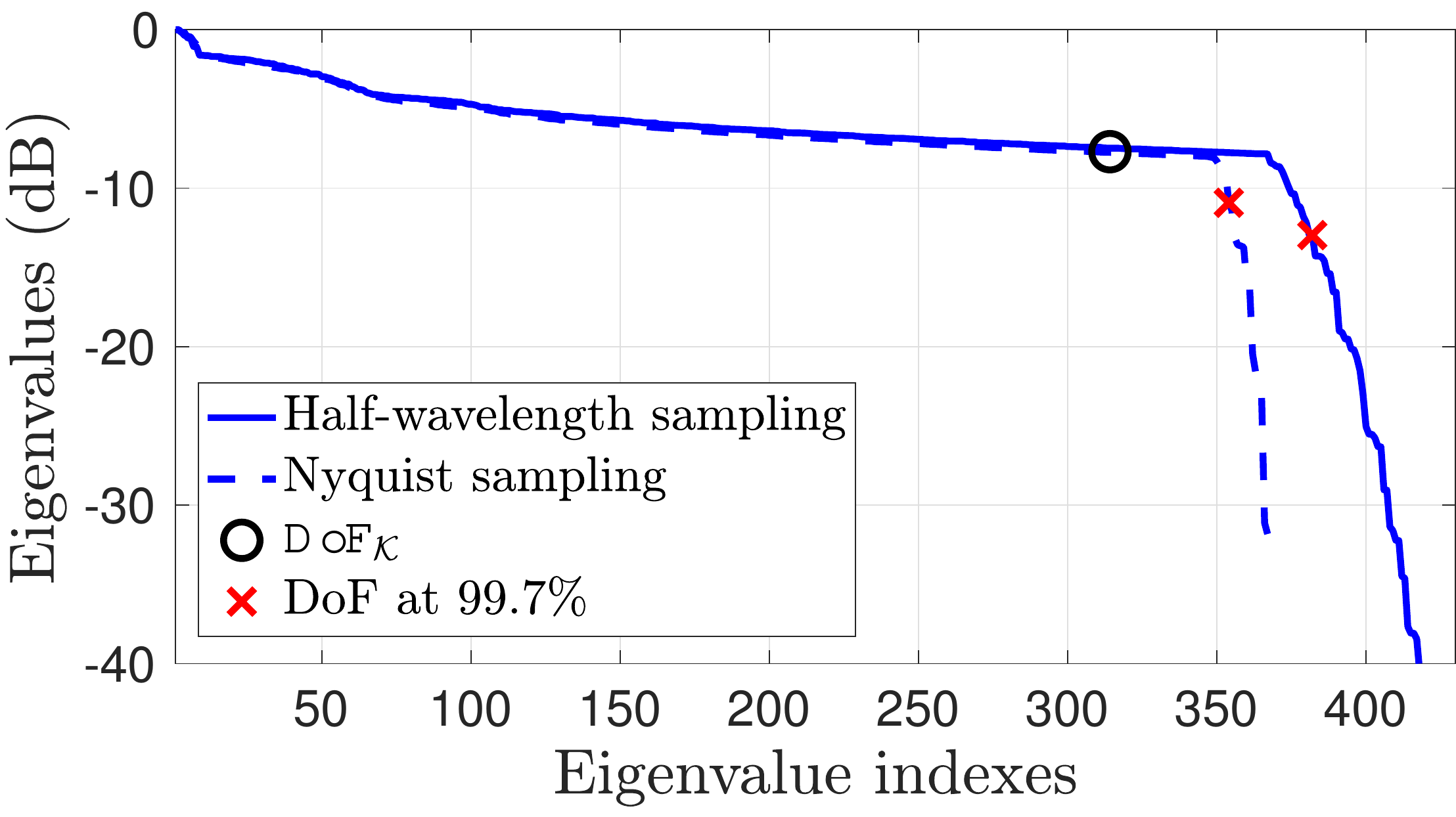}
  \caption{Isotropic}\label{fig:eigenvalues_Iso}
\end{subfigure} 
\begin{subfigure}{0.33\textwidth}
\hspace{-.25cm}
  \includegraphics[width=\linewidth]{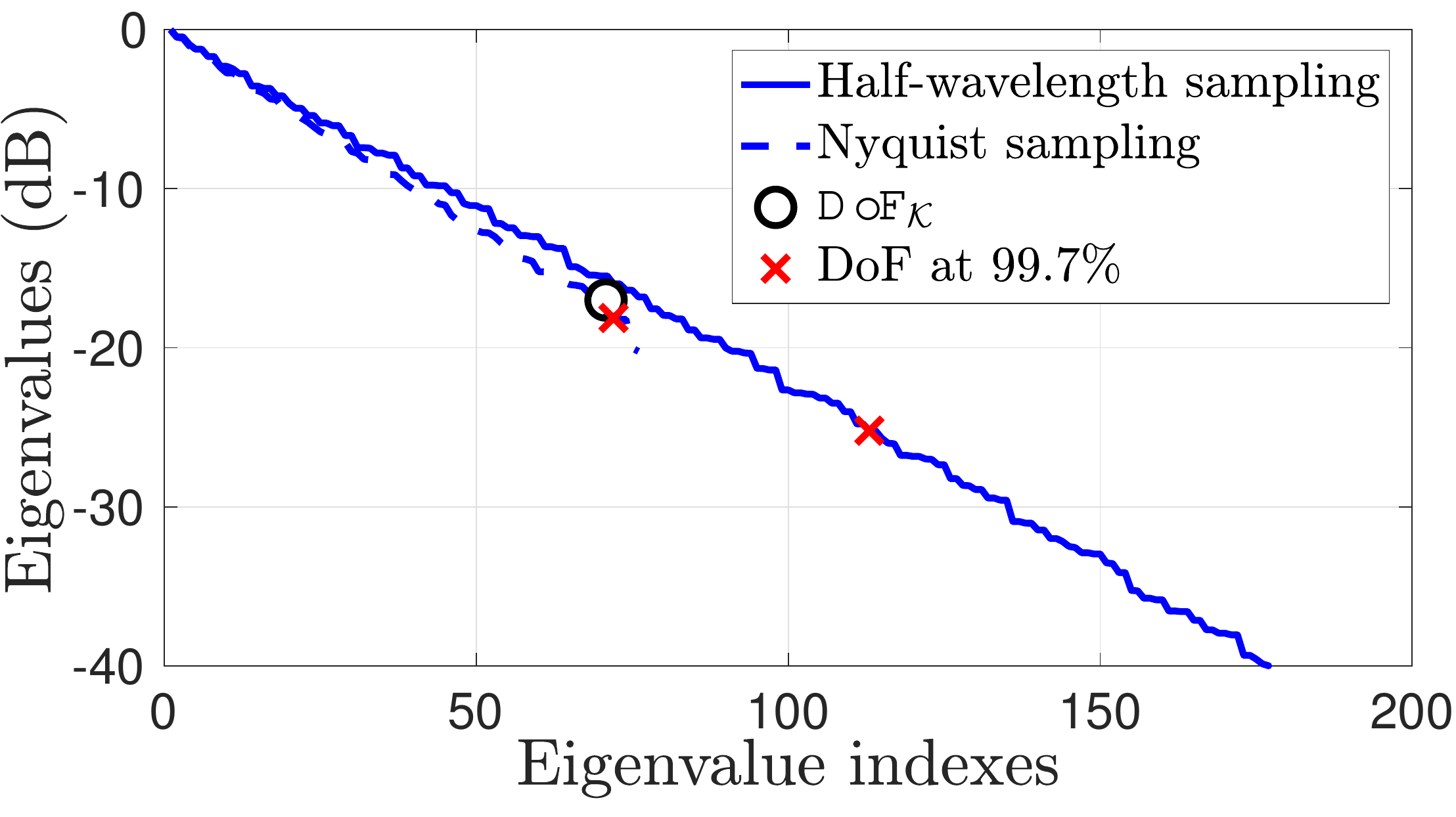}
  \caption{Non-isotropic}\label{fig:eigenvalues_NIsoCPL}
\end{subfigure}
\begin{subfigure}{0.33\textwidth}%
\hspace{-.2cm}
  \includegraphics[width=\linewidth]{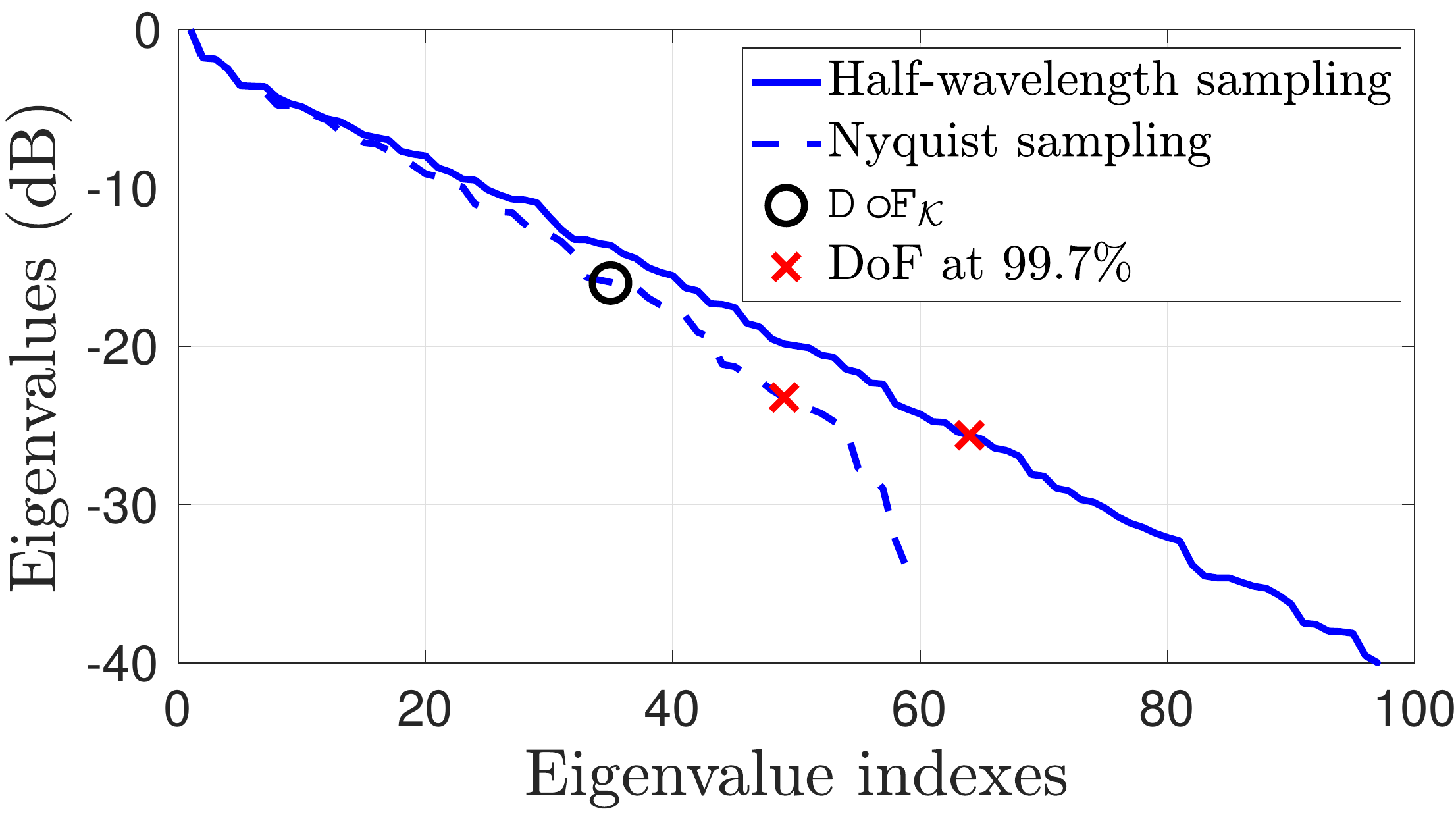}
  \caption{Non-isotropic mixture}\label{fig:eigenvalues_NIsoCluster}
\end{subfigure}
\caption{
{\small Eigenvalues of $\vect{C}$ (in dB) reported in a descending order under the scattering conditions depicted in Fig.~\ref{fig:Spectrum}. Nyquist samples of $e(\vect{r})$ are collected within a squared region $\mathcal{A}$ of side length $L = 10 \lambda$.}}
%The elongated hexagonal sampling is compared to classical $\lambda/2$-rectangular sampling with sampling matrices $\vect{Q}_{\mathcal{E}}^\star$ in~\eqref{sampling_matrix_ellipse} and $\vect{Q}_{\mathcal{R}}^\star$ in~\eqref{rectangular_sampling}, respectively.}%with centered single cluster, and non-isotropic mixture with two clusters}
\label{fig:eigenvalues}
\end{figure*} 

In Fig.~\ref{fig:Sphe_noniso}, we have a single cluster with ${{\theta_r} ={\phi_r} = 0^\circ}$, and $\alpha = 40$.
% such that the normalized circular variance is ${\nu^2=0.05}$. 
The wavenumber domain is showed in Fig.~\ref{fig:Wave_noniso} with circular support $\mathcal{K}$ of approximately $0.47 \kappa$ radius, i.e., ${\sqrt{d_1}=\sqrt{d_2} = 0.47}$, and $k_\phi=0$; see Fig.~\ref{fig:non_isotropic_spectrum}. 
The ACF is reported in~Fig.~\ref{fig:Corr_noniso} and shows a larger coherence area with respect to the isotropic case, due to a higher selectivity of the scattering. 

In general, $\mathcal{K}$ has arbitrary shape -- not necessarily circular -- as first assumed in the deterministic settings in Fig.~\ref{fig:non_isotropic_spectrum}. This is shown in Fig.~\ref{fig:Sphe_noniso_cluster} and Fig.~\ref{fig:Wave_noniso_cluster} in both representation domains for a two-clustered scenario with ${w_1=w_2=1/2}$, ${\theta_{r,1} = 0^\circ}$, ${\theta_{r,2} = 10^\circ}$, ${\phi_{r,1} = 180^\circ}$, ${\phi_{r,2} = 0^\circ}$, $\alpha_1 = 200$ and $\alpha_2 = 100$.
%${\nu^2 = \{0.01, 0.02\}}$ 
We resort to an elliptical embedding with $\sqrt{d_1}=0.5$, $\sqrt{d_2}=0.35$, and $k_\phi=0$. 
The corresponding ACF is given in Fig.~\ref{fig:Corr_noniso_cluster} and shows the impact of asymmetries in the PSD of $e(\vect{r})$.

%The simplest choice is obtained by modeling $A(\theta,\phi)$ as a bounded piecewise constant function over non-overlapped angular sets $\Theta_i$, i.e., $A(\theta,\phi)=\mathbbm{1}_{\Theta}(\theta,\phi)$, with $\Theta = \cup_i \Theta_i$ \cite{PoonDoF}. An $A(\theta,\phi)$ of this sort physically corresponds to having a clustered scattering environment that simply cuts-off some propagation directions without altering their amplitudes. A few examples are given in Fig.~\ref{fig:Spectrum}.
%In particular, in Fig.~\ref{fig:Sphe_noniso} (or Fig.~\ref{fig:Wave_noniso} in the wavenumber domain) a non-isotropic scattering condition with centered single cluster is illustrated with $\Theta = \Theta_1 = \{(\theta,\phi) \in[0,xxx] \times [0,\pi/2]\}$.
%Hence, it is simply obtained by limiting the isotropic spectral factor in elevation angle, which means a smaller angle spread with respect to the isotropic case.
%This reflects to the autocorrelation function $c(\vect{r})$ in~Fig.~\ref{fig:Corr_noniso} that decays to zero less rapidly due to a larger coherence distance, as expected. 
%As seen, the spherical coordinates provide an intuitive physical understanding of the wave propagation phenomena angularly. Despite this useful property, for sampling purposes, we next pursue our analysis in the wavenumber domain instead. This will allow us to exploit the connection to the time-domain and Fourier theory.

%\subsection{Non-Asymptotic Regime and Implementation }
\subsection{Non-Asymptotic Regime}

%Bandlimited fields must be defined everywhere, as dictated by the uncertainty principle \cite[Sec.~2.3]{FranceschettiBook}. 
Electromagnetic fields are physically observable on a spatial region $\mathcal{A} \subset \Real^2$ of finite size, from which only a finite number $N$ of samples is available for field's reconstruction, and given by
\begin{equation} \label{N_antennas}
N = \lceil \dof \rceil = m(\Lambda(\mathcal{A})) % = \lfloor \mu \; m(\mathcal{A})\rfloor
\end{equation}
where $\Lambda(\mathcal{A}) \subseteq \Integer^2$ denotes the truncated 2D lattice used for Nyquist sampling. The reconstruction formula obtained by truncating the 2D cardinal series expansion in~\eqref{cardinal_series_spatial} up to $N$ terms yields
\begin{equation} \label{reconstruction_formula}
\hat{e}(\vect{r}) = \sum_{\vect{n} \in \Lambda(\mathcal{A})} e(\vect{Q}^\star \vect{n}) f_{\mathcal{K}}(\vect{r} - \vect{Q}^\star \vect{n})
\end{equation}
to which is associated a pointwise mean-squared error (MSE)
\begin{equation} \label{error_mse}
\sigma^2_N(\vect{r}) = \Ex\{|{e}(\vect{r}) - \hat{e}(\vect{r})|^2 \}
\end{equation} 
where the expectation is taken with respect to all possible configurations of scatterers.
The error magnitude is inevitably higher at the boundary region of $\mathcal{A}$, due to the truncation and non-orthogonality of the interpolating set of functions in~\eqref{reconstruction_formula} over $\vect{r}\in\mathcal{A}$.
%Errors are due to the non-orthogonality of the interpolating functions in (67) over r ∈ A. These concentrate at the boundaries of A as a consequence of the truncation operation that cuts off the tails of nearby interpolating functions.
%an error field}
%\begin{equation} \label{error}
%e_N(\vect{r}) = {e}(\vect{r}) - \hat{e}(\vect{r})
%\end{equation} 
The accuracy of the reconstruction procedure in a non-asymptotic regime can be measured by the average energy of the error per unit of area \cite{FranceschettiBook}
%\begin{equation} \label{MSE}
%E_N =  \frac{1}{m(\mathcal{A})} \int_{\mathcal{A}} (e_N(\vect{r}))^2 \, d\vect{r}.
%\end{equation}
\begin{equation} \label{mse_average}
\sigma^2_N = \frac{1}{m(\mathcal{A})} \int_{\mathcal{A}} \sigma^2_N(\vect{r}) \, d\vect{r}
\end{equation}
which tends to zero asymptotically as $N\to\infty$ (i.e., $\mathcal{A} = \Real^2$), when \eqref{reconstruction_formula} reduces to \eqref{cardinal_series_spatial}.
Remarkably, under the Nyquist condition, the reconstruction formula~\eqref{reconstruction_formula} with the interpolating function~\eqref{interp_function} provides us with the best linear interpolator of $e(\vect{r})$ in the MSE sense \cite[Sec.~VI]{PETERSEN1962279}\cite[Sec.~III]{Agrell2004}. 
%Other sources of error are introduced by constraints imposed by hardware implementation of the sampling operation.
%Hence, fields are subjected to a low-pass filtering operation prior to sampling that may introduce aliasing, given the impossibility of implementing a filter with infinite cutoff transitions \cite{Unser}. 

Also noteworthy is that the sampling operation of electromagnetic fields undergoes the same practical constraints imposed by hardware implementation issues (as for time-domain signals). 
For example, analog-to-digital converters can only provide a finite precision description of analog samples measured at every points, which inevitably introduces a quantization loss \cite[Sec.~4.8]{OppenheimBook}. Nevertheless, one way to compensate for these losses is to trade-off some efficiency by sampling above Nyquist density \cite{Kumar2011}.

\subsection{Sampling and Reconstruction in Non-Asymptotic Regime}
%with Finite Number of Samples}

\begin{figure} [t!]\vspace{-0.5cm}
        \centering
         \begin{subfigure}[t]{\columnwidth} \centering  
        	\begin{overpic}[width=.999\columnwidth,tics=10]{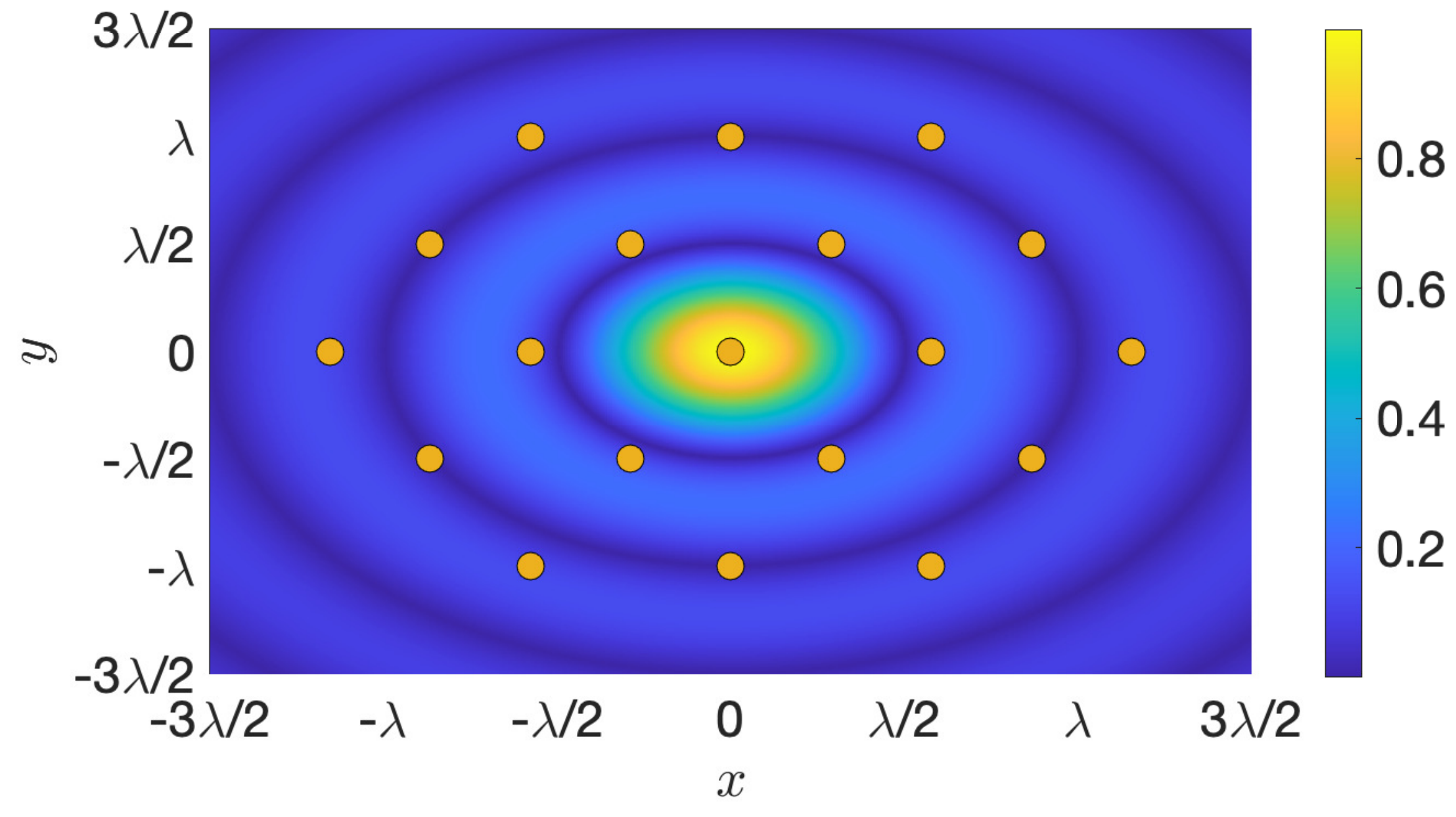}
\end{overpic}  \vspace{-0.2cm}
                \caption{Isotropic scattering.}  \vspace{0.0cm}
                \label{fig:sinc_sampling}
        \end{subfigure}
         \begin{subfigure}[t]{\columnwidth} \centering  
	\begin{overpic}[width=.999\columnwidth,tics=10]{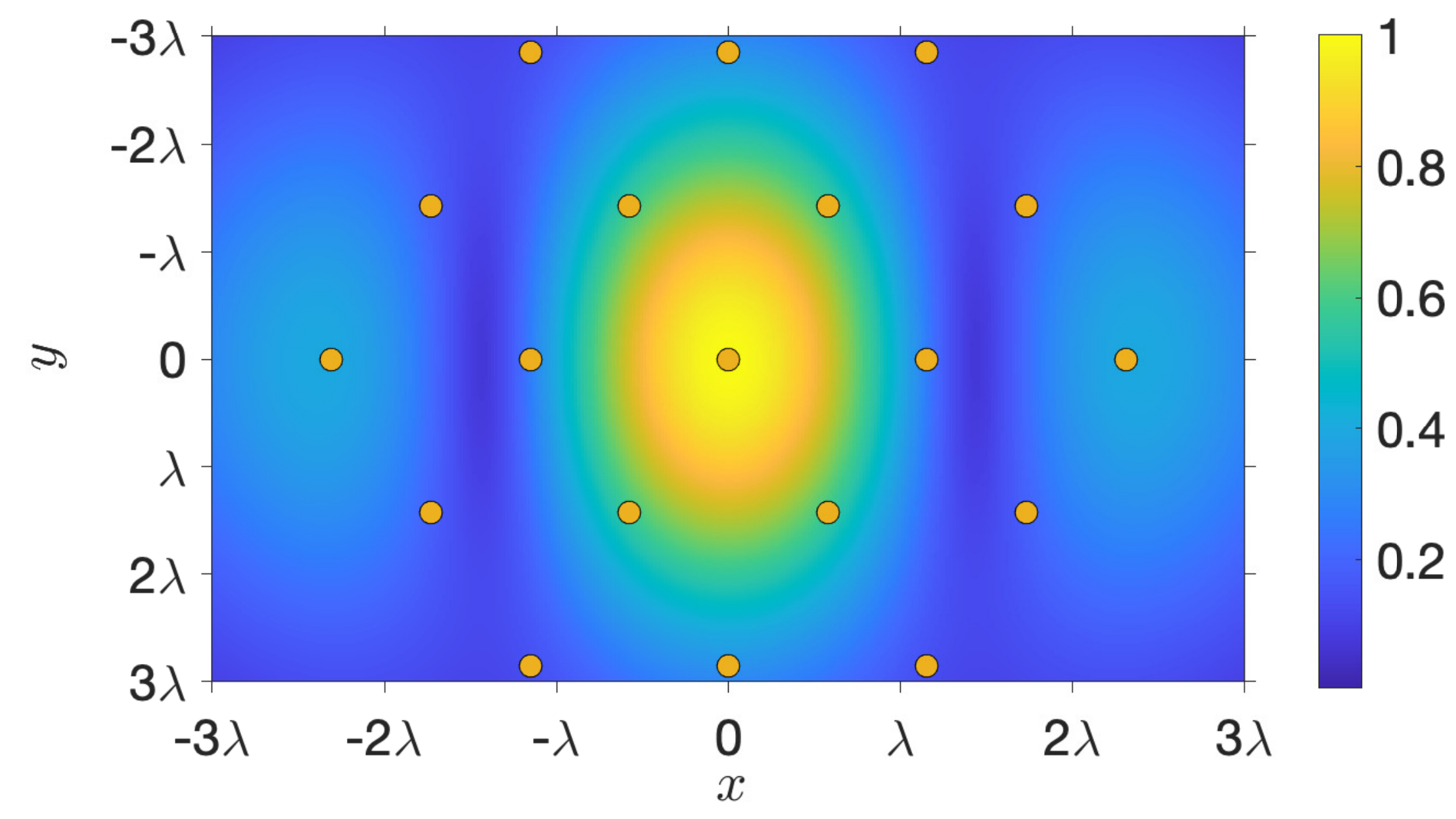}
\end{overpic}  \vspace{-0.2cm}
                \caption{Non-Isotropic scattering} 
                \label{fig:noniso_sampling}  
        \end{subfigure}   
        \caption{{\small Nyquist sampling plotted against the ACF $c(\vect{r})$.}}
        \label{fig:sampling_ACF}\vspace{-0.6cm}
\end{figure}

To quantify the impact of Nyquist sampling and the number of DoF, we consider the eigenvalues of the spatial autocorrelation matrix ${\vect{C} \in \Complex^{N \times N}}$, obtained under the scattering conditions depicted in Fig.~\ref{fig:Spectrum} and a squared region $\mathcal{A}$ of side length ${L = 10 \lambda}$. 
% (i.e., ${m(\mathcal{A}) = 100 \, \lambda^2}$~m$^2$). 
In the isotropic scenario, $\vect{C}$ is obtained by hexagonal sampling of~\eqref{clarke_acf} according to~\eqref{hexagonaLampling_2}. For any non-isotropic scenario, we sample~\eqref{2Dautocorr_spherical} by using the elongated hexagonal sampling in~\eqref{sampling_matrix_ellipse} with corresponding parameters. Comparisons are made with the (redundant) half-wavelength sampling in~\eqref{rectangular_sampling}. The number of DoF is computed on the basis of the results in Section~\ref{sec:DoF} and is numerically compared to the number of samples sufficient to capture $99.7\%$ of the total field's power. This will be slightly higher because of the finite value of $L/\lambda$. As $L/\lambda$ increases, the eigenvalue curves assume a step-like function behavior with transition corresponding to the number of DoF approximately.

%Notice also that in the system design, the threshold on how many eigenvalues one should consider depends on the signal-to-noise ratio and targeted reconstruction accuracy.
%The isotropic case is studied in Fig.~\ref{fig:eigenvalues_Iso}, where the optimal (Nyquist rate) hexagonal sampling, with sampling matrix $\vect{Q}_{\mathcal{D}}^\star$ in~\eqref{hexagonaLampling_2}, is compared to the sub-optimal (oversampling) $\lambda/2$-rectangular sampling, with $\vect{Q}_{\mathcal{R}}^\star$ in~\eqref{rectangular_sampling}.
%{\color{red}The number of samples follows by using~\eqref{density_sampling_hex} and~\eqref{density_sampling_rect} into~\eqref{N_antennas} as $N = 346$ and $N=400$, respectively.}

In Fig.~\ref{fig:eigenvalues_Iso}, the isotropic case is studied.
From~\eqref{DoF_D}, we obtain $\dof_{\mathcal{D}} = 315$ while $354$ eigenvalues are obtained with the $99.7\%$ criterion. The number of DoF at $99.7\%$ with $\lambda/2$ sampling increases up to $382$. This is not due to oversampling but to the larger area covered with the rectangular sampling, compared to hexagonal sampling, since it is  impossible to fit the hexagonal and rectangular grids within the same area.\footnote{This effect is even more pronounced in a non-isotropic scenario, due the enhanced sparsity of the elongated hexagonal sampling.} 
%\textcolor{red}{The number of observable samples of the $\lambda/2$ sampling is higher than the one of hexagonal sampling while providing the same (?) significant number of eigenvalues, and hence, it is less efficient.} 
The non-isotropic cases are considered in Fig.~\ref{fig:eigenvalues_NIsoCPL} and Fig.~\ref{fig:eigenvalues_NIsoCluster}. 
From~\eqref{DoF_K}, the number of DoF reduces to ${\dof_{\mathcal{K}} = 71}$ and ${\dof_{\mathcal{K}} = 35}$, which is due to the scattering selectivity that cuts off some of the resolvable propagation directions. Compared to Fig.~\ref{fig:eigenvalues_Iso}, the eigenvalue curves decay rapidly due to the spatial correlation; see also Fig.~\ref{fig:VMF}. Differences between the two curves are again due to numerical comparison issues.
%\textcolor{red}{The Nyquist sampling exhibits a slightly higher correlation than rectangular sampling, which is due to the numerical approximation of the field's spectral support with the $-20$~dB criterion.}[NON USIAMO LO STESSO CRITERIO A -20 DB PER ENTRAMBI I SAMPLINGS? Yes, infatti non sto capendo tanto quella frase in rosso.]

From  Fig.~\ref{fig:eigenvalues_Iso}, it follows that spatial correlation exists even under isotropic propagation (see also \cite[Sec.~IV]{PizzoTWC21}).
This is because $c(\vect{r})$ is not sampled at its zeros, as illustrated in Fig.~\ref{fig:sinc_sampling}. 
The hexagonal sampling provides the best trade off between collecting uncorrelated samples and minimizing the search area. Hence, it is the one that exploits the available number of DoF per unit of area. The non-isotropic case is shown in Fig.~\ref{fig:noniso_sampling} for completeness. As seen, to the higher correlation in Fig.~\ref{fig:eigenvalues_NIsoCluster}, compared to the isotropic case, corresponds an increased difficulty in achieving the above trade off due to a larger coherence area of the ACF.

A numerical example is provided next to verify the reconstruction formula in~\eqref{reconstruction_formula} with interpolating function in~\eqref{interp_function}, under the scattering conditions of Fig.~\ref{fig:Sphe_noniso} (Fig.~\ref{fig:Wave_noniso}). Samples are obtained within a squared region $\mathcal{A}$ of finite size. 
In Fig.~\ref{fig:Recon_comparison_20lambda}, we gather measurements from a square of dimension ${L/\lambda = 40}$ to recover the real part of $e(\vect{r})$ within a segment of length $L/\lambda = 6$ along the $x$-axis. Nyquist sampling~\eqref{sampling_matrix_ellipse} with ${\sqrt{d_1}=\sqrt{d_2} = 0.47}$, and ${k_\phi=0}$ is compared to half-wavelength rectangular sampling. Sampling at Nyquist density achieves similar performance while saving $60\%$ of samples/m$^2$. Both have a non-zero error due to the finite number of samples exploited for reconstruction. 
Nevertheless, the error is small in both cases as only a tiny contribution to reconstruction is added by the tails of the interpolating functions.

The normalized MSE $\sigma^2_N/\sigma^2$, defined as the ratio between \eqref{error_mse} and \eqref{power}, is plotted in Fig.~\ref{fig:Normalized_MSE} (in dB) as a function of $L/\lambda \in [2,20]$.
Nyquist sampling is compared to rectangular sampling with spacing $\sqrt{d_1} \lambda/2$; that corresponds to a minimal squared embedding of $\mathcal{K}$ in Fig.~\ref{fig:Wave_noniso}. Both are plotted against hexagonal and half-wavelength samplings.
%The elongated hexagonal sampling is compared to rectangular sampling. For each case, we consider the minimal (corresponding to Nyquist sampling in the hexagonal case) and maximal embeddings (corresponding to isotropic propagation).
Oversampling provides a better reconstruction in the finite $L/\lambda$ regime. 
%{\color{red}In practice, measurements are corrupted by some noise. Hence, the number of DoF provides to a better estimate of the sampling density, above which samples are below the noise level and add little information to reconstruction.}
%{\color{red}However, this benefit reduces with noisy measurements {\bf(?? Non abbiamo mai parlato di rumore, ed ora in due parole si tira fuori)}, due to some samples being below the noise level.} 
All curves monotonically decrease with $L/\lambda$, due to the larger dataset available, and merge at infinity where the MSE is zero.
%{\color{red} Togliamo la Fig. 10 e si chiude qui; la mettiamo in fase di revisione!}We now evaluate the impact of the Nyquist sampling rate on the reconstruction capabilities of~\eqref{reconstruction_formula}.
%To this end, we define the average MSE  over $\vect{r}\in \mathcal{A}$ as~\cite[Sec.~III]{Agrell2004}
%\begin{equation}
%\sigma_{\rm e}^2(\Lambda(\mathcal{A})) = \frac{1}{m(\mathcal{A})} \int_{\mathcal{A}} \Ex\{|\hat{e}(\vect{r}) - {e}(\vect{r})|^2 \}  \, d\vect{r}.
%\end{equation}
%In Fig.~\ref{fig:Normalized_MSE}, the normalized MSE $\sigma_{\rm e}^2(\Lambda(\mathcal{A}))/\sigma^2$ (normalized with respect to the field's power) is plotted as a function of $L/\lambda \in[5,150]$. To avoid reconstruction errors that accumulate at the border of $\mathcal{A}$ due to truncation of the lattice $\Lambda(\mathcal{A})$, we evaluate the error on the $90\%$ of the total area of $\mathcal{A}$.
%As observed, the normalized MSE decays rapidly as the number of samples used for reconstruction increases and tends to zero asymptotically as $L/\lambda \to \infty$.

\begin{figure} [t!]\vspace{-0.5cm}
        \centering
	\begin{overpic}[width=.999\columnwidth,tics=10]{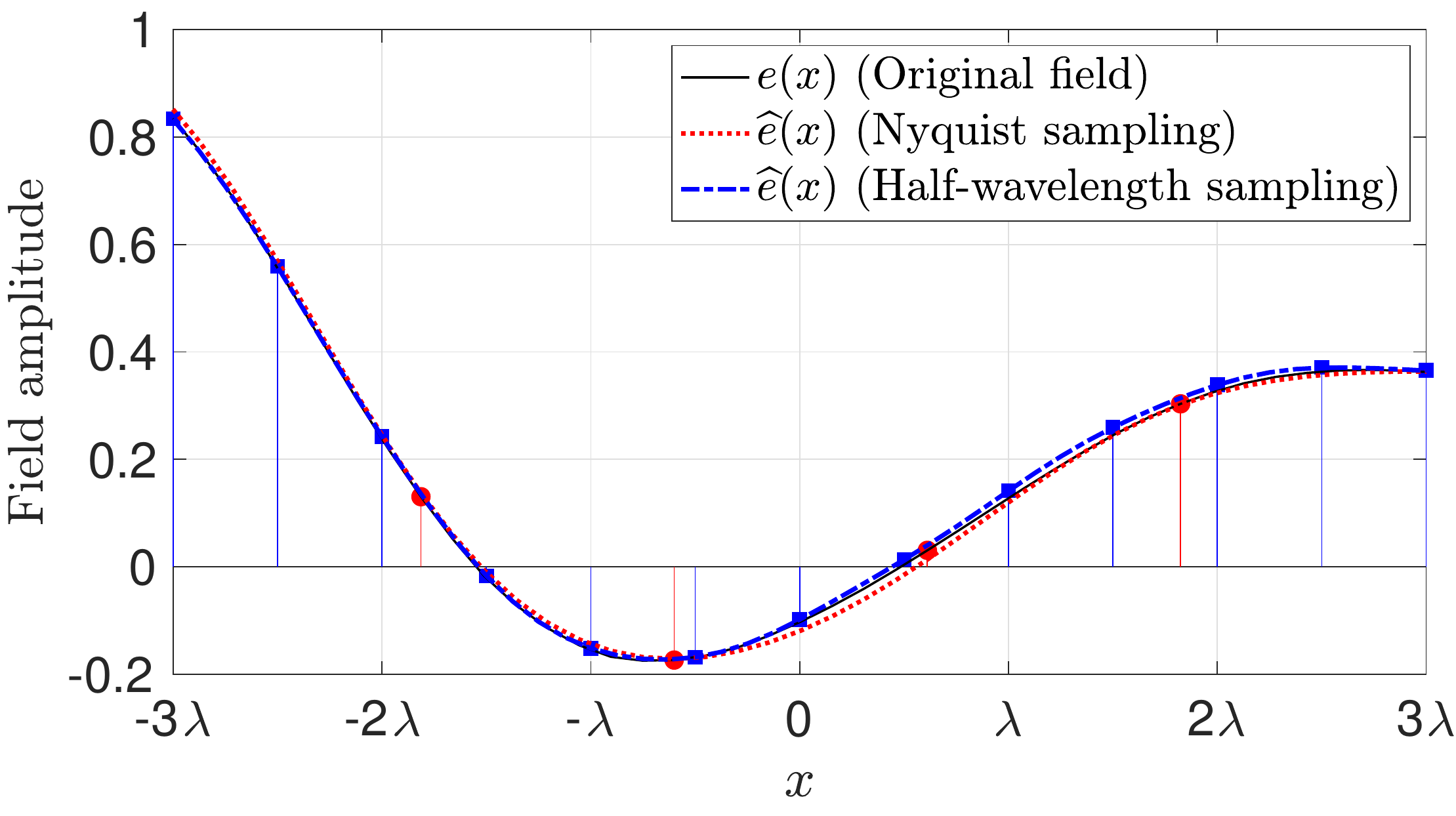}
\end{overpic} \vspace{-0cm}
                \caption{{\small Reconstruction of the real part of $e(\vect{r})$ over a segment of length $L/\lambda = 6$, under the scattering conditions of Fig.~\ref{fig:Sphe_noniso}.}} \vspace{0.5cm}
                \label{fig:Recon_comparison_20lambda} 
\end{figure}

\begin{figure} [t!]\vspace{-0.7cm}
        \centering
	\begin{overpic}[width=.999\columnwidth,tics=10]{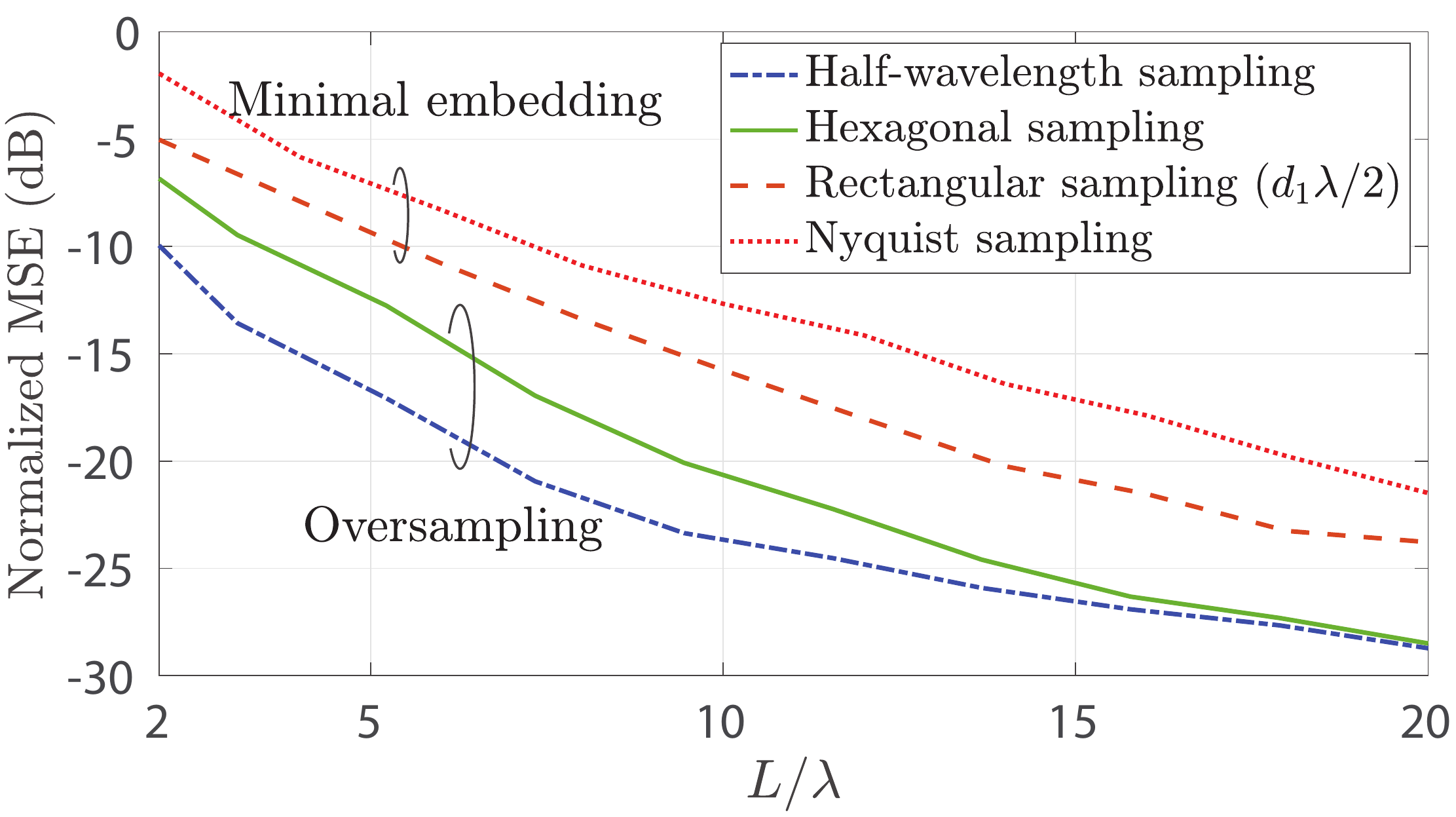}
\end{overpic} 
                \caption{{\small Normalized MSE as a function of $L/\lambda \in [2,20]$.}} \vspace{-0.5cm}
                \label{fig:Normalized_MSE} 
\end{figure}

%\begin{figure} [t!]
%        \centering
%        \begin{subfigure}[t]{\columnwidth} \centering 
%	\begin{overpic}[width=0.9\columnwidth,tics=10]{}
%\end{overpic} \vspace{-0.0cm}
%                \caption{$L/\lambda = 20$.} 
%                \label{fig:Recon_comparison_20lambda} 
%        \end{subfigure}     
%                \begin{subfigure}[t]{\columnwidth} \centering  
%        	\begin{overpic}[width=0.9\columnwidth,tics=10]{Recon_comparison_150lambda}
%\end{overpic}  \vspace{-0.0cm}
%                \caption{$L/\lambda = 150$.} 
%                \label{fig:Recon_comparison_150lambda}
%        \end{subfigure}\vspace{0.0cm}
%        \caption{Reconstruction of the real part of $e(\vect{r})$ from its Nyquist samples under the scattering conditions of Fig.~\ref{fig:Sphe_noniso}.}
%        \label{fig:reconstruction}\vspace{-0.0cm}
%\end{figure}

\section{Conclusions} \label{sec:conclusions}

We applied signal processing tools such as the multidimensional sampling theorem and Fourier theory to study the Nyquist sampling and number of DoF of an electromagnetic field, under deterministic and stationary random scattering conditions. Starting from first principles of wave propagation theory, we showed that this is naturally modeled as a 2D bandlimited signal--without any prior low-pass filtering operation--whose bandwidth is determined by the richness of the underlying scattering mechanism. This is maximum under isotropic propagation, where the field is circularly bandlimited of radius inversely proportional to the wavelength. Hence, conventional signal processing techniques for bandlimited signals, e.g.,\cite{OppenheimBook}, can be applied to electromagnetic fields due to a fundamental space-time duality and linear system-theoretic interpretation of wave propagation. 

For electrically large (i.e., relative to the wavelength) observation areas, the DoF per unit of area are the Nyquist samples per squared meters needed for reconstruction. With prior knowledge of the scattering conditions, we provided an efficient representation by solving an ellipse packing problem, which yields an elongated hexagonal sampling structure stretching inversely proportional to scattering selectivity. Under isotropic propagation, it leads to a $13\%$ less samples/m$^2$ compared to the classical half-wavelength sampling. This gap increases as the scattering becomes more selective angularly, thereby offering a substantial complexity reduction. 

A possible extension of this paper is to consider nonuniform sampling \cite{Unser}. In principle, this leads to another ellipse packing problem of equal sizes, but with arbitrary orientation and spacing, due to an irregular sampling structure.

\appendices

%\section{{\color{blue}Proof of Lemma~\ref{sec:interpolation_function_iso}}} \label{app:interp_iso}
\section{Proof of Lemma~\ref{sec:interpolation_function_iso}} \label{app:interp_iso}

From~\eqref{interp_function}, for a wavenumber support $\mathcal{K} = \mathcal{D}$ we obtain
\begin{align}
f_{\mathcal{D}}(\vect{r}) & = \frac{|\det(\vect{Q}^\star_{\mathcal{D}})|}{(2\pi)^2} \int_{\mathcal{D}} e^{\imagunit \vect{k}^{\Ttran} \vect{r}} \, d\vect{k} \\ \label{FT_iso}
& = \frac{1}{{2\sqrt{3}}} \int_{\|\vect{u}\|\le1} e^{\imagunit \kappa \vect{u}^{\Ttran} \vect{r}} \, d\vect{u}
\end{align}
where we applied a change of integration variables $\vect{u} = \kappa^{-1} \vect{I}_2 \vect{k} = (u_x,u_y)$ and substituted $|\det(\vect{Q}^\star_{\mathcal{D}})| = {\lambda^2}/({2\sqrt{3}})$ in~\eqref{density_sampling_hex} while using $\kappa=2\pi/\lambda$.
Being~\eqref{FT_iso} an inverse spatial Fourier transform of a rotationally symmetric spectrum, its result will be invariant under rotation, i.e., $f_{\mathcal{D}}(\vect{r}) = f_{\mathcal{D}}(r)$ with $\|\vect{r}\| = r$. Hence, we evaluate~\eqref{FT_iso} at the point $\vect{r}=(r,0)$:
\begin{align}
f_{\mathcal{D}}(\vect{r}) & = \frac{1}{\sqrt{3}}  \int_{-1}^1 e^{\imagunit \kappa r u_x} \sqrt{1-u_x^2} \, du_x \\
& \mathop{=}^{(a)}  \frac{1}{\sqrt{3}} \int_{0}^\pi e^{\imagunit \kappa r \cos(\theta)} \sin^2(\theta) \, d\theta
\end{align}
where in $(a)$ we applied a change of integration variables $u_x = \cos(\theta)$.
By using the Euler's formula and exploiting symmetry of the integrand we obtain
\begin{align} \label{FT_iso_2}
f_{\mathcal{D}}(\vect{r}) =  \frac{\pi}{\sqrt{3}} \frac{J_1(\kappa r)}{\kappa r} = \frac{\pi}{\sqrt{3}} \jinc(\kappa r)
\end{align}
where ${J_1(\alpha) \!=\! \frac{\alpha}{\pi} \!\!\int_{0}^\pi \!\! \cos(\alpha \cos(\theta)) \sin^2(\theta) d\theta}$ \cite[Eq.~(9.1.20)]{AbraSteg72}. %and $\jinc(x) = J_1(x)/x$ is the jinc function.

\section{Proof of Lemma~\ref{sec:interpolation_function_noniso}} \label{app:interp_noniso}

From~\eqref{interp_function}, for a wavenumber support $\mathcal{K} = \mathcal{E}$ we obtain
\begin{align}
f_{\mathcal{E}}(\vect{r}) & = \frac{|\det(\vect{Q}^\star_{\mathcal{E}})|}{(2\pi)^2} \int_{\mathcal{E}} e^{\imagunit \vect{k}^{\Ttran} \vect{r}} \, d\vect{k} \\ \label{FT_noniso}
& = |\det(\vect{G}^{1/2})|  \frac{|\det(\vect{Q}^\star_{\mathcal{E}})|}{(2\pi)^2} \int_{\mathcal{D}} e^{\imagunit (\vect{G}^{1/2} \vect{k}^\prime)^{\Ttran} \vect{r}} \, d\vect{k}^\prime
\end{align}
where we applied the change of integration variables ${\vect{k}^\prime = \varphi^{-1}(\vect{k}) =  \vect{G}^{-1/2} \vect{k}}$ with $\vect{k}^\prime \in \mathcal{D}$. Plugging~\eqref{sampling_matrix_ellipse},
\begin{align}  \label{FT_noniso_2}
f_{\mathcal{E}}(\vect{r})
& = \frac{|\det(\vect{Q}^\star_{\mathcal{D}})|}{(2\pi)^2} \int_{\mathcal{D}} e^{\imagunit (\vect{G}^{1/2} \vect{k}^\prime)^{\Ttran} \vect{r}} \, d\vect{k}^\prime.
\end{align}
Re-arranging the terms at the exponential of~\eqref{FT_noniso_2} to match~\eqref{FT_iso} yields ${f_{\mathcal{E}}(\vect{r}) =  f_{\mathcal{D}}((\vect{G}^{1/2})^{\Ttran} \vect{r})}$
%\begin{align}  \label{FT_noniso_3}
%f_{\mathcal{E}}(\vect{r})
%& = \kappa^2 \frac{|\det(\vect{Q}^\star_{\mathcal{D}})|}{(2\pi)^2} \int_{\|\vect{u}\|\le1} e^{\imagunit \kappa \vect{u}^{\Ttran} (\vect{G}^{1/2})^{\Ttran} \vect{r}} \, d\vect{u}\\ & 
%=  f_{\mathcal{D}}((\vect{G}^{1/2})^{\Ttran} \vect{r})
%\end{align}
Also, using~\eqref{transf_matrix} and exploiting the rotation invariance of $f_{\mathcal{D}}(\vect{r})$ yields $f_{\mathcal{E}}(\vect{r}) =  f_{\mathcal{D}}(\vect{D}^{1/2} \vect{r})$.
%The final expression in~\eqref{interpolation_function_noniso} follows directly from~\eqref{interpolation_function_iso}.

\bibliographystyle{IEEEtran}
\bibliography{IEEEabrv,refs}

% that's all folks
\end{document}